\newtheorem{theorem}{Theorem}
\newtheorem*{theorem*}{Theorem}
\newtheorem{corollary}[theorem]{Corollary}
\newtheorem{lemma}[theorem]{Lemma}
\newtheorem{question}{Question}
\newtheorem{remark}{Remark}
\newtheorem{definition}[theorem]{Definition}
\newtheorem*{claim*}{Claim}
\newcommand{\eps}{\epsilon}
\newcommand{\lam}{\lambda}
\newcommand{\cI}{\mathcal{I}}
\begin{document}

\title[Approximately counting independent sets in bipartite graphs via containers]{Approximately counting independent sets in bipartite graphs via graph containers}

\author{Matthew Jenssen}
\address{School of Mathematics\\ University of Birmingham}
\author{Will Perkins}
\address{Department of Mathematics, Statistics, and Computer Science\\ University of Illinois at Chicago}
\author{Aditya Potukuchi}
	\email{m.jenssen@bham.ac.uk \\ math@willperkins.org \\adityap@uic.edu }

\date{\today}

\date{\today}

\maketitle

\begin{abstract}
By implementing algorithmic versions of Sapozhenko's graph container methods, we give new algorithms for approximating the  number of independent sets in bipartite graphs.  Our first algorithm applies to $d$-regular, bipartite graphs satisfying a weak expansion condition: when $d$ is constant, and the graph is a bipartite $\Omega( \log^2 d/d)$-expander, we obtain an FPTAS for the number of independent sets.  Previously such a result for $d>5$ was known only for graphs satisfying the much stronger expansion conditions of random bipartite graphs. The  algorithm also applies to weighted independent sets: for a $d$-regular, bipartite $\alpha$-expander, with $\alpha>0$ fixed, we give an FPTAS for the hard-core model partition function at fugacity $\lambda=\Omega(\log d / d^{1/4})$. Finally we present an algorithm that applies to all $d$-regular, bipartite graphs, runs in time $\exp\left(  O\left(  n \cdot \frac{ \log^3 d }{d } \right) \right)$, and outputs a $(1 + o(1))$-approximation to the number of independent sets.
\end{abstract}

\section{Introduction}

Let $\cI(G)$ denote the set of independent sets of a graph $G$ and $i(G) = |\cI(G)|$ denote the number of independent sets of $G$.    Computing $i(G)$ is  a \#P-hard problem, even when restricted to bounded degree, bipartite graphs~\cite{provan1983complexity}.    Even approximating $i(G)$ (to a constant or even subexponential factor) remains NP-hard, even when restricted to $d$-regular graphs with $d \ge 6$~\cite{sly2010computational}.    

Intuitively, one might expect the problem of approximating $i(G)$ to be easier on the class of bipartite graphs; for one, there is a polynomial-time algorithm to find a maximum-size independent set in  a bipartite graph while the corresponding problem is NP-hard for general graphs.  Dyer, Goldberg, Greenhill, and Jerrum~\cite{dyer2004relative} defined the counting problem \#BIS (bipartite independent set) and showed that several natural combinatorial counting problems are as hard to approximate as \#BIS.  These problems include counting  stable matchings, approximating the ferromagnetic Potts model partition function ($q \ge 3$)~\cite{goldberg2012approximating,galanis2016ferromagnetic},  counting the number of $q$-colorings in a bipartite graph ($q \ge 3$), and approximating the ferromagnetic Ising model partition with non-uniform external fields~\cite{goldberg2007complexity}.  

The search for approximation algorithms for $i(G)$ that exploit bipartite structure generally falls into two categories.  The first approach finds classes of graphs on which polynomial-time approximation algorithms exist.  Liu and Lu~\cite{liu2015fptas} gave the first such algorithm, providing an FPTAS for the number of independent sets in bipartite graphs in which degrees on one side are bounded by $5$, while degrees on the other side are unrestricted.  Another line of work in this direction includes~\cite{jenssen2020algorithms} in which approximation algorithms are given for the hard-core partition function $Z_G(\lam)$ (counting weighted independent sets) in bounded-degree, bipartite expander graphs, based on two tools from statistical physics: polymer models and the cluster expansion (following~\cite{helmuth2020algorithmic}).  This work was followed by several improvements, extensions, and generalizations including~\cite{liao2019counting,cannon2020counting,chen2021fast,galanis2020fast,friedrich2020polymer,galanis2021fast,chen2021sampling}.  All of these algorithms are `low-temperature' algorithms: they exploit the fact that on a bipartite graph with sufficient expansion, most (weighted) independent sets have few vertices from one side of the bipartition; that is, they are close to one of the two ground states consisting of all subsets of one side.  In contrast, the algorithms of Weitz~\cite{weitz2006counting} (an FPTAS for $i(G)$ in graphs of maximum degree at most $5$) and Liu and Lu~\cite{liu2015fptas} are `high-temperature' algorithms, exploiting correlation decay properties of the uniform distribution on $\cI(G)$ (or more generally the hard-core model of weighted independent sets) for small vertex degrees.

The second category of bipartite approximation algorithms are those that apply to all bipartite graphs, with running times  better than what is known for general graphs.  The main result in this direction is the algorithm of Goldberg, Lapinskas, and Richerby~\cite{GLR20} which provides an $\eps$-relative approximation to $i(G)$ for  bipartite graphs, running in time $O(2^{.2372 n} (1/\eps)^{O(1)})$, beating the best known running time for general graphs of $O(2^{.268 n} (1/\eps)^{O(1)})$ from the same paper (in comparison, the best known running time for exact counting algorithms for general graphs is $O(2^{.3022n})$~\cite{gaspers2017faster}).

\subsection{Main results}

In this paper we use tools from combinatorics, namely the graph container method of Sapozhenko to give new approximate counting algorithms for independent sets in bipartite graphs.
Our first result is an FTPAS for $i(G)$ for weakly expanding, $d$-regular bipartite graphs, for constant $d$. 

For $z>0$, we say that $\hat z$ is an $\eps$-relative approximation to $z$ if $(1- \eps) \le z/\hat z \le (1+\eps)$. A fully polynomial-time approximation scheme (FPTAS) is an algorithm that for every $\eps>0$ outputs an $\eps$-relative approximation to $i(G)$ and runs in time polynomial in $|V(G)|$ and $1/\eps$.  We let $\mu_G$ denote the uniform distribution on $\cI(G)$.  A polynomial-time sampling scheme for $\mu_G$ runs in time polynomial in $|V(G)|$ and $1/\epsilon$ and outputs an independent set with distribution $\hat \mu$ within $\epsilon$ total variation distance of $\mu_G$. 

 For $\alpha>0$, we say that a $d$-regular bipartite graph $G$ with bipartition $X,Y$ is a \emph{bipartite $\alpha$-expander} if for every $A \subseteq X$ and $A \subseteq Y$ of size at most $|X|/2$, we have
 \begin{align}
\label{eqn:expdefn}
|N(A)|\geq (1+\alpha) |A|.
\end{align}

\begin{theorem}
\label{thmExpander}
There exists a constant $C_1>0$ so that for $d$ fixed and sufficiently large, and $\alpha=\frac{C_1 \log^2 d}{d}$, there is an FPTAS for $i(G)$ and a polynomial-time sampling scheme for $\mu_G$ for the class of $d$-regular, bipartite $\alpha$-expander graphs. 
\end{theorem}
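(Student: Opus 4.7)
The plan is to follow a polymer-model / cluster-expansion strategy in the spirit of \cite{jenssen2020algorithms}, but with the polymer enumeration step driven by an algorithmic implementation of Sapozhenko's graph container method. First, split the independent sets by ``dominant side'': let $\cI_X=\{I\in\cI(G):|I\cap Y|\le |X|/2\}$ and define $\cI_Y$ symmetrically. The $\alpha$-expansion hypothesis forces $|\cI_X\cap \cI_Y|$ to be exponentially smaller than $\max(|\cI_X|,|\cI_Y|)$ and hence negligible, and by symmetry it suffices to give an FPTAS for $|\cI_X|$.

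Write $|\cI_X|=2^{|X|}\cdot \Xi_X$, where $\Xi_X$ is the partition function of an abstract polymer model whose polymers are $2$-linked subsets $\gamma\subseteq Y$ (connected in the square of $G$) of bounded size, weighted by $w_\gamma = 2^{-|N(\gamma)|}$; here two polymers are compatible iff their union is not $2$-linked. The weight captures the number of ways to freely choose the $X$-part of an independent set outside $N(\gamma)$, so by inclusion-exclusion (over which $2$-linked components of defects appear on the $Y$ side) one has $|\cI_X|=2^{|X|}\Xi_X$. Expansion gives $|N(\gamma)|\ge (1+\alpha)|\gamma|$, so $w_\gamma\le 2^{-\alpha|\gamma|}\cdot 2^{-|\gamma|}$; combined with Sapozhenko's container bounds on the number of $2$-linked sets of a given size with a given closure, this yields the Kotecky--Preiss convergence criterion for the cluster expansion $\log\Xi_X=\sum_\Gamma \phi(\Gamma)\,w_\Gamma$, with a tail bound of the form $n\cdot 2^{-\Omega(\alpha L)}$ on clusters of total polymer size $\ge L$. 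The constant $C_1$ in $\alpha=C_1\log^2 d/d$ is chosen exactly so that these container bounds, whose entropy cost is $O(\log^2 d/d)$ per polymer vertex, are dominated by the weight decay $2^{-\alpha}$.

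Now truncate: computing all clusters of total weight at most $L=O(\alpha^{-1}\log(n/\eps))$ gives an $\eps$-relative approximation to $\log\Xi_X$, hence a $(1\pm\eps)$-relative approximation to $|\cI_X|$. The remaining algorithmic task is, in time polynomial in $n$ and $1/\eps$, to (i) enumerate every polymer $\gamma$ with $|\gamma|\le L$, (ii) enumerate the clusters built from them, and (iii) evaluate their Ursell-function weights. Steps (ii) and (iii) are standard once (i) is in hand, because the incompatibility graph on polymers has polynomial degree, so clusters of bounded size can be grown by BFS. Approximate sampling from $\mu_G$ then follows from the approximate-counting-to-sampling framework for polymer models of Helmuth--Perkins--Regts, or alternatively by simulating the polymer model via a glauber-type chain whose mixing is guaranteed in the convergent regime.

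The main obstacle is step~(i): Sapozhenko's container lemma is an existential counting statement, and turning it into an enumeration algorithm requires a careful constructive implementation of the ``approximation/closure'' tree that produces the containers. The algorithm must, for each plausible closure $F\subseteq X$, produce a small family of container sets $A\supseteq F$ of size not much larger than $|F|$ such that every $2$-linked set with closure $F$ lies inside some $A$, and then sieve through subsets of each container. Because $\alpha$ is only $\Theta(\log^2 d/d)$, the container size and the branching of the construction tree must be controlled very tightly; in particular the number of containers generated for closures of size $a$ must stay at most $\binom{|X|}{a}\cdot d^{O(a\log^2 d/d)}$, matching the weight decay. Executing this algorithmic container construction, while simultaneously verifying that the resulting enumeration is both complete and polynomial in $n$, is the heart of the proof.
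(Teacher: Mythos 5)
Your high-level plan matches the paper's: split $\cI(G)$ into two dominant-side families, encode each by an abstract polymer model with weight $2^{-|N(\gamma)|}$, verify a Koteck\'y--Preiss condition using Sapozhenko-style container counting, then approximate $\Xi$ via the truncated cluster expansion and sample via polymer self-reducibility. But you have misidentified where the container method actually enters, and this confusion produces a genuine gap in your argument.

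In the paper, graph containers are used only as a \emph{counting} tool: Lemmas~\ref{lem:varphi},~\ref{lem:container}, and~\ref{lem:exp} give an upper bound $|\mathcal{G}(v,a,w)|\le 2^{w-c_1(w-a)}$ on the number of $2$-linked expanding sets with prescribed parameters, and this bound is what makes the sum $\sum_{\gamma'\not\sim\gamma}w_{\gamma'}e^{f+g}$ converge in Lemma~\ref{lem:expKP}. Containers are an analytic ingredient for the convergence proof, not an algorithmic subroutine. Your ``main obstacle'' --- that one must turn Sapozhenko's containers into a constructive enumeration procedure ``for each plausible closure $F$'' --- simply does not arise in the proof of Theorem~\ref{thmExpander}. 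Once the Koteck\'y--Preiss condition is verified, the algorithm enumerates all $2$-linked sets of size at most $\ell=O\bigl(\tfrac{d}{\log^2 d}\log(n/\eps)\bigr)$ by a direct tree search in $G^2$; since $G^2$ has maximum degree $O(d^2)$, this costs $n\cdot d^{O(\ell)}=(n/\eps)^{O(d)}$, which is polynomial for fixed $d$. No container-driven sieving is needed, and building one would be reinventing a much harder wheel. (An algorithmic version of the container enumeration does appear in the paper --- Lemma~\ref{lem:nonexp} and Lemma~\ref{lem:varphi}'s algorithmic clause --- but it is used only to handle non-expanding sets in Theorem~\ref{thmMain}, where $d$ may grow with $n$ and the brute-force search would be too slow.)

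Two smaller corrections. First, you define polymers as $2$-linked subsets of ``bounded size,'' but for the exact identity $|\cI_X|=2^{|X|}\Xi_X$ one needs polymers of \emph{all} sizes whose closure is at most $n/2$ (equivalently, all expanding $2$-linked sets); size truncation enters only at the cluster-expansion stage, and Lemma~\ref{lem:xihat} controls that error. Second, your claim that ``the $\alpha$-expansion hypothesis forces $|\cI_X\cap\cI_Y|$ to be exponentially small'' needs more than expansion: the paper's Lemma~\ref{lem:polymerapprox} deduces it from the tail bound Lemma~\ref{lem:tailbound}, which is itself a consequence of the container-based Koteck\'y--Preiss verification via a tilted partition function. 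So the same container estimate is doing double duty --- for KP convergence and for bounding the overlap --- and you should be explicit that expansion alone does not give the overlap bound.
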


Previous results on expander graphs include an FPTAS for $i(G)$ in the case that $G$ is a typical $d$-regular, random bipartite graph~\cite{jenssen2020algorithms,liao2019counting,chen2021sampling}.  These algorithms exploit the very strong expansion conditions satisfied by a random graph: sets of size $\tilde O(n/d)$ on each side of the bipartition expand by a factor $\tilde \Omega(d)$.  

A natural and powerful generalization of the notion of counting independent sets is to consider weighted independent sets in the form of the hard-core model partition function (also known as the independence polynomial)
$$Z_G(\lam) = \sum_{I \in \cI(G)} \lam^{|I|}\,.$$
The corresponding probability measure on independent sets, known as the hard-core model, is given by 
\[ \mu_{G,\lam} (I ) = \frac{\lam^{|I| }}{Z_G(\lam)} \,.\]
Taking $\lam=1$ gives $i(G)$ and $\mu_G$. In previous works, an FPTAS for $Z_G(\lam)$  for bounded-degree, bipartite $\alpha$-expanders was obtained for $\lam$ much larger than $1$, specifically $\lam \ge K \Delta^{c/\alpha}$ for constants $c, K >1$~\cite{jenssen2020algorithms,chen2021fast,friedrich2020polymer}. In particular, under the expansion conditions of Theorem~\ref{thmExpander}, these algorithms require $\lam \ge \Delta^{\tilde\Omega(\Delta)}$. 

Our next result adapts the algorithm of Theorem~\ref{thmExpander} to work for bipartite $\alpha$-expanders for $\lam$ much smaller than $1$.  
\begin{theorem}
\label{thmLamExpand}
For every $\alpha>0$ there exists a constant $C_2>0$ so that for $d \ge 3$ and $\lam > \frac{C_2 \log d}{d^{1/4}}$ there is an FPTAS for $Z_{G}(\lam)$ and a polynomial-time sampling scheme for $\mu_{G,\lam}$ for the class of $d$-regular, bipartite $\alpha$-expanders.
\end{theorem}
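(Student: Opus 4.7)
The plan is to extend the polymer-model and cluster-expansion machinery developed for Theorem~\ref{thmExpander} to the small-fugacity regime by working with ``defect'' polymers on each side of the bipartition. Starting from the exact identity
\[ Z_G(\lam) = (1+\lam)^{|X|} \sum_{I_Y \subseteq Y} \lam^{|I_Y|}(1+\lam)^{-|N(I_Y)|}, \]
we restrict the inner sum to $I_Y$ with $|N(I_Y)|\le|X|/2$, and combine with the symmetric identity on $Y$ via inclusion--exclusion to obtain $Z_G(\lam) \approx (1+\lam)^{|X|}\Xi_X + (1+\lam)^{|Y|}\Xi_Y$, where $\Xi_X$ is a polymer partition function whose polymers are 2-linked subsets $\gamma\subseteq Y$ with $|N(\gamma)|\le|X|/2$, carrying weight $w(\gamma) = \lam^{|\gamma|}(1+\lam)^{-|N(\gamma)|}$, and two polymers are compatible if their $X$-neighborhoods are vertex-disjoint and not joined through a common $X$-vertex. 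The independent sets left out of the two polymer sums have $|N(I_Y)|,|N(I_X)|>n/4$, forcing $|I_Y|,|I_X|\ge n/(4d)$; the $\lam^{n/(4d)}$ factor in their weights, combined with the expansion-based count of such configurations, keeps their total mass exponentially smaller than the main term provided $\lam \gtrsim \log d / d^{1/4}$.

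The heart of the proof is verifying the Kotecký--Preiss condition $\sum_{\gamma\not\sim\gamma_0}|w(\gamma)|e^{|\gamma|+f(\gamma)}\le f(\gamma_0)$ for $\Xi_X$ (and $\Xi_Y$ by symmetry). I would split polymers into two regimes and handle each with the container tools established earlier in the paper for Theorem~\ref{thmExpander}. For ``small'' polymers, of size $a$ much less than $d$, essentially all $a d$ edges leaving $\gamma$ land on distinct $X$-vertices, so $|N(\gamma)|\ge (1-o(1))da$ and $w(\gamma) \le (\lam e^{-\lam d})^{|\gamma|}$, which is tiny once $\lam d \gg \log d$; the crude 2-linked count $(ed^2)^a$ is then absorbed comfortably. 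For ``large'' polymers, the expansion hypothesis only gives $|N(\gamma)|\ge(1+\alpha)|\gamma|$ and weight $\lam^{|\gamma|}(1+\lam)^{-(1+\alpha)|\gamma|}$, which decays only geometrically; here the crude count is insufficient, and Sapozhenko's graph container lemma must be invoked to enumerate 2-linked subsets with prescribed $(|\gamma|,|N(\gamma)|)$ in terms of a much smaller ``fingerprint'' of size $O(|N(\gamma)|/d)$ inside a container, producing the needed savings. The threshold $\lam \ge C_2\log d/d^{1/4}$ is calibrated to make the contributions of the two regimes simultaneously summable.

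Once KP holds, standard algorithmic cluster expansion (as in Helmuth--Perkins--Regts and Patel--Regts) gives an FPTAS for each $\log\Xi_X$ and $\log\Xi_Y$ by truncating the cluster expansion at order $K=O(\log(n/\eps))$ and enumerating clusters of size at most $K$ in time $n^{O(1)}$, the polymer-enumeration step being exactly the kind already implemented for Theorem~\ref{thmExpander}. Combining these with the explicit prefactors $(1+\lam)^{|X|}$, $(1+\lam)^{|Y|}$ yields the FPTAS for $Z_G(\lam)$, and the polynomial-time sampler for $\mu_{G,\lam}$ follows from the counting-to-sampling reduction for abstract polymer models.

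The main obstacle, and the step most likely to be delicate, is Step~3: producing a polymer model and container count that are simultaneously fine enough to handle large polymers with only $(1+\alpha)$-expansion and flexible enough to absorb the polynomial-in-$d$ loss from small polymers when $\lam$ is as small as $\log d/d^{1/4}$. In particular, I expect that the exponent $1/4$ arises from optimizing the exchange between the $e^{-\lam d a}$ decay for small polymers and the container-based enumeration for large ones, rather than from any loose bookkeeping; this is where the interplay between Sapozhenko's method and the weight $\lam^{|\gamma|}(1+\lam)^{-|N(\gamma)|}$ must be used most carefully.
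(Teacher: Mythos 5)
Your overall strategy is the right one and matches the paper's: polymer models with weights $\lambda^{|\gamma|}(1+\lambda)^{-|N(\gamma)|}$ on each side, a Koteck\'y--Preiss verification, algorithmic cluster expansion for the FPTAS, and self-reducibility for sampling. The decomposition identity $Z_G(\lambda) = (1+\lambda)^{|X|}\sum_{I_Y}\lambda^{|I_Y|}(1+\lambda)^{-|N(I_Y)|}$ is also correct and is essentially the paper's $(1+\lambda)^n\Xi_G^Y(\lambda)=\sum_{I\in\mathcal{I}_Y}\lambda^{|I|}$ under a relabeling.

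The difficulty is in Step~3, and your sketch of it has a genuine gap. For ``small'' polymers you assert that $|N(\gamma)|\ge(1-o(1))d|\gamma|$ because ``essentially all $d|\gamma|$ edges leaving $\gamma$ land on distinct $X$-vertices.'' This is not a consequence of bipartite $\alpha$-expansion: the hypothesis only guarantees $|N(\gamma)|\ge(1+\alpha)|\gamma|$, which is vastly weaker for small $\gamma$. A $d$-regular $\alpha$-expander can contain two vertices of $Y$ with nearly identical neighborhoods (nothing in the expansion condition forbids this for small sets), in which case a $2$-linked pair $\gamma$ of size $2$ has $|N(\gamma)|$ close to $d$, not $2d$. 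Consequently the bound $w(\gamma)\le(\lambda e^{-\lambda d})^{|\gamma|}$ fails and the crude $(ed^2)^{|\gamma|}$ count of $2$-linked sets cannot be absorbed in the way you describe. For ``large'' polymers, you invoke the unweighted Sapozhenko container enumeration $|\mathcal{G}(v,a,w)|\le 2^{w-c(w-a)}$; but multiplying this count by the weight $\lambda^a(1+\lambda)^{-w}$ does not give the required summability when $\lambda$ is small, because $(1+\lambda)^{-w}$ decays far more slowly than $2^{-w}$, and the $2^w$ factor is no longer compensated. What the paper actually uses, and what is missing from your plan, is a \emph{weighted} container lemma (Lemma~\ref{lem:explambda}, due to Galvin--Tetali) that bounds the weighted sum $\mathcal{W}_\lambda(v,a,w)=\sum_A\lambda^{|A|}(1+\lambda)^{-w}$ directly by $2^{-c_5(w-a)\beta(\lambda)}$, with $\beta(\lambda)=\log^2(1+\lambda)/(\log(1+\lambda)+\log(2d^5/\alpha))$ capturing the interaction between the fugacity and the container fingerprint size. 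This single bound handles all polymer sizes uniformly, and the constraint $\lambda\ge C_2\log d/d^{1/4}$ enters through the hypothesis of that lemma rather than through a small/large dichotomy. Without some version of this weighted container estimate, the Koteck\'y--Preiss verification does not go through.
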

In fact  in proving Theorems~\ref{thmExpander} and~\ref{thmLamExpand} we can interpolate between the two cases, letting the lower bound on $\lam$ shrink as the expansion condition gets stronger. The more general condition obtained is essentially the same as the condition for slow mixing of Glauber dynamics given by Galvin and Tetali~\cite{GT06}.

Our next result is an approximation algorithm for $i(G)$ for all (not necessarily expanding) $d$-regular bipartite graphs $G$, where $d$ may either be constant or growing with the size of the graph, $n$.  When $d \to \infty$ as $n\to \infty$, the algorithm runs in subexponential time.   This algorithm estimates $i(G)$ by separating the contribution from non-expanding sets and expanding sets on each side of the bipartition and uses the expander algorithm of Theorem~\ref{thmExpander} as a subroutine. 
\begin{theorem}
\label{thmMain}
For every $c>0$, there is a randomized algorithm that given a $d$-regular, $n$-vertex bipartite graph $G$ outputs an $n^{-c}$-relative  approximation to $i(G)$ with probability at least $2/3$ and runs in time 
\[  \exp\left(O\left(\frac{n \log^3d}{d}\right)\right)\, .
\] 
\end{theorem}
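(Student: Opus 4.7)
\textbf{Proof plan for Theorem \ref{thmMain}.} The strategy is to combine a Sapozhenko-style container enumeration of the ``non-expanding cores'' of independent sets with the FPTAS of Theorem \ref{thmExpander}, applied to a residual bipartite expander subgraph. The starting point is the bipartite identity
\begin{equation*}
i(G) \;=\; \sum_{B \subseteq Y} 2^{|X \setminus N(B)|},
\end{equation*}
valid because every subset of $Y$ is independent in $G$ and the $X$-part of an independent set can then be any subset of $X \setminus N(B)$. This sum is dominated by defects $B$ with small $|N(B)|$, so it suffices to enumerate the non-expanding $B$'s (and symmetrically on the $X$-side) and to handle the rest via the expander subroutine.

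\emph{Step 1 (containerize bad components).} Fix $\alpha = C_1 \log^2 d / d$ from Theorem \ref{thmExpander}. Decompose each candidate $B$ (and symmetrically $A \subseteq X$) into its 2-linked components in the auxiliary graph on $Y$ where two vertices are adjacent iff they share a neighbour in $G$. Call a component \emph{bad} if it fails $(1{+}\alpha)$-expansion in $G$. Via a multi-scale application of Sapozhenko's graph container method, extended to general $d$-regular bipartite graphs in the spirit of Galvin--Tetali, enumerate in time $\exp(O(n \log^3 d / d))$ a family $\mathcal{F}$ of ``fingerprint'' configurations, where each fingerprint records a canonical superset of the bad components together with their approximate neighbourhoods. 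The key combinatorial bound is $|\mathcal{F}| \le \exp(O(n \log^3 d / d))$, and the fingerprint map can be inverted efficiently: for each $F \in \mathcal{F}$, the list of concrete bad configurations $(A_0, B_0)$ compatible with $F$ has $\exp(o(n \log^3 d / d))$ elements and is computable in that time.

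\emph{Step 2 (residual expander FPTAS).} For each $F \in \mathcal{F}$ with associated bad configuration $(A_0, B_0)$, form the residual bipartite graph $G_F$ by deleting $A_0 \cup B_0 \cup N(A_0 \cup B_0)$. By construction $G_F$ contains no bad 2-linked subsets, since any such subset would have been captured by a different fingerprint; thus (after a light boundary clean-up to handle vertices whose degree has dropped) $G_F$ is a $d$-regular bipartite $\alpha$-expander. Apply Theorem \ref{thmExpander} to approximate $i(G_F)$ to relative error $n^{-c-1}$ in $\mathrm{poly}(n)$ time, multiply by the easy-to-compute factor $2^{|X \setminus (A_0 \cup N(B_0))|}$ counting the free $X$-vertices outside the fingerprint, and sum over $F \in \mathcal{F}$ using a canonical lexicographic tie-breaking rule to avoid double-counting. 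Total running time: $\exp(O(n \log^3 d / d)) \cdot \mathrm{poly}(n)$.

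\emph{Main obstacle.} The heart of the argument is Step 1: proving that the multi-scale container construction really produces at most $\exp(O(n \log^3 d / d))$ fingerprints in an arbitrary $d$-regular bipartite graph. Heuristically, two factors of $\log d$ come from the single-scale Sapozhenko container bound $\exp(O(k \log^2 d / d))$ on non-expanding sets of size $k$, and a further $\log d$ from summing over the relevant size scales of bad components up to $\widetilde O(n \log d/d)$. A secondary technical issue is that removing the fingerprinted region may leave a few vertices of degree less than $d$ at the boundary of $G_F$; this can be handled either by absorbing such vertices into an extended fingerprint layer (at negligible cost in the count of $\mathcal{F}$) or by a minor extension of Theorem \ref{thmExpander} allowing a bounded number of degree-irregular vertices.
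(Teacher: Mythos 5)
Your proposal has the right high-level shape --- enumerate non-expanding pieces and hand the expanding remainder to the expander machinery --- but the way you propose to execute it has two gaps that the paper's argument is specifically designed to avoid.

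First, your Step 2 asserts that the residual graph $G_F$, after ``a light boundary clean-up,'' is a $d$-regular bipartite $\alpha$-expander so that Theorem~\ref{thmExpander} applies to it as a black box. This is false in general and not minor: $G_F$ is not $d$-regular (any vertex adjacent to the removed set loses degree), and, more importantly, there is no reason for $G_F$ to be an $\alpha$-expander --- deleting $N^2$ of a few bad sets does not kill all non-expanding $2$-linked subsets of what remains. You acknowledge the regularity issue but not the expansion issue, and the ``minor extension of Theorem~\ref{thmExpander}'' you invoke is where the real work would be. The paper sidesteps both problems simultaneously by \emph{not} calling Theorem~\ref{thmExpander} on $G_F$. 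Instead it directly evaluates the polymer partition function $\Xi^{X_A}_{G_A}$, whose polymers are by definition only the \emph{expanding} $2$-linked subsets of $X_A$. Since these are a subset of the polymers of $G$ (Remark~\ref{rem:subg}), the Kotecký--Preiss condition and the truncation bound of Lemma~\ref{lem:xihat} transfer for free, with no regularity or expansion hypothesis on $G_A$.

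Second, your tie-breaking scheme ``to avoid double-counting'' is named but never built, and it is exactly the part that cannot be waved away. The paper's decomposition is exact, not approximate: every independent set $I$ determines a unique partition of the $2$-linked components of $I\cap X$ into expanding and non-expanding ones, the non-expanding ones are grouped by their closures $A_1,\dots,A_\ell$, and the inner sum over $B_i\subseteq A_i$ with $N(B_i)=N(A_i)$ (counted by $\mathcal{D}(A_i)$ via Lemma~\ref{lem:countcover}) recovers the actual non-expanding components. Because $\Xi^{X_A}_{G_A}$ only ever sums over \emph{expanding} configurations, no independent set is counted under two different outer terms. Your fingerprint formulation, by contrast, re-counts all of $i(G_F)$ inside each residual graph, so an independent set with a non-expanding component that happens to survive in $G_F$ would be counted both there and under the fingerprint that captures it; making your ``canonical lexicographic rule'' actually prevent this would essentially force you to rediscover the paper's decomposition.

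A smaller point: your heuristic bookkeeping of the $\log^3 d$ exponent is off. The paper's third $\log d$ does not come from a multi-scale sum over component sizes; the container bound on the \emph{number} of non-expanding closed sets is $2^{O(a\log^2 d/d)}$ (Lemma~\ref{lem:nonexp}), and the extra $\log d$ enters through the \emph{algorithmic enumeration} of $\mathcal{G}'(v,a)$, specifically the $\binom{wd^2}{\le C_1 w\log^2 d/d}$ term when listing the possible sets $W$ given an essential subset $F$. Your starting identity $i(G)=\sum_{B\subseteq Y}2^{|X\setminus N(B)|}$ is correct, but the paper instead factorises over $2$-linked components of $I\cap X$, which is what makes the non-expanding/expanding split compose cleanly with the polymer model.
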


Note that while the algorithm of Theorem~\ref{thmMain} applies more generally than that of Theorem~\ref{thmExpander}, the algorithmic guarantees are weaker in several senses (in addition to the slower running time): the algorithm uses randomness and the accuracy is limited to being polynomially small in $n$. Moreover we do not provide a corresponding sampling algorithm for Theorem~\ref{thmMain}; the problem is not self-reducible for regular graphs, and so there is not a direct reduction of approximate sampling to counting.  We can overcome this in Theorem~\ref{thmExpander}, however, using the self-reducibility of polymer models.

\subsection{Background}

The study of independent sets  plays a central role in combinatorics since a broad range of problems can be phrased in terms of independent sets in graphs (and more generally hypergraphs). The container method is one of the most  powerful combinatorial tools for studying independent sets. At a high level, the container method exploits a clustering phenomenon exhibited by independent sets which can often be used to deduce useful structural information for typical independent sets in a given graph or hypergraph. For graphs, the method was developed in the early 1980's by Kleitman and Winston~\cite{kleitman1982number, kleitman1980asymptotic} and was independently discovered by Sapozhenko who used the method to enumerate independent sets in regular graphs~\cite{SAPOZHENKO87, sapozhenko2001number}. See  the survey of Samotij~\cite{SAMOTIJ15} for background and examples.  The full potential of the container method only recently became apparent with the powerful generalization of the method to the context of hypergraphs developed by Saxton and Thomason~\cite{saxton2015hypergraph} and Balogh, Morris and Samotij~\cite{balogh2015independent}. These developments have made the container method  one of the most influential tools in modern combinatorics. 

In this paper, we will only need ideas from the theory of graph containers, and our treatment is most closely related to that of Sapozhenko~\cite{SAPOZHENKO87}. A canonical application of Sapozhenko's version of the container method is his proof that the number of independent sets in the $d$-dimensional hypercube, $Q_d$, is asymptotically equal to $2\sqrt{e} \cdot2^{2^{d-1}}$ (a result originally proved by Korshunov and Sapozhenko~\cite{KS83}). See also Galvin's exposition of Sapozhenko's proof~\cite{galvin2019independent}.

Recently, Jenssen and Perkins~\cite{jenssen2020independent} used   Sapozhenko's graph containers for $Q_d$ (and Galvin's extension to weighted independent sets~\cite{galvin2011threshold}) along with the theory of polymer models and the cluster expansion to deduce refined counting estimates and detailed probabilistic information for independent sets in $Q_d$. The polymer models they consider closely resemble those used by Jenssen, Perkins and Keevash~\cite{jenssen2020algorithms} to design approximate counting algorithms in bipartite expander graphs. However, the hypercube $Q_d$ is a far weaker expander than the graphs considered in~\cite{jenssen2020algorithms}, ruling out a direct application of the the cluster expansion method. To overcome this obstacle, they showed that the container method arises naturally as a tool for proving cluster expansion convergence. This synthesis of the cluster expansion and the container method has now seen a number of applications to enumeration problems in combinatorics~\cite{jenssen2020homomorphisms, jenssen2021independent, davies2020upper, balogh2021independent}.

Graph containers have in fact been used previously in a number of results in theoretical computer science, including~\cite{GT06,galvin2007sampling,galvin2007torpid}. These results use Sapozhenko's techniques to prove a type of negative algorithmic result: that the Glauber dynamics (a local Markov chain) for sampling independent sets (or $q$-colorings) in bipartite graphs with sufficient expansion exhibit torpid mixing; that is, the mixing time is exponentially large in the size of the graph. 

 In this paper, we return to the algorithmic context to prove positive results, showing that the container method can be algorithmically implemented to design efficient approximate counting and sampling algorithms for a broad class of bipartite graphs.   There is in fact a connection been the torpid mixing results above and the current algorithmic results: a key step in using polymer models and the cluster expansion as algorithms to approximately count independent sets in bipartite graphs is  showing that most (weighted) independent sets can be accounted for by one of two polymer models that capture small deviations from independent sets that are fully contained in one side of the bipartition.  This step (Lemmas~\ref{lem:polymerapprox} below) amounts to proving a kind of torpid mixing result, closely related to the result of Galvin and Tetali who proved torpid mixing for $d$-regular bipartite $\alpha$-expanders, for $\alpha= \Omega(\log^3 d/d)$~\cite[Corollary 1.3]{GT06}, similar to the class of graphs to which Theorem~\ref{thmExpander} applies.

While Theorem~\ref{thmExpander} gives efficient approximate counting and sampling algorithms for a larger class of bipartite expander graphs than previous approaches, it does not say much about the tractability of \#BIS, beyond ruling out a class of graphs as hard examples.    In another prominent approximation problem of undetermined complexity, the Unique Games problem, efficient approximation algorithms for expander graphs~\cite{arora2008unique,kolla2011spectral,makarychev2010play} were later leveraged via graph decompositions into expanding pieces to find subexponential-time algorithms for all instances~\cite{arora2015subexponential}.  This suggests a very natural goal of finding much faster algorithms for approximating \#BIS.
\begin{question}
Is there a subexponential-time approximation algorithm for \#BIS?
\end{question}
Theorem~\ref{thmMain} makes a small step in this direction, giving subexponential-time algorithms for regular graphs of growing degree, using an expander algorithm as a subroutine to account for the contribution to $i(G)$ from expanding sets.  Non-expanding sets are accounted for separately, also using ideas from graph containers.  It is tempting to think that algorithmic graph decomposition results (e.g.~\cite{gharan2014partitioning,carlson2020efficient}) could be used in conjunction with expander algorithms for \#BIS, but it is not clear to us how to use results that bound the number of edges between expanding pieces  to obtain improved approximation algorithms for \#BIS, and so this remains an interesting direction for future research.

\subsection{Outline}

In Section~\ref{secWarmup} we give a brief warm-up to show how ideas from graph containers can be used for approximately counting independent sets in graphs.  In Section~\ref{secContainers} we present the graph container results we will need for the main algorithmic results.  In Section~\ref{secPolymerModels} we define a polymer model which we will use to approximate the number of independent sets in expander graphs.  In Section~\ref{secExpander} we prove Theorem~\ref{thmExpander}.  In Section~\ref{secGenAlgorithm} we prove Theorem~\ref{thmMain}.  In Section~\ref{secWeighted} we extend the graph container results of Section~\ref{secContainers}  and the polymer model of Section~\ref{secPolymerModels} to the case of weighted independent sets to prove Theorem~\ref{thmLamExpand}.

\section{Warm-up: algorithms from containers}
\label{secWarmup}

In this section, we demonstrate how ideas from graph containers can be used to design faster  algorithms to approximately count independent sets in (not necessarily bipartite) $d$-regular graphs. This section is intended as a warm-up which introduces the interplay between the container method and counting algorithms, and is not needed for the proofs of Theorems~\ref{thmExpander},~\ref{thmLamExpand} and~\ref{thmMain}.

Let us assume that we have an algorithm $A$ that runs in time $a(n)$ on a general (not necessarily $d$-regular) graph $G$ on $n$ vertices and outputs an $\epsilon$-relative approximation to $i(G)$. We will  show that if $G$ is $d$-regular for $d = \omega(1)$, one can obtain an algorithm running in time $a(n/2) \cdot 2^{o(n)}$.

Let $T := n \ln d/d$. We note that throughout the paper, we let $\ln$ denote the natural logarithm and let $\log$ denote $\log_2$. We may then write 

\[
i(G) = i_{< T}(G) + i_{\geq T}(G)
\]
where $i_{<T}(G)$ is the number of independent sets of $G$ of size less than $T$, and  $i_{\geq T}(G)$ is the the number of independent sets of size at least $T$. One can compute $i_{<T}(G)$ by brute-force in time $\binom{n}{T} \cdot \operatorname{poly}(n)$. To compute $i_{\geq T}(G)$, we use a subtle idea of Sapozhenko ~\cite{SAPOZHENKO07} (also see~\cite{IK13},~\cite{KP19}) with a tighter analysis. First, we fix an ordering $\prec$ of the vertices of the graph $G$. Given a subset $S\subseteq V(G)$ and vertex $v\in V(G)$, we let $d_S(v)$ denote the number of neighbor of $v$ in $S$. The following algorithm takes an independent set $I$ of size at least $T$ as input and returns a ``certificate'' $\xi$:

\begin{itemize}
\item $t,i \gets 0$, $\xi \gets (0)^n$, $V_0 \gets V(G)$
\item while $t \leq T$ do
\begin{itemize}
\item $v \gets \operatorname{argmax}_{v \in V_i}d_{V_i}(v)$ with ties broken using $\prec$
\item if $v \in I$
\begin{itemize}
\item $V_{i+1} \gets V_i\setminus (\{v\} \cup N(v))$
\item $t \gets t+1$, $i \gets i+1$, $\xi_i \gets 1$.
\end{itemize}
\item if $v \not\in I$
\begin{itemize}
\item $V_{i+1} \gets V_i\setminus \{v\}$
\item $i \gets i+1$.
\end{itemize}
\end{itemize}
\item return $\xi$
\end{itemize}

It is worth noting that $\xi$ is not the indicator vector of any subset of $V(G)$, but rather an indicator vector describing the steps at which a vertex $v\in I$ is removed from $V_i$. Assume that the algorithm runs for $k = k(\xi)$ steps (i.e., $k$ is that final value that $i$ takes in the execution of the algorithm) when the output is $\xi$ and let $V_\xi=V_k$. A key property of $\xi$ is that it determines both $V_\xi$, and $I \cap (V(G) \setminus V_\xi)$. We may therefore group independent sets according to their certificate $\xi$ and write

\begin{equation}
\label{eqn:containers}
i_{\geq T}(G) = \sum_{\substack{\xi \in \{0,1\}^n, \\|\xi| =T}}i(G[V_\xi]).
\end{equation}

The advantage of this expression is that the number of possible certificates $\binom{n}{T}$ is relatively small and, as we show next, the size of each $V_\xi$ is close to $n/2$. We have therefore exploited the clustering phenomenon of independent sets characteristic of the container method to effectively halve the size of the input graph to our algorithm (albeit at the expense of losing regularity of the graph).

We now give the argument to bound the size of each $V_\xi$. For any subset $S \subseteq V(G)$, let us denote $e_S$ to be the number of edges in the induced subgraph $G[S]$. A straightforward extension of the Hoffman bound (see e.g.~\cite{delsarte1973algebraic,lovasz1979shannon,godsil2008eigenvalue,haemers2021hoffman}), and using the fact that the smallest eigenvalue of a $d$-regular graph is at least $-d$, gives us
\begin{equation*}
2e_S \geq \frac{2d}{n}|S|^2 - d|S|.
\end{equation*}
So for $i \leq k$, we have 

\begin{equation}
\max_{v \in V_i}d_{V_i}(v) \geq \frac{2e_{V_i}}{|V_i|} \geq \frac{d}{n}\left(2|V_i| - n\right),
\end{equation}
and so if $v \in I$, then $|V_{i+1}| \leq |V_i|\left(1 - \frac{2d}{n}\right) + d$, and so at the end of the algorithm, we have $|V_k| \leq \frac{n}{2} + O\left(\frac{n \ln d}{d}\right)$.

Thus, if we have an algorithm $A$ running in time $a(n) = a(n, \epsilon)$ on general graphs on $n$ vertices which outputs an $\epsilon$-relative approximation to $i(G)$ for a general graph $G$ on $n$ vertices, then an $\epsilon$-relative approximation to~\eqref{eqn:containers} may be computed in time $\binom{n}{T} \cdot a\left(\frac{n}{2} + O\left(\frac{n \ln d}{d}\right)\right)$. Combining this with the brute-force computation of $i<T$ gives an algorithm running in time 
\[
\operatorname{poly}(n) \cdot \binom{n}{\frac{n\ln d}{d}}^2 \cdot a\left(\frac{n}{2} + O\left(\frac{n \ln d}{d}\right)\right)
\]
that outputs an $\epsilon$-relative approximation to  $i(G)$ for a $d$-regular graph $G$ on $n$ vertices. So in particular, if $d = \omega(1)$, then using the algorithm of Goldberg, Lapinskas, and Richerby~\cite{GLR20} as a blackbox, we obtain a $2^{(0.134 + o(1)) n}$ time algorithm for approximating $i(G)$ in $d$-regular graphs on $n$ vertices.  We will see below in Section~\ref{secGenAlgorithm} that with more sophisticated ideas this running time can be made subexponential provided that the error $\epsilon=n^{-a}$ for some $a>0$ (Theorem~\ref{thmMain}).

\section{Graph container lemmas}
\label{secContainers}

In this section we introduce results from the theory of graph containers that will be key for the algorithms of Theorems~\ref{thmExpander}, \ref{thmLamExpand} and~\ref{thmMain}. Many of the ideas here have their roots in the aforementioned container method of Sapozhenko~\cite{SAPOZHENKO87}.

We assume throughout that $G$ is a $d$-regular bipartite graph on $2n$ vertices with bipartition $X,Y$, so $|X| = |Y| = n$.  For a subset $A \subseteq X$, we use $W$ to denote $N(A)$. Let us define $[A] := \{u \in X~|~N(u) \subseteq W\}$ to be the \emph{closure} of $A$. Let us call $A$ \emph{$2$-linked} if the subgraph of $G^2$ induced by $A$ is connected.  We say that $A$ is \emph{expanding} if 
$$|W| - |[A]| \geq (C_1/2)\frac{\log^2 d}{d}|W| \,,$$
 where the constant $C_1>0$ will be sufficiently large and chosen later. Otherwise, we say that $A$ is \emph{non-expanding}. 
 
 Let $\mathcal{G}(v,a,w)$ denote the set of $2$-linked expanding sets $A$ such that $A \ni v$, $|[A]| = a$, and $|W| = w$. Let $\mathcal{G}'(v,a)$ be the set of $2$-linked non-expanding sets $A \ni v$ such that $A = [A]$, and $|A| = a$. 

\noindent\textbf{Remark:} Observe that if $G$ is a bipartite $\alpha$-expander (as defined at~\eqref{eqn:expdefn}) then
\begin{equation}
\label{eqn:newexpdefn}
|N(A)| \geq (1+\alpha) |[A]|
\end{equation}
for each $A\subseteq X$ or $A\subseteq Y$ such that $|[A]|\leq n/2$. Indeed, if $|[A]|\leq n/2$, then $|N(A)|=|N([A])|\geq (1+\alpha) |[A]|$.
Since $\alpha \le 1$, inequality~\eqref{eqn:newexpdefn} implies that $|[A]| \leq |N(A)|(1 - \alpha/2)$, and so 
\begin{equation}
\label{eqn:finalexpdefn}
|N(A)| - |[A]| \geq (\alpha/2)|N(A)|
\end{equation}
for each $A$ such that $|[A]|\leq n/2$.
In the following, condition~\eqref{eqn:finalexpdefn} is slightly more convenient to work with and motivates our definition of an expanding set. 

We now state our main technical lemmas.  The first bounds the number of expanding sets and the second bounds the number of non-expanding sets (and gives an algorithm to enumerate them).
\begin{lemma}
\label{lem:exp}
There is an absolute constant $c_1>0$ such that for every $v,a,w$ we have
\[
|\mathcal{G}(v,a,w)| \leq 2^{w - c_1(w - a)}.
\]
\end{lemma}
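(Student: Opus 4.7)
The plan is to prove Lemma~\ref{lem:exp} by Sapozhenko's graph container method. For each $A \in \mathcal{G}(v,a,w)$ I would associate a compact \emph{approximating pair} $(F, \psi)$ with $F \subseteq W$ a fingerprint and $\psi \subseteq X$ a container satisfying $[A] \subseteq \psi$; I would then bound the number of approximating pairs together with the number of $A$'s consistent with each pair.

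To construct $(F, \psi)$, I would use a greedy algorithm anchored at $v$. Fix a canonical ordering of $V(G)$. Starting from $F = \emptyset$, repeatedly append the first vertex $y \in W \setminus F$ (in canonical order among those $2$-linked in $G^2$ to $F \cup N(v)$) satisfying $|N(y) \cap (X \setminus [A])| \geq d/\log d$, and stop when no such $y$ remains. Define $\psi = \{u \in X : |N(u) \cap (W \setminus F)| \geq d - d/\log d\}$; the stopping condition guarantees $[A] \subseteq \psi$. The edge count $e(W, X \setminus [A]) = d(w-a)$ combined with the greedy threshold yields $|F| \leq c_2 (w-a)/\log d$, and a further double-counting argument using the expansion hypothesis $w - a \geq (C_1/2)(\log^2 d / d) w$ yields $|\psi| \leq a + c_3 (w-a)/\log d$, for absolute constants $c_2, c_3$.

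Since $F$ is $2$-linked in $G^2$ and meets $N(v)$, a standard spanning-tree encoding (using that $G^2$ has maximum degree at most $d^2$) bounds the number of choices of $F$ by $d \cdot (e d^2)^{|F|}$. Given $F$, the container $\psi$ is determined, and $[A]$ is chosen among subsets of $\psi$ of size $a$, contributing at most $2^{|\psi|}$ further choices. Combining,
\[
|\mathcal{G}(v,a,w)| \leq d \cdot (e d^2)^{|F|} \cdot 2^{|\psi|} \leq 2^{a + O((w-a)) + O(\log d)} \leq 2^{w - c_1 (w-a)},
\]
for an absolute $c_1 > 0$, provided the constants in the greedy algorithm are chosen so that the coefficient of $(w-a)$ in the middle bound is strictly less than $1$.

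The main obstacle is this final calibration: the tree-encoding factor $(e d^2)^{|F|} = 2^{O(|F| \log d)}$, after substituting $|F| \leq c_2(w-a)/\log d$, contributes $2^{O(c_2 (w-a))}$, which must fit inside a $(1 - c_1)(w-a)$ budget. This is precisely why the expansion hypothesis has the $\log^2 d$ factor rather than just $\log d$: it gives enough slack in the greedy step to make $c_2$ arbitrarily small (by lowering the greedy threshold below $d/\log d$ if needed), ensuring a clean constant-fraction saving over the trivial $2^w$ bound. A secondary issue is verifying that the greedy algorithm with the $2$-linkedness preference still achieves the threshold $d/\log d$ at each step whenever any vertex in $W \setminus F$ meets it; this follows because the $2$-linkedness of $A$ (hence of $W$ in the natural bipartite $2$-link graph) lets the greedy search reach any component of candidates from $N(v)$.
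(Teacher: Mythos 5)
Your high-level plan --- associate to each $A \in \mathcal{G}(v,a,w)$ a fingerprint whose possibilities can be enumerated, plus a container $\psi \supseteq [A]$, and combine the two counts --- is exactly the Sapozhenko philosophy behind the lemma. The paper factors this into two cited lemmas: Lemma~\ref{lem:varphi}, which bounds the number of essential subsets (the fingerprint enumeration) by $2^{16w\log^2 d/d}$, and Lemma~\ref{lem:container}, Park's container bound $|\mathcal{G}(F,a,w)|\le 2^{w-c_3(w-a)}$, taken as a black box; the proof of Lemma~\ref{lem:exp} is then a short union bound that absorbs the fingerprint count into the exponent via the expansion hypothesis $w-a \ge (C_1/2)(\log^2 d/d)w$. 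Your self-contained construction, however, has genuine gaps.

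The central one is the fingerprint size. Your greedy adds every $y \in W$ with $|N(y) \cap (X \setminus [A])| \ge d/\log d$; since $e(W, X \setminus [A]) = d(w-a)$, this gives $|F|\cdot(d/\log d) \le d(w-a)$, i.e.\ $|F| \le (w-a)\log d$ --- not $c_2(w-a)/\log d$, so the bound is off by a factor of $\log^2 d$. Tree-encoding $F$ in $G^2$ then costs $2^{O(|F|\log d)} = 2^{O((w-a)\log^2 d)}$, which swamps the target saving; and since the threshold cannot exceed the degree $d$, no recalibration can make $|F|$ as small as $(w-a)/\log d$. The fix --- and this is what Lemma~\ref{lem:varphi} actually does --- is to tree-encode not $F\subseteq W$ but a small \emph{cover} $A'\subseteq [A]$ on the $X$-side of size $O((w/d)\ln d)$, obtained via the Lov\'asz--Stein covering lemma (Theorem~\ref{thm:cover}, Corollary~\ref{cor:cover}), and then set $F = N(A')$; that brings the enumeration down to $2^{O((w/d)\log^2 d)}$, which the expansion condition can absorb. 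A second gap: $[A]\subseteq\psi$ does not follow from your stopping condition. For $u\in[A]$ we have $N(u)\subseteq W$, so $u\in\psi$ iff $|N(u)\cap F|\le d/\log d$; but the stopping rule controls the degrees of vertices in $W\setminus F$ toward $[A]^c$, not the degrees of $[A]$-vertices into $F$, so a vertex $u\in[A]$ whose neighborhood happens to fall mostly inside $F$ would be wrongly excluded from $\psi$. Finally, even granting a correct fingerprint and container, obtaining the constant-fraction saving $2^{w-\Omega(w-a)}$ (rather than a weaker $2^{w-\Omega((w-a)/\log d)}$-type saving) from a single greedy pass is not straightforward --- that saving is precisely the content of Lemma~\ref{lem:container}, which the paper invokes rather than reproves.
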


\begin{lemma}
\label{lem:nonexp}
There is an absolute constant $c_2 >0$ such that for every $v,a$, we have
\[
|\mathcal{G}'(v,a)| \leq 2^{c_2 \frac{a\log^2 d}{d}}.
\]
Moreover, there is an algorithm running in time $2^{O\left(\frac{a\log^2 d}{d}\right)}\cdot \operatorname{poly}(n)$ that outputs the set $\mathcal{G}'(v,a)$.
\end{lemma}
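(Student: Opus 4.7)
The plan is to adapt Sapozhenko's graph container method. The key idea is to associate to each $A \in \mathcal{G}'(v,a)$ a compact ``fingerprint'' $F = F(A) \subseteq A$ which, together with a small amount of correction data, determines $A$; then bound $|\mathcal{G}'(v,a)|$ by counting fingerprints and corrections.

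I would build $F$ via a greedy procedure tailored so that $F$ is 2-linked in $G^2$, contains $v$, has size $|F| = O(a \log d / d)$, and so that $N(F)$ is a good approximation to $W = N(A)$. The prototypical greedy initializes $F_0 = \{v\}$ and, while possible, adds some $u \in A \setminus F_t$ satisfying (i) $N(u) \cap N(F_t) \neq \emptyset$ (to preserve 2-linkedness in $G^2$) and (ii) $|N(u) \setminus N(F_t)| \geq d / \log d$ (to ensure substantial growth of $N(F_t)$). Each step adds at least $d/\log d$ elements to $N(F_t)$, so $|F| \leq 1 + |N(F)| \log d / d = O(a \log d / d)$, using the non-expansion of $A$ (which gives $|N(A)| \leq |A|(1 + O(\log^2 d/d))$).

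I then count fingerprints. Since $G^2$ restricted to $X$ has maximum degree less than $d^2$, the number of 2-linked subsets of size $k$ through a fixed vertex is at most $(ed^2)^k$, yielding $2^{O(a \log^2 d / d)}$ possible $F$'s. Given $F$, since $A = [A] = \{u : N(u) \subseteq W\}$, recovering $A$ reduces to recovering $W$. Using the termination criterion and 2-linkedness of $A$, one shows $W$ lies in a computable superset $\Phi(F) \supseteq N(F)$ with $|\Phi(F) \setminus N(F)| = O(a \log^2 d / d)$; specifying $W$ inside $\Phi(F)$ then costs at most $2^{O(a \log^2 d / d)}$ bits, giving the claimed bound. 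The algorithmic version follows: enumerate fingerprints via branching BFS in $G^2$ in time $2^{O(a \log^2 d / d)} \operatorname{poly}(n)$, then for each $F$ enumerate the candidate completions $W \subseteq \Phi(F)$, set $A = \{u : N(u) \subseteq W\}$, and keep $A$ iff it lies in $\mathcal{G}'(v,a)$.

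The main obstacle is the recovery step: rigorously showing $W \subseteq \Phi(F)$ for a $\Phi(F)$ of size at most $|N(F)| + O(a \log^2 d / d)$ that is computable from $F$ alone. This boils down to tracking how the termination slack $d/\log d$ propagates through $G^2[A]$ via the 2-linked structure, combined with the non-expansion $|W| - |A| \leq (C_1/2)(\log^2 d/d)|W|$ to keep the budget within $O(a \log^2 d / d)$. A secondary subtlety is ensuring simultaneously that $F$ is 2-linked in $G^2$ \emph{and} has small size, which may require either a more careful greedy or a two-stage construction (first a small approximating set, then a short Steiner-like augmentation inside $G^2[A]$ to make it connected).
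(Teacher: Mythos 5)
Your overall strategy (fingerprint plus bounded correction, enumerated in $G^2$) is exactly the graph container method the paper uses, and the fingerprint count $(ed^2)^{O(a\log d/d)} = 2^{O(a\log^2 d/d)}$ is correct and matches the paper's bound on $|\mathcal{C}(v,a,w)|$. However, the step you flag as the ``main obstacle'' is a genuine gap, and it is worth noting that the paper's own proof does not close it either: the proof of Lemma~\ref{lem:nonexp} actually concludes $|\mathcal{G}'(v,a)| = 2^{O(a\log^3 d/d)}$, not $2^{O(a\log^2 d/d)}$ as the lemma statement reads, and Theorem~\ref{thmMain} in fact relies on the $\log^3$ bound (its running time is $\exp(O(n\log^3 d/d))$). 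The extra $\log d$ comes precisely from the correction count, which is $\binom{wd^2}{O(a\log^2 d/d)} = 2^{O(a\log^3 d/d)}$: a small subset chosen from a \emph{large} candidate set, not from a small $\Phi(F)$.

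Concretely, the paper's correction mechanism is this. The fingerprint is not $F \subseteq A$ directly but $F = N(A')$ for a small, $2$-linked $A' \subseteq A$ built by a three-stage construction (Corollary~\ref{cor:cover}: a maximal set $A_0 \subseteq [A]$ with disjoint neighborhoods, a Lov\'asz--Stein cover $A_1$ of $W_{d/2}$, plus a connecting set $A_2$), arranged so that $F \supseteq W_{d/2}$ and $N(F) \supseteq [A]$. Then a double count of edges between $W$ and $[A]^c$ gives $|W \setminus F| \le 2(w-a) = O(a\log^2 d/d)$, and $W \setminus F \subseteq N^2(F)$, a set of size at most $wd^2$. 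Your proposed greedy (add $u$ adjacent in $G^2$ to the current $F$ with $|N(u)\setminus N(F)| \ge d/\log d$) does bound $|F|$, but at termination it gives no control on $|W \setminus N(F)|$: the stopping condition only constrains vertices of $A$ that are $G^2$-adjacent to the current $F$, and vertices of $A$ further out can still contribute arbitrarily many new neighbors. You would need something like the essential-subset property $F \supseteq W_{d/2}$ to run the edge count, which your criterion does not deliver. Moreover, even granting $|W\setminus N(F)| = O(a\log^2 d/d)$, there is no obvious way to compute, from $F$ alone, a candidate set $\Phi(F) \supseteq W$ with $|\Phi(F) \setminus N(F)| = O(a\log^2 d/d)$: the candidate set one gets is $N^2(F)$, of size $\Theta(wd^2)$. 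Thus the $2^{O(a\log^2 d/d)}$ total you claim would be a genuine strengthening of what is proved, not a reconstruction of it. If you replace the greedy with the Corollary~\ref{cor:cover}-style construction and accept the $\binom{wd^2}{O(a\log^2 d/d)}$ correction count, you recover the paper's $2^{O(a\log^3 d/d)}$ bound, which is what the downstream theorem uses.
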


Recall that for a subset $A \subseteq X$, we use $W$ to denote $N(A)$, with $|[A]| = a$, $|W| = w$. Set $t = w - a$. For every $s >0$, let $W_{s} = \{u \in W~|~ d_{[A]}(u) \geq s\}$.   We next define a notion of an approximation of the set $W$ (which in turn determines $[A]$). 
\begin{definition}A set $F \subseteq W$ is an \emph{essential subset} for $A$ if
\begin{enumerate}
\item $F \supseteq W_{d/2}$
\item $N(F) \supseteq [A]$.
\end{enumerate}
\end{definition}
The next lemma gives a family $\mathcal{C}(v,a,w)\subset 2^Y$ that contains an essential subset for each member of $\mathcal{G}(v,a,w)$. Crucially the set of approximating sets $\mathcal{C}(v,a,w)$ is far smaller than $\mathcal{G}(v,a,w)$.

\begin{lemma}
\label{lem:varphi} 
There is a family $\mathcal{C}(v,a,w) \subset 2^Y$ of size at most 
\[
2^{\frac{16 w \log^2 d }{d}}
\] 
such that $\mathcal{C}(v,a,w)$ contains an essential subset of every $2$-linked set $A \ni v$ such that $|[A]| = a$ and $|W| = w$. Moreover, there is an algorithm running in time $2^{\frac{16 w \log^2 d }{d}} \cdot \operatorname{poly}(n)$ that outputs the set $\mathcal{C}(v,a,w)$.
\end{lemma}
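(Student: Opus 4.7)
The plan is to construct $\mathcal{C}(v,a,w)$ as the image of a canonical encoding map that sends each valid $2$-linked $A$ to an essential subset $F(\sigma_A)$ for some small sketch $\sigma_A$ of $A$ in an auxiliary graph.

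Consider the auxiliary graph $\hat{G}$ on $X$ where $x\sim x'$ iff $\mathrm{dist}_G(x,x')=2$; its maximum degree is at most $d(d-1)<d^2$. Since $A$ is $2$-linked in $G$, the closure $[A]$ is $\hat{G}$-connected. The sketch $\sigma_A\subseteq [A]$ is to be $\hat{G}$-connected, contain $v$, have size at most $k:=4w\log d/d$, and satisfy (i) $W_{d/2}\subseteq N_G(\sigma_A)$ and (ii) $[A]\subseteq N_{\hat G}[\sigma_A]$. Setting $F(\sigma):=N_G(\sigma)\subseteq Y$, the inclusion $\sigma_A\subseteq [A]$ yields $F(\sigma_A)\subseteq N_G([A])=W$; condition (i) gives $F(\sigma_A)\supseteq W_{d/2}$; and condition (ii) gives $N_G(F(\sigma_A))=N_G(N_G(\sigma_A))\supseteq [A]$. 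Hence $F(\sigma_A)$ is essential for $A$. Take $\mathcal{C}(v,a,w):=\{N_G(\sigma):\sigma\subseteq X,\ v\in\sigma,\ \sigma\ \hat{G}\text{-connected},\ |\sigma|\le k\}$.

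The counting bound follows from the standard tree-counting lemma: the number of $\hat{G}$-connected subsets of size at most $k$ containing a fixed vertex is at most $(ed^2)^k\le 2^{k(\log_2 e + 2\log d)} \le 2^{16w\log^2 d/d}$ (using $\log_2 e+2\log d\le 4\log d$ for $d\ge 2$). For the algorithm, enumerate all such $\sigma$ by a standard recursive enumeration of $\hat{G}$-connected subgraphs of size at most $k$, and output $N_G(\sigma)$ for each; the running time is $(ed^2)^k\cdot \mathrm{poly}(n)=2^{O(w\log^2 d/d)}\cdot \mathrm{poly}(n)$.

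The main obstacle is showing that an appropriate $\sigma_A$ of size at most $k$ exists for each valid $A$. I would build $\sigma_A$ by first sampling each vertex of $[A]$ independently with probability $p=C\log d/d$ to obtain $\sigma_0\subseteq [A]$; the bound $|W_{d/2}|\le 2a$ (from double-counting edges between $[A]$ and $W_{d/2}$, each of whose vertices has $\ge d/2$ neighbors in $[A]$) together with $\Pr[u\notin N_G(\sigma_0)]\le (1-p)^{d/2}\le d^{-C/2}$ gives (i) with positive probability for $C$ sufficiently large. Then augment $\sigma_0$ with any $\hat{G}$-undominated vertices of $[A]$ to enforce (ii), and patch $\hat{G}$-connectedness using a Steiner-tree-type augmentation within the $\hat{G}$-connected set $[A]$. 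The delicate part --- responsible for the extra $\log d$ factor in the sketch size beyond the naive $O(w/d)$ --- is controlling the total size of $\sigma_A$ after all three augmentation steps; this requires a careful tradeoff between the sampling probability and the sizes of the augmentation and connectivity patches, and is the technical heart of the lemma.
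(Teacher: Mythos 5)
Your framework is the same as the paper's: define $\mathcal{C}(v,a,w)$ as the set of $G$-neighborhoods of all small $\hat G$-connected ($=2$-linked) subsets of $X$ containing $v$, and bound $|\mathcal{C}|$ via the standard count of connected subgraphs of bounded size in a graph of maximum degree $\le d(d-1)$. That part, and the enumeration algorithm, are correct (the paper uses a DFS-string encoding giving $(d(d-1)+1)^{2k}$, you use the $(e\Delta)^k$ bound; both land at $2^{16w\log^2 d/d}$).

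The gap — which you flag yourself — is the existence of a $2$-linked $\sigma_A\subseteq[A]$ of size $O(w\log d/d)$ with $N(\sigma_A)$ essential for $A$, and I don't think the random-sampling route you sketch closes it. The problem is your condition (ii), $[A]\subseteq N_{\hat G}[\sigma_A]$: a vertex $x\in[A]$ whose $\hat G$-degree inside $[A]$ is small (even $O(1)$) is dominated by a $p$-random $\sigma_0$ only with probability $O(p)=O(\log d/d)$, so the expected number of undominated vertices is $\Theta(a)$ in the worst case, and ``augmenting with the undominated vertices'' can blow the sketch up to size $\Theta(a)$. The paper sidesteps this entirely with a deterministic construction (Corollary~\ref{cor:cover}(\ref{item1})): take $A_0\subseteq[A]$ a \emph{maximal} subset containing $v$ with pairwise disjoint $G$-neighborhoods. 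Disjointness forces $|A_0|\le w/d$, and maximality forces every $x\in[A]$ to share a $G$-neighbor with some member of $A_0$, i.e.\ $[A]\subseteq N^2(A_0)$ — so $A_0$ is automatically a $\hat G$-dominating set of the right size, with no sampling and no union bound. Condition (i) ($W_{d/2}\subseteq N(\sigma_A)$) is then handled by the Lov\'asz--Stein covering theorem (Theorem~\ref{thm:cover}), which deterministically gives $A_1\subseteq[A]$ of size $\le 2a\ln d/d$ with $W_{d/2}\subseteq N(A_1)$, and $2$-linkedness is restored by at most $w/d$ extra vertices since $N(A_0\cup A_1)\subseteq W$ and each $2$-linked component occupies $\ge d$ vertices of $W$. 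If you want to keep a probabilistic flavor, Theorem~\ref{thm:cover} itself has a random-sampling proof, but for the domination step you really want the ``maximal disjoint-neighborhood'' trick rather than i.i.d.\ sampling.
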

We prove Lemma~\ref{lem:varphi} below.

The following lemma of Park~\cite{PARK21}  strengthens a result of Sapozhenko~\cite{SAPOZHENKO87} (the lemma is proved implicitly in~\cite{KP19} also).
\begin{lemma}
\label{lem:container}
There is an absolute constant $c_3 > 0$ such that the following holds: for every $F \subseteq X$, let $\mathcal{G}(F,a,w)$ be the set of expanding $2$-linked sets $A \subseteq X$ such that $|[A]| = a$, $|W| = w$, and $F$ is an essential subset of $A$. Then
\[
|\mathcal{G}(F,a,w)| \leq2^{w - c_3 \left(w-a\right)}.
\] 
\end{lemma}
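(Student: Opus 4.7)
The plan is to follow the Sapozhenko-style container argument (as sharpened by Park) that reduces counting elements of $\mathcal{G}(F,a,w)$ to counting a small auxiliary structure encoded by $F$. Because $N(F) \supseteq [A]$, the closure $[A]$ is a subset of $T := N(F)$ of size $a$; once $[A]$ is known, the neighborhood $W = N([A]) = N(A)$ is automatically determined (since every vertex of $[A]$ has all neighbors in $W$), and $A$ is an arbitrary subset of $[A]$ with $N(A) = W$, contributing at most a factor $2^a$. Consequently it suffices to bound the number of valid closures $[A] \subseteq T$ of size $a$ by $2^{(1-c_3) t}$, where $t := w - a$.

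To carry out this count I would encode the defect set $D := T \setminus [A]$. Two structural facts will drive the argument: (a) every $u \in D$ has at least one neighbor in $Y \setminus W$, while every $u \in [A]$ has all $d$ neighbors in $W$; and (b) every vertex of $W \setminus F$ has fewer than $d/2$ neighbors in $[A]$, since $F \supseteq W_{d/2}$. Property (b) forces the high-degree part of $W$ to lie inside $F$, so that $F$ already ``sees'' most of $[A]$ via dense connections; in particular the double count $\sum_{y \in W} d_{[A]}(y) = da$ yields $|W_{d/2}| \leq 2a$, which controls how much of $W$ remains ambiguous. Property (a) then lets us run an iterative peeling on $T$: at each stage a vertex of $T$ is either committed to $[A]$ (once all its neighbors in $Y$ are accounted for within the identified part of $W$) or committed to $D$ (once a certified neighbor outside $W$ is revealed).

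The main challenge will be turning this peeling process into a quantitative encoding with an absolute constant $c_3 > 0$. This is the core of Park's refinement: by choosing the order of peeling carefully (prioritizing high-degree vertices) and charging the cost of each ambiguous step to an edge between $[A]$ and $W \setminus F$, one shows that the total encoding length is at most $(1 - c_3) t$ bits, yielding at most $2^{(1 - c_3) t}$ candidate defect sets and hence that many candidates for $[A]$. Combined with the factor $2^a$ for the choice of $A$ inside $[A]$, this produces the target bound $2^{w - c_3 t}$. Closing the inequality uses the expansion assumption built into $\mathcal{G}(F,a,w)$, which guarantees $t \geq (C_1/2)(\log^2 d/d) w$ and so ensures there is enough ``budget'' in $t$ to absorb the peeling costs while still leaving linear savings; delivering the constant $c_3$ with the right dependence, independent of $d$ and of the parameters $a,w$, is the delicate step and the reason the result required Park's sharpening of the original Sapozhenko bound.
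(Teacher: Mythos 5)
First, note that the paper does not prove Lemma~\ref{lem:container}: it is quoted from Park~\cite{PARK21}, said to be proved implicitly in~\cite{KP19}, and strengthens Sapozhenko~\cite{SAPOZHENKO87}. There is therefore no internal proof to compare against, and what you have written is better read as a roadmap to those sources than as a proof.

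As a proof, the proposal has a genuine gap. The single step that carries the entire content of the lemma --- that the ``carefully ordered peeling'' admits an encoding of at most $(1-c_3)t$ bits for an absolute constant $c_3>0$ --- is asserted, not established, and you acknowledge this. More importantly, the concrete realisation of your reduction does not work as stated. After paying $2^a$ for $A\subseteq[A]$, you need the number of candidate closures consistent with $F$ to be at most $2^{(1-c_3)t}$; determining $[A]$ is equivalent to determining $W$, and the natural implementation of your plan gives $W\supseteq F$, $|W\setminus F|\leq 2t$ (from $F\supseteq W_{d/2}$ and the $\leq dt$ edges from $W$ to $[A]^c$), and $W\setminus F\subseteq N^2(F)$, hence at most $\binom{d^2 w}{\leq 2t}$ choices. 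When $t\asymp w\log^2 d/d$ this is of order $2^{\Theta(t\log d)}$, overshooting the target by a super-constant factor in the exponent. The iterative ``commit each vertex of $T=N(F)$ to $[A]$ or $D$'' scheme you describe, taken at face value, costs roughly $|T|$ bits, and $|T|$ may be as large as $dw$; nothing in the sketch explains why the cost collapses to $(1-c_3)t$. Saving the $\sim\log d$ bits per defect vertex is precisely the work done by the multi-stage approximation in Sapozhenko's and Park's arguments, and that work is missing here. (Two small remarks: $|W_{d/2}|\leq 2a$ is correct but peripheral to the difficulty; and $F$ should live in $Y$ rather than $X$ --- the ``$F\subseteq X$'' in the lemma statement appears to be a typo, consistent with $\mathcal{C}(v,a,w)\subset 2^Y$ in Lemma~\ref{lem:varphi}.)
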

With these lemmas in hand, we now prove Lemma~\ref{lem:exp} and Lemma~\ref{lem:nonexp}.

\begin{proof}[Proof of Lemma~\ref{lem:exp}]
First note that
\begin{align}\label{eq:union}
    \mathcal{G}(v,a,w)\subseteq\bigcup_{F \in \mathcal{C}(v,a,w)}\mathcal{G}(F,a,w)
\end{align}
and so by Lemmas~\ref{lem:varphi} and~\ref{lem:container}, we have that 
\begin{align*}
|\mathcal{G}(v,a,w)| & \leq \sum_{F \in \mathcal{C}(v,a,w)}|\mathcal{G}(F,a,w)| \\
& \leq |\mathcal{C}(v,a,w)| \cdot \max_{F \in \mathcal{C}(v,a,w)}|\mathcal{G}(F,a,w)| \\
& \leq 2^{\frac{16w \log^2 d}{d}} \cdot 2^{w - c_3\left(w-a\right)} \\
& \leq 2^{w - \left(\frac{c_3}{2}\right)\left(w-a\right)} 
\end{align*}
where for the last inequality we used that $w-a\geq (C_1/2)\frac{\log^2 d}{d}w$ by the definition of an expanding set and assumed that $C_1 > 64/c_3$. 
\end{proof}

\begin{proof}[Proof of Lemma~\ref{lem:nonexp}]
Let $F$ be an essential subset of $A$ where $A$ is non-expanding.
Note that each vertex in $W \setminus F$ has at least $d/2$ neighbors in $[A]^c$, and there are at most $d(w-a)$ edges between $W$ and $[A]^c$. It follows that
\[
(d/2) \cdot |W \setminus F| \leq  d(w-a)\
\]
 and so
\[
|W \setminus F| \leq  2(w-a) \leq C_1 \frac{\log^2 d}{d}w\, .
\]

Moreover, $W \setminus F \subset N^2(F)$, and $N^2(F) \leq wd^2$, and so there are at most \[\binom{wd^2}{ \leq C_1w(\log^2 d)/d} \leq 2^{O\left(\frac{a\log^3 d}{d}\right)}
\]
choices for $W$, each of which determines a $[A]$. Let 
$\mathcal{G}'(F,a)$ denote the collection of $A\subset X$ such that $A=[A]$, $v\in A$, $A$ is $2$-linked, non-expanding and $F$ is an essential subset for $A$. Then by the above
\[
|\mathcal{G}'(F, a)|= 2^{O\left(\frac{a\log^3 d}{d}\right)}\, .
\]
Moreover, we can generate the set $\mathcal{G}'(F, a)$ in time $2^{O\left(\frac{a\log^3 d}{d}\right)}\operatorname{poly}(n)$ by listing each set $W$ that is a union of $F$ with a subset of $N^2(F)$ of size at most $C_1w(\log^2 d)/d$, generating the corresponding closed set $[A]$ such that $N(A)=W$, and checking it satisfies the required conditions.

Now, by Lemma~\ref{lem:varphi},
\begin{align}\label{eq:union2}
\mathcal{G}'(v, a)\subseteq \bigcup_{w \in \left[a,a\left(1 + C_1\frac{\log^2 d}{d}\right)\right]}\bigcup_{F \in \mathcal{C}(v,a,w)}\mathcal{G}'(F,a)
\end{align}
and so 
\[
|\mathcal{G}'(v, a)|\leq \left(C_1\frac{a\log^2 d}{d} \right) \cdot  2^{O\left(\frac{a\log^3 d}{d}\right)} = 2^{O\left(\frac{a\log^3 d}{d}\right)}\, .
\]
Similarly, by~\eqref{eq:union2}, Lemma~\ref{lem:varphi}, and the above algorithm for generating $\mathcal{G}'(F,a)$, we may generate $\mathcal{G'}(v,a)$ in time $2^{O\left(\frac{a\log^3 d}{d}\right)}\cdot \operatorname{poly}(n)$.
\end{proof}

We will need a covering result originally due to Lov\'{a}sz~\cite{LOVASZ75} and Stein~\cite{STEIN74}.
\begin{theorem}
\label{thm:cover}
Let $H$ be a bipartite graph on vertex sets $P$ and $Q$ where the degree of each vertex in $P$ is at least $a$ and the degree of each vertex in $Q$ is at most $b$. Then there is subset $Q' \subset Q$ of size at most $\frac{|Q|}{a}(1 + \ln b)$ such that $P \subseteq N(Q')$.
\end{theorem}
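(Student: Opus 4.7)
The plan is to prove this via the standard greedy covering argument, amplified by the fact that every vertex in $Q$ has bounded degree. I would set up a greedy process that at each step picks a vertex $q \in Q$ covering as many currently uncovered vertices of $P$ as possible, and then analyze how fast the uncovered set shrinks.

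Concretely, initialize $Q' = \emptyset$ and let $P_0 := P$. At step $t$, let $P_t \subseteq P$ be the set of vertices not yet covered by $N(Q')$. Choose $q_t \in Q$ maximizing $|N(q_t) \cap P_t|$ and add it to $Q'$. The key averaging step is that the number of edges between $P_t$ and $Q$ is at least $a|P_t|$ (since each $p \in P$ has degree $\ge a$, hence at least $a$ of its neighbors lie in $Q$, all of which contribute to this edge count). Therefore some $q \in Q$ satisfies $|N(q) \cap P_t| \ge a|P_t|/|Q|$, giving $|P_{t+1}| \le |P_t|(1 - a/|Q|)$. Iterating,
\begin{equation*}
|P_t| \le |P| \cdot \left(1 - \frac{a}{|Q|}\right)^t \le |P| \cdot e^{-ta/|Q|}.
\end{equation*}

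To turn this into the claimed bound, I would use the degree bound on $Q$ to get $|P| \le b|Q|/a$ (double-counting the edges gives $a|P| \le b|Q|$). Running the greedy phase for $t_0 := \lceil (|Q|/a)\ln b \rceil$ steps then yields $|P_{t_0}| \le |P|/b \le |Q|/a$. In a second phase, simply pick one neighbor in $Q$ for each remaining uncovered vertex of $P_{t_0}$, which adds at most $|Q|/a$ further vertices to $Q'$. Summing the two phases gives $|Q'| \le (|Q|/a)(1 + \ln b)$, as required.

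The only subtlety worth double-checking is the averaging inequality: I am implicitly assuming every $p \in P$ has all its $\ge a$ neighbors in $Q$, which follows from $H$ being bipartite on $P \cup Q$. No step here is a serious obstacle — the argument is a clean two-phase greedy reduction, and the presence of the $\ln b$ factor in the final bound comes out naturally from the relation $|P| \le b|Q|/a$ imposed by the maximum-degree constraint on $Q$.
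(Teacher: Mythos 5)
Your greedy/two-phase strategy is the right idea, and the pieces are individually fine: the averaging step giving $|P_{t+1}| \le |P_t|(1 - a/|Q|)$, the double-count $a|P| \le b|Q|$, and the reduction to a ``one vertex per remaining element'' cleanup phase. (The paper itself does not reprove this theorem --- it cites Lov\'asz and Stein --- so there is no in-paper proof to compare against.) However, there is a genuine gap in the final accounting. You run phase one for $t_0 := \lceil (|Q|/a)\ln b\rceil$ steps, contributing $t_0$ vertices to $Q'$, and then bound phase two by $|Q|/a$, concluding $|Q'| \le (|Q|/a)(1+\ln b)$. That conclusion requires $t_0 \le (|Q|/a)\ln b$, but $t_0 = \lceil (|Q|/a)\ln b\rceil$ is \emph{at least} $(|Q|/a)\ln b$, with strict inequality whenever $(|Q|/a)\ln b$ is not an integer. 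So as written you only establish $|Q'| < (|Q|/a)(1+\ln b) + 1$. Replacing the estimate $e^{-t a/|Q|}$ by the sharper $(1 - a/|Q|)^{t}$ does not in general close this: with $m := |Q|/a$, the real-valued function $t \mapsto t + bm\, e^{-t/m}$ attains its minimum value $m(1+\ln b)$ exactly at $t = m\ln b$, so forcing $t$ up to the next integer strictly increases the upper bound.

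The standard way to obtain the exact constant, without rounding loss, is the harmonic-sum charging argument from greedy set cover. When greedy selects a vertex $q$ that covers $k$ new elements of $P$, charge each of those $k$ elements $1/k$; then $|Q'|$ equals the total charge. For any fixed $q' \in Q$, list the elements of $N(q')\cap P$ in the order they become covered; when the $j$-th-from-last such element is covered, greedy could have chosen $q'$ to cover at least $j$ fresh elements, so that element is charged at most $1/j$. Hence the total charge landing on $N(q')\cap P$ is at most $H\bigl(|N(q')\cap P|\bigr) \le H(b) := \sum_{i=1}^{b} 1/i$. Since each $p\in P$ lies in at least $a$ of the sets $N(q)$, summing over $q\in Q$ overcounts every charge at least $a$ times, giving $a\,|Q'| \le |Q|\,H(b)$, i.e.\ $|Q'| \le \frac{|Q|}{a}H(b) \le \frac{|Q|}{a}(1+\ln b)$. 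The harmonic number $H(b)$, which is strictly below $1+\ln b$, is precisely what absorbs the integrality that your ceiling cannot.
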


We record the following corollary of Theorem~\ref{thm:cover} which we will use in the proof of Lemma~\ref{lem:varphi}.
\begin{corollary}\label{cor:cover}
Let $A\subseteq X$ be $2$-linked and let $A\ni v$. Then the following hold:
\begin{enumerate}
\item There exists a $2$-linked subset $A'\subseteq A$ of size at most $2\frac{a}{d} \ln d+ 2\frac{w}{d}$ such that $A'\ni v$ and $N(A')$ is an essential subset for $A$. \label{item1}
\item There exists a $2$-linked subset $A''\subseteq A$ of size at most $2\frac{a}{d} \ln d+ 2\frac{w}{d}+ 2(w-a)$ such that $N(A'')=W$. \label{item2}
\end{enumerate}

\end{corollary}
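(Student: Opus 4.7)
The plan is to apply Theorem~\ref{thm:cover} (Lov\'asz--Stein) to produce a small cover of $W_{d/2}$, taking advantage of the $d/2$ minimum-degree condition in the definition of $W_{d/2}$, and then to augment the cover to satisfy the remaining requirements ($v$-containment, $2$-linkedness, and the second essential-subset condition in part~(1), plus full coverage of $W$ in part~(2)). It is convenient to work under the assumption $A=[A]$, which is the case relevant to the container arguments that follow.

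For part~(1), I apply Theorem~\ref{thm:cover} to the bipartite graph with $P=W_{d/2}$ and $Q=[A]$ and edges inherited from $G$. Each $y\in W_{d/2}$ has at least $d/2$ neighbours in $[A]$ by definition, and each vertex of $[A]$ has at most $d$ neighbours in $W$. This yields a set $A_0\subseteq [A]$ with $N(A_0)\supseteq W_{d/2}$ and
\[
|A_0|\;\leq\;\tfrac{2|[A]|}{d}(1+\ln d)\;=\;\tfrac{2a}{d}(1+\ln d)\;\leq\;\tfrac{2a\ln d}{d}+\tfrac{2w}{d},
\]
where the final inequality uses $a\leq w$. This secures the first essential-subset condition. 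Any $u\in[A]$ with a neighbour in $W_{d/2}$ automatically lies in $N(N(A_0))$, so the only vertices for which the second essential-subset condition may fail belong to $B=\{u\in[A]:N(u)\subseteq W\setminus W_{d/2}\}$, which a double count bounds by $|B|\leq w-a$ (each $u\in B$ contributes $d$ edges to $W\setminus W_{d/2}$, while each $y$ there contributes fewer than $d/2$ edges back to $[A]$). I augment $A_0$ by a small subset of $A$ lying at distance~$2$ from $B$ in~$G$, by $v$, and by bridging vertices along paths in $G^2[A]$ (which is connected since $A$ is $2$-linked), yielding the desired $A'$.

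For part~(2), starting from the $A'$ of part~(1), I enlarge it to $A''$ with $N(A'')=W$. Since $N(A')\supseteq W_{d/2}$, the uncovered portion satisfies $W\setminus N(A')\subseteq W\setminus W_{d/2}$, and the double count
\[
ad\;=\;\sum_{y\in W}d_{[A]}(y)\;\leq\;d\,|W_{d/2}|+\tfrac{d}{2}|W\setminus W_{d/2}|
\]
gives $|W_{d/2}|\geq 2a-w$, hence $|W\setminus W_{d/2}|\leq 2(w-a)$. For each uncovered $y$ I pick one neighbour in $A$ (which exists since $y\in N(A)$), chosen adjacent in $G^2$ to the current $A''$ so that $2$-linkedness is preserved. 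This contributes at most $2(w-a)$ new vertices, producing the bound $2(a/d)\ln d+2w/d+2(w-a)$.

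The main obstacle is the careful accounting in part~(1): the cover step produces a set of size roughly $(2a/d)\ln d$ satisfying the first essential-subset condition, and the augmentation handling the second essential-subset condition, $v$-containment, and $2$-linkedness must fit within the slack provided by the bound $a\leq w$. The key combinatorial inputs are Theorem~\ref{thm:cover} together with the sharp double-counting estimates $|W_{d/2}|\leq 2a$ and $|W\setminus W_{d/2}|\leq 2(w-a)$; the second of these also drives the clean $2(w-a)$ surplus in part~(2).
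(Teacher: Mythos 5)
Your proposal has a genuine gap in part (1). You apply Lov\'asz--Stein to cover $W_{d/2}$, giving a set $A_0$ of size at most $\frac{2a}{d}(1+\ln d)$, and then note that this bound can be absorbed into $\frac{2a\ln d}{d}+\frac{2w}{d}$ using $a\leq w$. But that already consumes the entire stated budget, and you still need to (i) force $v\in A'$, (ii) bridge the $2$-linked components, and most critically (iii) guarantee the second essential-subset condition $N(N(A'))\supseteq[A]$. For (iii) you identify the problematic vertices $B=\{u\in[A]:N(u)\subseteq W\setminus W_{d/2}\}$ and bound $|B|\leq w-a$, but covering $B$ via distance-$2$ vertices costs up to $w-a$ additional elements. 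The target bound $\frac{2a\ln d}{d}+\frac{2w}{d}$ has no $w-a$ term, and $w-a$ can be much larger than $w/d$, so the augmentation does not fit. Your remark that the augmentation ``must fit within the slack provided by the bound $a\leq w$'' flags the issue but does not resolve it: the slack $\frac{2(w-a)}{d}$ is a factor $d$ too small.

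The device you are missing is the paper's choice of $A_0$: a \emph{maximal} subset of $[A]$ containing $v$ with pairwise disjoint neighbourhoods. This set has size at most $w/d$ (disjoint $d$-sets packed into $W$), and by maximality every $u\in[A]$ has $N(u)\cap N(A_0)\neq\emptyset$, i.e.\ $[A]\subseteq N^2(A_0)$. So $A_0$ delivers the second essential-subset condition and $v$-containment for free at cost only $w/d$, and the Lov\'asz--Stein cover $A_1$ (size $\leq\frac{2a}{d}\ln d$) is then used only for $W_{d/2}\subseteq N(A_1)$. Bridging the components of $A_0\cup A_1$ costs at most $w/d$ (each component hits $\geq d$ vertices of $W$), giving the stated total. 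The same maximality of $A_0$ is also what makes part (2) clean: every vertex of $A_3$ (the cover of $W\setminus W_{d/2}$) is automatically at distance $2$ from $A_0$, so $2$-linkedness of $A''=A'\cup A_3$ is immediate rather than something you need to arrange vertex by vertex as your proposal does without justification. Your double-counting bounds $|W\setminus W_{d/2}|\leq 2(w-a)$ and the $\leq 2(w-a)$ surplus in part (2) match the paper, but both parts depend on the $A_0$ trick you did not use.
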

\begin{proof} We begin by proving~\eqref{item1}.
Let $A_0 \subset [A]$ be a maximal subset of vertices containing $v$ with pairwise disjoint neighborhoods. Clearly $|A_0| \leq \frac{w}{d}$ and $N^2(A_0) \supseteq A$. Theorem~\ref{thm:cover} guarantees a subset $A_1 \subseteq A$ of size at most $2\frac{a}{d}\ln d$ such that $W_{d/2} \supseteq N(A_1)$. Suppose $A_0 \cup A_1$ is not $2$-linked, then there are at most $\frac{w}{d}$ $2$-linked components. Indeed, this is true since $N(A_0 \cup A_1)\subseteq W$ and each two linked component covers at least $d$ vertices of $W$. Since $[A]$ is $2$-linked, it follows that one can choose a subset $A_2 \subseteq [A]$ of size at most $\frac{w}{d}$ such that $A':=A_0 \cup A_1 \cup A_2$ is $2$-linked. We note that $|A'|\leq 2\frac{a}{d} \ln d+ 2\frac{w}{d}$. To show that $N(A')$ is an essential subset for $A$
observe that 
\begin{itemize}
\item $N^2(A') \supseteq N^2(A_0) \supseteq [A]$, and 
\item $W_{d/2} \subseteq N(A_1)  \subseteq N(A') $.
\end{itemize}

We now turn to~\eqref{item2}.
Note that each vertex in $W \setminus W_{d/2}$ has at least $d/2$ neighbors in $[A]^c$, and there are at most $d(w-a)$ edges between $W$ and $[A]^c$. It follows that
\[
(d/2) \cdot |W \setminus W_{d/2}| \leq  d(w-a)\
\]
 and so
\[
|W \setminus W_{d/2}| \leq  2(w-a).
\]
Let $A_3 \subseteq A$ be a minimal cover of $W \setminus W_{d/2}$. We have that $|A_3| \leq |W \setminus W_{d/2}| \leq 2(w-a)$, and every vertex of $A_3$ is at a distance $2$ from some vertex in $A_0\subseteq A'$ by the maximality of $A_0$. Thus $A''=A'\cup A_3$ is $2$-linked, $|A''|\leq 2(w-a)+|A'|$ and $N(A'')\supseteq W$,
completing the proof.
\end{proof}

Finally we prove Lemma~\ref{lem:varphi}.
The proof  we present is different and simpler than the one in~\cite{SAPOZHENKO87}, whose proof works for a quantitatively weaker notion of expansion, but in return, asks that no two vertices share many common neighbors.
\begin{proof}[Proof of Lemma~\ref{lem:varphi}]
Let $A\ni v$ be a $2$-linked subset as in the statement of the lemma. By Corollary~\ref{cor:cover}, there exists a $2$-linked subset $A'\subset A$ of size at most $4\frac{w}{d} \ln d$ such that $v\ni A'$ and $N(A')$ is an essential subset for $A$.
In view of this, we let $\mathcal{B}(v,a,w)$ be the set of all $2$-linked subsets of $A$, containing $v$, of size at most $4\frac{w}{d} \ln d$ and let 
\[
\mathcal{C}(v,a,w)=\left\{N(A): A\in \mathcal{B}(v,a,w)\right\}\, .
\]
It remains to upper bound the size of $\mathcal{C}(v,a,w)$ and describe an algorithm that outputs the set. Note that $|\mathcal{B}(v,a,w)|=|\mathcal{C}(v,a,w)|$ and $|\mathcal{B}(v,a,w)|$ is at most the number of trees in $G^2$ containing $v$ as the root with at most $4 \frac{w \log d}{d}$ vertices. Note that the maximum degree in $G^2$ is at most $d(d-1)$. So $\mathcal{B}(v,a,w)$ can be enumerated using the following procedure:
\begin{enumerate}
\item Assume an ordering on the neighbors of every vertex in the graph $G^2$. Let us use $v_i$ to denote the $i$'th neighbor of a vertex $v$. Let us also denote $v_0 = v$.
\item Generate a list $S \in\{0,\ldots, d(d-1)\}^{8 \frac{w \ln d}{d}}$.
\item Consider the set $T_S = \left\{v^{(0)},\ldots,v^{(s)}\right\}$ where $v^{(0)} = v$ and $v^{(i)} = v^{(i-1)}_{S_i}$.
\item If $|T_S| \leq 4\frac{w \ln d}{d}$, then output $T_S$.
\end{enumerate}
Consider any tree in $G^2$ with root $v$ and $s \leq 4\frac{w \ln d}{d}$ nodes. There is at least one choice of the list $S$ that causes the above procedure to output the vertices of this tree, namely, if $(S_1,\ldots,S_{2s})$ is the DFS traverse order and $S_i = 0$ for $i \geq 2s$.

For each list $S$, the procedure takes $\operatorname{poly}(n)$ time, and the number of possible lists is \linebreak $(d(d-1) + 1)^{8\frac{w}{d}\ln d} \leq 2^{\frac{16w \log^2 d}{d}}$. Therefore there is an algorithm running in time \linebreak $2^{\frac{16w \log^2 d}{d}} \cdot \operatorname{poly}(n)$ that outputs the set $\mathcal{C}(v,a,w)$.
\end{proof}

\section{Polymer Models}
\label{secPolymerModels}

In this section we introduce a variant of the polymer models used by Jenssen, Keevash, and Perkins~\cite{jenssen2020algorithms} to obtain approximation algorithms for bipartite expander graphs.  Polymer models originated in statistical physics (e.g.~\cite{gruber1971general,kotecky1986cluster}) as a means to study spin models on lattices.  Recently they were used to design algorithms for spin models at low temperatures~\cite{helmuth2020algorithmic}.

Fix a $d$-regular bipartite graph $G$ with bipartition $(X,Y)$ of size $n$ each.  A \emph{polymer} of $G$ is a $2$-linked, expanding subset of $X$. Recall, from the previous section, that a set $A \subseteq X$ is expanding if 
\begin{align}\label{def:expanding}
|N(A)| - |[A]| \geq (C_1/2)\frac{|N(A)| \log^2 d}{d}
\end{align}
where $C_1$ is the constant from Theorem~\ref{thmExpander}. Let $\mathcal{P}(G)$ denote the set of all polymers of $G$. The \emph{weight} of a polymer $\gamma$ is given by $2^{-|N(\gamma)|} =: w_{\gamma}$. We call two polymers $\gamma$ and $\gamma'$ \emph{compatible} if $\gamma \cup \gamma'$ is not $2$-linked, and \emph{incompatible} otherwise, in which case we write $\gamma \nsim \gamma'$. Let $\Omega(G)$ denote the collection all subsets of mutually compatible polymers (including the empty set). The polymer model partition function is
\begin{equation}
\label{eqn:PPF}
\Xi_{G}^X := \sum_{\Lambda \in \Omega(G)} \prod_{\gamma \in \Lambda} w_{\gamma}\, ,
\end{equation}
and the associated Gibbs distribution $\nu^X_G$ on $\Omega(G)$ defined by
\[
\nu^X_G(\Lambda)=\frac{\prod_{\gamma \in \Lambda}w_{\gamma}}{\Xi_{G}^X } \text{ for } \Lambda\in \Omega(G) \,.
\]

When the weights of the polymer model are small enough one can hope to understand the partition function  and the Gibbs distribution via perturbative techniques, and in particular, the cluster expansion.

For a tuple $\Gamma$ of polymers, the \textit{incompatibility graph}, $H(\Gamma)$, is the graph with vertex set $\Gamma$ and an edge between any two incompatible polymers.  A \textit{cluster} $\Gamma$ is an ordered tuple of polymers so that $H(\Gamma)$ is connected.
Let us use $\mathcal{C}(G)$ to denote  the set of all clusters of $G$. The \emph{cluster expansion} of $\Xi_G^X$ is the formal series expansion

\begin{equation}
\label{eqn:CE}
\ln \Xi_G^X = \sum_{\Gamma \in \mathcal{C}(G)} \phi(H(\Gamma))\prod_{\gamma \in \Gamma} w_{\gamma}\, ,
\end{equation}

where $\phi$ is the Ursell function defined by
\[
\phi(H) =  \sum_{\substack{E' \subseteq E(H) \\ (V(H),E') \text{ connected}  }} (-1)^{|E'|} \,.
\]
Since $\mathcal{C}(G)$ is an infinite set, the series in~\eqref{eqn:CE} is an infinite sum. In our application, we will work with a truncated sum after after establishing a fast enough rate of  convergence. For these convergence bounds, (as in~\cite{helmuth2020algorithmic,jenssen2020algorithms}) we use a special case of the Koteck{\`y}--Preiss condition~\cite{kotecky1986cluster}.
\begin{theorem}[\cite{kotecky1986cluster}]
\label{thm:KPcondition}
Fix functions $f: \mathcal{P}(G) \rightarrow [0,\infty)$ and $g: \mathcal{P}(G) \rightarrow [0,\infty)$. Suppose that for every $\gamma \in \mathcal{P}(G)$, we have
\begin{equation}
\label{eqn:KPcondition}
\sum_{\gamma' \not\sim \gamma}w_{\gamma'}e^{f(\gamma') + g(\gamma')} \leq f(\gamma).
\end{equation}
Then the cluster expansion converges absolutely. Moreover, for every vertex $v$,
\begin{equation}
\sum_{\substack{\Gamma \in \mathcal{C}(G)\\ \Gamma \ni v}}\left|\phi(\Gamma)\prod_{\gamma \in \Gamma}w_{\gamma}\right| e^{g(\Gamma)} \leq 1.
\end{equation}
\end{theorem}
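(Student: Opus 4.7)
The plan is to follow the standard Penrose--tree argument for the Koteck\'y--Preiss theorem. The first step is to invoke the Penrose tree identity, which expresses the Ursell function $\phi(H)$ as an alternating sum over a distinguished family of spanning trees of $H$ and in particular yields the crude bound
\[
|\phi(H(\Gamma))| \;\leq\; \tau(H(\Gamma))\,,
\]
where $\tau(H)$ denotes the number of spanning trees of $H$. Applying this to every cluster and re-indexing the sum converts
\[
S_v \;:=\; \sum_{\substack{\Gamma \in \mathcal{C}(G)\\ \Gamma \ni v}}\bigl|\phi(H(\Gamma))\bigr|\prod_{\gamma \in \Gamma}w_\gamma\, e^{g(\Gamma)}
\]
into a sum over rooted, vertex-labeled trees whose vertices are polymers, whose edges connect incompatible polymers, and whose root is a polymer containing $v$; the factorials arising from the ordered-tuple definition of a cluster are absorbed by the automorphism groups of the rooted trees.

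The second step is an inductive bound on rooted tree sums. For each polymer $\gamma^*$ and depth $n\geq 0$, define
\[
R_n(\gamma^*) \;:=\; \sum_{\tau}\prod_{\gamma \in V(\tau)}w_\gamma\, e^{g(\gamma)}\,,
\]
where $\tau$ ranges over rooted trees of depth at most $n$, rooted at $\gamma^*$, whose parent--child pairs are incompatible. Organizing the sum by the multiset of root children and using the generating-function identity $\sum_{k\geq 0}x^k/k! = e^x$ yields the recursion
\[
R_n(\gamma^*) \;\leq\; w_{\gamma^*}\, e^{g(\gamma^*)}\,\exp\!\Bigl(\sum_{\gamma'\nsim\gamma^*}R_{n-1}(\gamma')\Bigr)\,.
\]
Induction on $n$ using the Koteck\'y--Preiss hypothesis~\eqref{eqn:KPcondition} then gives $R_n(\gamma^*)\leq w_{\gamma^*}\, e^{f(\gamma^*)+g(\gamma^*)}$ uniformly in $n$, which passes to the limit $n\to\infty$ and establishes absolute convergence of the cluster expansion.

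For the final bound, summing over root polymers $\gamma^*$ containing the vertex $v$ and applying~\eqref{eqn:KPcondition} one last time (in the guise that the set of polymers containing $v$ is contained in the set of polymers incompatible with any fixed single-polymer ``self-test'' placed at $v$) gives $S_v \leq 1$, which is the assertion of the theorem. The main obstacle is not analytic---once the tree recursion is in hand, the Koteck\'y--Preiss hypothesis closes the induction immediately---but combinatorial: the Penrose identity and the tree-enumeration bookkeeping must be set up so that the factorials coming from the ordered-tuple definition of clusters cancel exactly against the symmetry factors of the trees, leaving the clean recursion above.
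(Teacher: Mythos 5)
The paper does not actually prove Theorem~\ref{thm:KPcondition}; it cites it to Koteck\'y--Preiss~\cite{kotecky1986cluster} and uses it as a black box. So there is no ``paper proof'' to compare against, and the question is simply whether your sketch is a correct proof of the stated result.

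Your overall route---bound the Ursell function by a count of trees via Penrose's identity, re-index the cluster sum as a sum over polymer-labeled rooted trees, and close an inductive tree recursion using the hypothesis~\eqref{eqn:KPcondition}---is a legitimate and well-known way to prove Koteck\'y--Preiss--type estimates. The middle of your argument is sound: once the recursion
$R_n(\gamma^*) \leq w_{\gamma^*} e^{g(\gamma^*)} \exp\bigl(\sum_{\gamma' \nsim \gamma^*} R_{n-1}(\gamma')\bigr)$
is set up correctly, the induction using~\eqref{eqn:KPcondition} indeed yields $R_n(\gamma^*) \leq w_{\gamma^*} e^{f(\gamma^*)+g(\gamma^*)}$ uniformly in $n$, and this is exactly the engine of the proof.

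The genuine gap is in your final step. After the tree recursion, what you actually get is a bound of the form
$\sum_{\Gamma \ni v} \bigl|\phi(\Gamma)\prod_\gamma w_\gamma\bigr| e^{g(\Gamma)} \leq \sum_{\gamma^* \ni v} w_{\gamma^*} e^{f(\gamma^*) + g(\gamma^*)}$,
and then, by applying~\eqref{eqn:KPcondition} with $\gamma = \gamma_0$ for a fixed polymer $\gamma_0 \ni v$ (every $\gamma^* \ni v$ is incompatible with $\gamma_0$, since they share $v$), you obtain a right-hand side of $f(\gamma_0)$---not $1$. Your ``self-test placed at $v$'' remark does not fix this: the hypothesis~\eqref{eqn:KPcondition} is quantified only over actual polymers $\gamma \in \mathcal{P}(G)$, not over ghost singletons, and it gives the numerical bound $f(\gamma)$ on the right rather than $1$. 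To land on the stated constant $1$ you need an extra normalization (for instance, that $\min_{\gamma_0 \ni v} f(\gamma_0) \leq 1$, or the usual formulation in which the conclusion is $\leq f(\gamma_0)$ and the constant $1$ is then read off in the application because $f$ is small there). As written, your argument proves a correct and useful bound, but not literally the inequality ``$\leq 1$'' without an additional observation; you should either add that observation explicitly or state the conclusion in the $f(\gamma_0)$ form. Separately, the combinatorial bookkeeping you defer---matching the ordered-tuple factorials against the tree symmetry factors---is precisely where such arguments tend to go wrong and should be written out rather than asserted.
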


We will use this to obtain some relevant structural information about the polymer model. For  polymers $\gamma$, define modified polymer weights $\tilde{w}_{\gamma} = 2^{-|N(\gamma)|} \cdot 2^{  \frac{|\gamma|  \log^2 d}{d}}$ and let 
\begin{align}\label{eq:xitilde}
\tilde{\Xi}_{G}^X = \sum_{\Lambda \in \Omega(G)}\prod_{\gamma \in \Lambda}\tilde{w}_{\gamma}
\end{align}
be the modified polymer model partition function. We first show that this polymer model satisfies the Koteck{\`y}-Preiss condition for a suitable choice of functions $f$ and $g$.
\begin{lemma}
\label{lem:expKP}
Consider the modified polymer model where the polymers are all the $2$-linked subsets $A \subseteq X$ that are expanding, and the weight of a polymer $\gamma$ is given by
\[
\tilde{w}_{\gamma} = 2^{-|N(\gamma)|} \cdot 2^{\frac{|\gamma| \log^2 d}{d}}.
\]Let $f(\gamma)  =  \ln 2 \cdot \frac{|\gamma|\log^2 d}{d}$, and $g(\gamma) = 2 \ln 2 \cdot \frac{|N(\gamma)|\log^2 d}{d}$. Then for $d$ sufficiently large,  every polymer $\gamma$ satisfies~(\ref{eqn:KPcondition}).  The conclusion  also holds  for the original polymer model with weights $w_{\gamma}$ and the same choice of functions $f(\cdot), g(\cdot)$.
\end{lemma}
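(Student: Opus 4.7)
The plan is to verify the Kotecký--Preiss inequality~\eqref{eqn:KPcondition} for every polymer $\gamma$ by grouping each incompatible $\gamma'$ according to a ``contact vertex'' $u \in X$ and then invoking Lemma~\ref{lem:exp} to control the inner sum. Since $\gamma, \gamma' \subseteq X$, the condition that $\gamma \cup \gamma'$ is $2$-linked is equivalent to $N(\gamma) \cap N(\gamma') \neq \emptyset$, so every $\gamma' \nsim \gamma$ must contain some $u \in N(N(\gamma)) \cap X$, a set of size at most $d\,|N(\gamma)|$. Thus it suffices to bound $\max_u \sum_{\gamma' \ni u} \tilde w_{\gamma'} e^{f(\gamma') + g(\gamma')}$ and multiply by $d|N(\gamma)|$.

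Setting $w := |N(\gamma')|$, $a := |[\gamma']|$ and using $|\gamma'| \leq a$, the definitions give
\[
\tilde w_{\gamma'} e^{f(\gamma') + g(\gamma')} = 2^{-w + 2|\gamma'|\log^2 d / d + 2 w \log^2 d / d} \leq 2^{-w + 2(a+w) \log^2 d / d}.
\]
Partitioning by $(a, w)$ and applying Lemma~\ref{lem:exp} yields
\[
\sum_{\substack{\gamma' \ni u \\ |[\gamma']| = a,\, |N(\gamma')| = w}} \tilde w_{\gamma'} e^{f + g} \leq 2^{w - c_1(w - a)} \cdot 2^{-w + 2(a+w) \log^2 d / d} = 2^{-c_1(w - a) + 2(a + w) \log^2 d / d}.
\]
The expansion condition $w - a \geq (C_1/2) w \log^2 d /d$ forces $w \log^2 d / d \leq 2(w-a)/C_1$, hence $2(a+w) \log^2 d/d \leq 8(w-a)/C_1$. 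Choosing $C_1 \geq 16/c_1$ makes the exponent at most $-(c_1/2)(w-a)$. Writing $t = w - a$, the constraints $w \geq d$ and $w \leq (2d/(C_1 \log^2 d))\,t$ force $t \geq (C_1/2) \log^2 d$, with at most $O(td/\log^2 d)$ admissible pairs per $t$. The resulting geometric tail is dominated by its smallest term, giving $\sum_{\gamma' \ni u} \tilde w_{\gamma'} e^{f+g} \leq \delta(d)$ where $\delta(d) = O\bigl(d^{\,1 - (c_1 C_1 / 4) \log d}\bigr)$ decays super-polynomially in $d$.

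Combining with the trivial bound $|N(\gamma)| \leq d|\gamma|$, the LHS of~\eqref{eqn:KPcondition} is at most $d^2|\gamma|\,\delta(d)$, which is much smaller than $f(\gamma) = \ln 2 \cdot |\gamma| \log^2 d / d$ once $d$ is sufficiently large. The conclusion for the original model with weights $w_\gamma = 2^{-|N(\gamma)|}$ is immediate because $w_\gamma \leq \tilde w_\gamma$, so the LHS only shrinks while $f$ and $g$ are unchanged. The main technical point is calibrating $C_1$ against $c_1$: the enhancements $e^{f(\gamma') + g(\gamma')}$ together with the $2^w$ enumeration count from Lemma~\ref{lem:exp} inflate each term by roughly $2^{4w\log^2 d / d}$, and this is absorbed by the container savings $2^{-c_1(w-a)}$ precisely because expansion guarantees $w - a \geq (C_1/2) w \log^2 d/d$ with $C_1$ chosen large. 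Everything else is routine bookkeeping, so this single calibration is the only obstacle.
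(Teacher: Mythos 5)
Your proof is correct and follows essentially the same route as the paper's: prove the stronger statement for the modified weights $\tilde w_\gamma$, decompose the sum over $\gamma' \nsim \gamma$ by fixing a starting vertex, invoke Lemma~\ref{lem:exp} to bound $|\mathcal{G}(\cdot,a,w)|$, and use the expansion condition $w-a \geq (C_1/2)\,w\log^2 d/d$ to turn the geometric tail into a quantity decaying super-polynomially in $d$, after which the calibration $C_1 \gtrsim 1/c_1$ closes the argument. The only cosmetic difference is the choice of starting vertex: the paper sums over $v \in \gamma$ and implicitly passes to a nearby vertex of $\gamma'$, while you sum directly over the contact vertex $u \in N(N(\gamma)) \cap X$ (of which there are at most $d\,|N(\gamma)| \leq d^2|\gamma|$); this is a slightly cleaner bookkeeping of the same polynomial-in-$d$ factor, which is swamped by the super-polynomial decay in either version.
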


\begin{proof}[Proof of Lemma~\ref{lem:expKP}]
It suffices to prove the claim for the modified polymer model since  $\tilde{w}_{\gamma}>w_{\gamma}$ for all $\gamma$.

Recall, from the proof of Lemma~\ref{lem:exp} that $c_1 \geq c_3/2 \geq 64/C_1$. We evaluate
\begin{align*}
\sum_{\gamma' \not\sim \gamma}\tilde{w}_{\gamma'}e^{f(\gamma') + g(\gamma')} & \leq \sum_{v \in \gamma} \sum_{\gamma' \not \sim v}\tilde{w}_{\gamma'}e^{f(\gamma') + g(\gamma')}\\
& \leq |\gamma| \cdot \max_{v\in \gamma} \sum_{\gamma' \not\sim v}2^{-|N(\gamma')|}e^{2f(\gamma') + g(\gamma')}\\
& \leq |\gamma| \cdot \max_{v\in \gamma} \sum_{w \geq d} \left(\sum_{t \geq (C_1/2)\frac{w \log^2 d}{d}} |\mathcal{G}(v,w-t,w)| 2^{-w}\cdot e^{4 \ln 2\cdot \frac{w\log^2 d}{d} } \right)\\
& \leq |\gamma| \cdot \max_{v\in \gamma} \sum_{w \geq d}e^{4 \ln 2\cdot \frac{w\log^2 d}{d}}  \left(\sum_{t \geq (C_1/2)\frac{w \log^2 d}{d}} |\mathcal{G}(v,w-t,w)| 2^{-w } \right)\\
& \leq |\gamma| \sum_{w \geq d}e^{4 \ln 2\cdot \frac{w\log^2 d}{d}}  \left(\sum_{t \geq (C_1/2)\frac{w \log^2 d}{d}} 2^{-c_1 \cdot t} \right)\\
& \leq d|\gamma| \sum_{w \geq d}e^{4 \ln 2\cdot \frac{w\log^2 d}{d} }  2^{-8\frac{w\log^2 d}{d}} \\
& \leq d^2|\gamma|2^{-4\log^2 d}\\
& \ll \ln 2\cdot \frac{|\gamma| \log^2 d}{d} = f(\gamma) \, . \qedhere
\end{align*}
\end{proof}

Therefore, Theorem~\ref{thm:KPcondition} gives us that 
\begin{align}
\label{eqn:tailvertex}
\sum_{\substack{\Gamma \in \mathcal{C}(G)\\ \Gamma \ni v }}\left|\phi(\Gamma)\prod_{\gamma \in \Gamma}w_{\gamma}\right| e^{g(\gamma)}\leq 1 \,,
\end{align}
and the same with $\tilde{w}_{\gamma}$ replacing $w_{\gamma}$.

Now define the exponential of the truncated cluster expansion
\begin{equation}
\label{eqn:xihat}
{\Xi}_G^X(\ell) :=  \exp\left(\sum_{\substack{\Gamma \in \mathcal{C}(G) \\ \|\Gamma\| \leq \ell}} \phi(\Gamma)\prod_{\gamma \in \Gamma} w_{\gamma}\right),
\end{equation}
where $\|\Gamma\|:=\sum_{\gamma\in \Gamma}|\gamma|$.
The following lemma bounds the error in approximating  $\Xi_G^X$  by  ${\Xi}_G^X(\ell)$.

\begin{lemma}
\label{lem:xihat}
We have for every $\ell\geq 1$,
\begin{equation}
\label{eqn:APXCE}
\left|\ln\Xi_G^X-  \ln {\Xi}_G^X(\ell) \right| \le  n \cdot 2^{-2\frac{\ell \log^2 d}{d}}\, .
\end{equation}
In particular, if $\ell\geq \frac{d}{2\log^2 d} \log(n/\eps)$, then 
\[
\left|\ln\Xi_G^X-  \ln {\Xi}_G^X(\ell) \right| \le \eps \, .
\]
\end{lemma}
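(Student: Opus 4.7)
The plan is to write the difference as the tail of the cluster expansion (the sum over clusters with $\|\Gamma\|>\ell$) and then apply the tail estimate~\eqref{eqn:tailvertex} guaranteed by Lemma~\ref{lem:expKP}. Since convergence of the cluster expansion justifies unfolding the logarithm, we have immediately
\begin{equation*}
\ln\Xi_G^X - \ln{\Xi}_G^X(\ell) \;=\; \sum_{\substack{\Gamma \in \mathcal{C}(G)\\ \|\Gamma\|>\ell}} \phi(\Gamma)\prod_{\gamma\in\Gamma} w_\gamma,
\end{equation*}
and the triangle inequality reduces the task to bounding $\sum_{\|\Gamma\|>\ell}\bigl|\phi(\Gamma)\prod w_\gamma\bigr|$.

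The key observation I would exploit is that for every (expanding) polymer $\gamma$ one has $|N(\gamma)|\geq |\gamma|$: this is immediate from the expanding condition~\eqref{def:expanding}, which gives $|N(\gamma)|\geq |[\gamma]|\geq |\gamma|$. Plugging this into the definition of $g$ from Lemma~\ref{lem:expKP},
\begin{equation*}
g(\Gamma) \;=\; 2\ln 2\cdot\sum_{\gamma\in\Gamma}\frac{|N(\gamma)|\log^2 d}{d}\;\geq\; 2\ln 2\cdot\frac{\|\Gamma\|\log^2 d}{d},
\end{equation*}
so $e^{-g(\Gamma)}\leq 2^{-2\|\Gamma\|\log^2 d/d}< 2^{-2\ell\log^2 d/d}$ whenever $\|\Gamma\|>\ell$. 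This is the one place where the argument uses actual structure, and it is exactly where the specific constants chosen for $g$ in Lemma~\ref{lem:expKP} pay off.

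With this in hand, the remainder is bookkeeping. Multiplying and dividing by $e^{g(\Gamma)}$, pulling out the uniform factor $2^{-2\ell\log^2 d/d}$, and then replacing the tail sum by a double sum over a root vertex $v\in X$ yields
\begin{align*}
\Bigl|\ln\Xi_G^X - \ln{\Xi}_G^X(\ell)\Bigr|
&\leq 2^{-2\ell\log^2 d/d}\sum_{\Gamma \in \mathcal{C}(G)}\Bigl|\phi(\Gamma)\prod_{\gamma\in\Gamma} w_\gamma\Bigr|e^{g(\Gamma)}\\
&\leq 2^{-2\ell\log^2 d/d}\sum_{v \in X}\sum_{\Gamma \ni v}\Bigl|\phi(\Gamma)\prod_{\gamma\in\Gamma} w_\gamma\Bigr|e^{g(\Gamma)}\\
&\leq n\cdot 2^{-2\ell\log^2 d/d},
\end{align*}
where the first inequality in the second line uses that every cluster $\Gamma$ is rooted at some vertex of $X$, and the final bound applies~\eqref{eqn:tailvertex} at each of the $n$ vertices of $X$. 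The ``in particular'' statement then follows by substituting $\ell\geq\frac{d}{2\log^2 d}\log(n/\epsilon)$, which makes $n\cdot 2^{-2\ell\log^2 d/d}\leq\epsilon$.

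I do not expect any real obstacle; this is a standard truncation estimate whose only genuine input beyond the Kotecký--Preiss machinery is the inequality $|N(\gamma)|\geq|\gamma|$ that converts the exponent $g$ in the tail bound into decay in the cluster size $\|\Gamma\|$ used to define the truncation.
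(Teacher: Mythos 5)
Your proof is correct and follows essentially the same approach as the paper: express the difference as the cluster-expansion tail, use $|N(\gamma)|\geq|\gamma|$ to deduce $g(\Gamma)\geq 2\ln 2\,\|\Gamma\|\log^2 d/d$, and then apply the Koteck\'y--Preiss tail bound~\eqref{eqn:tailvertex} with a union bound over the $n$ vertices of $X$. The only cosmetic difference is that you factor out $e^{-g(\Gamma)}$ before the union bound over vertices while the paper does it after, and you justify $|N(\gamma)|\geq|\gamma|$ via the expansion condition whereas it already holds for any subset of a $d$-regular bipartite graph.
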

\begin{proof}
First recall that for a cluster $\Gamma$, 
\[g(\Gamma)=\sum_{\gamma\in \Gamma}g(\gamma)=2 \ln 2\frac{\log^2 d}{d}\sum_{\gamma\in \Gamma}|N(\gamma)| \geq 2 \ln 2\frac{\log^2 d}{d}\|\Gamma\|.
\]
It follows from \eqref{eqn:tailvertex} that 

\begin{align*}
\left|\ln\Xi_G^X-  \ln {\Xi}_G^X(\ell) \right|
&=
 \left|\sum_{\substack{\Gamma \in \mathcal{C}(G) \\  \|\Gamma\|>\ell}}\phi(\Gamma)\prod_{\gamma \in \Gamma}w_{\gamma}\right|\\
 &\leq 
 \sum_v \sum_{\substack{\Gamma \in \mathcal{C}(G)\\ \Gamma \ni v\\ \|\Gamma\| > 
 \ell}}\left|\phi(\Gamma)\prod_{\gamma \in \Gamma}w_{\gamma}\right|\\
 &\leq
 ne^{-2 \ln 2\frac{\ell\log^2 d}{d}} \\
 &= n \cdot 2^{-2 \frac{\ell \log^2 d}{d}}\, . \qedhere
\end{align*}
\end{proof}

Let $\|\mathbf{\Lambda}\| = \sum_{\gamma 
\in \mathbf{\Lambda}}|\gamma|$ for $\mathbf{\Lambda} \sim \nu_G^X$. We have the following large deviation result for $\|\mathbf{\Lambda}\|$, following~\cite[Lemma 16]{jenssen2020independent}.
\begin{lemma}
\label{lem:tailbound}
For any $\delta \in (0,1)$, there is a $d_0 = d_0(\delta)$ such for $d \geq d_0$, we have
\[
\mathbb{P}(\|\mathbf{\Lambda}\| \geq \delta n) \leq  2^{-\left(\frac{\delta n \log^2 d}{2d}\right)}.
\]
\end{lemma}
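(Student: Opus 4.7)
The plan is to combine a standard exponential Markov bound with a direct estimate on $\tilde\Xi_G^X$ obtained from the container bound of Lemma~\ref{lem:exp}. By the definition of the modified weights $\tilde w_\gamma = w_\gamma \cdot 2^{|\gamma|\log^2 d/d}$, the exponential moment of $\|\mathbf{\Lambda}\|$ under $\nu_G^X$ is exactly $\tilde\Xi_G^X/\Xi_G^X$. Applying Markov's inequality with $\lambda = \log^2 d/d$ gives
\[
\mathbb{P}(\|\mathbf{\Lambda}\| \geq \delta n) \;\leq\; \frac{\tilde\Xi_G^X}{\Xi_G^X}\cdot 2^{-\delta n \log^2 d/d}.
\]
Since $\Xi_G^X \geq 1$ (the empty configuration alone contributes $1$), it suffices to prove $\log_2 \tilde\Xi_G^X \leq \delta n \log^2 d/(2d)$ for $d \geq d_0(\delta)$.

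To control $\tilde\Xi_G^X$ I would relax the mutual compatibility constraint to obtain
\[
\tilde\Xi_G^X \;\leq\; \prod_{\gamma \in \mathcal{P}(G)}(1+\tilde w_\gamma) \;\leq\; \exp\!\left(\sum_{\gamma\in\mathcal{P}(G)}\tilde w_\gamma\right),
\]
reducing matters to a bound on $\sum_\gamma \tilde w_\gamma$. For an expanding polymer the inequality $|\gamma|\leq|[\gamma]|\leq|N(\gamma)|$ yields $\tilde w_\gamma \leq 2^{-|N(\gamma)|(1-\log^2 d/d)}$. Grouping polymers by an anchor vertex $v\in\gamma$ and by the parameters $a=|[\gamma]|$, $w=|N(\gamma)|$, and applying Lemma~\ref{lem:exp} together with the expanding condition $w-a\geq (C_1/2)w\log^2 d/d$, I get
\[
\sum_\gamma \tilde w_\gamma \;\leq\; n\sum_{w\geq d}\sum_{t\geq (C_1/2) w \log^2 d/d} 2^{w\log^2 d/d - c_1 t} \;\leq\; n\cdot 2^{-\Omega(\log^2 d)},
\]
provided $C_1$ is chosen so that $c_1 C_1/2 > 1$ (which already holds under the choice made in the proof of Lemma~\ref{lem:exp}). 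For $d\geq d_0(\delta)$ the right-hand side is at most $(\ln 2)\delta n\log^2 d/(2d)$, giving $\log_2\tilde\Xi_G^X \leq \delta n\log^2 d/(2d)$ and completing the argument.

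The main technical point is the super-polynomial decay $\sum_\gamma\tilde w_\gamma \leq n\cdot 2^{-\Omega(\log^2 d)}$. The expansion condition built into the definition of a polymer is precisely what provides the slack needed to absorb the tilt $2^{|\gamma|\log^2 d/d}$ in the modified weight while still leaving exponential-in-$\log^2 d$ decay in $|N(\gamma)|$; without it, the linear-in-$n$ part of $\log\tilde\Xi_G^X$ would swamp the target and Markov's inequality would be useless.
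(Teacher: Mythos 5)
Your proof is correct, and the Markov-inequality setup is identical to the paper's: both identify $\mathbf{E}e^{\zeta\|\mathbf{\Lambda}\|}=\tilde\Xi_G^X/\Xi_G^X$ with $\zeta=\ln 2\cdot\frac{\log^2 d}{d}$ and reduce to showing $\ln\tilde\Xi_G^X$ is small. Where you diverge is in bounding $\ln\tilde\Xi_G^X$. The paper routes this through the cluster expansion: it sums the Koteck{\'y}--Preiss tail bound~\eqref{eqn:tailvertex} (with $\tilde w$ in place of $w$) over all $v$, and uses $g(\gamma)\geq 2\ln 2\cdot\log^2 d$ (from $|N(\gamma)|\geq d$) to get $\ln\tilde\Xi_G^X\leq n\cdot 2^{-2\log^2 d}$ in one line, essentially reusing the work already done in Lemma~\ref{lem:expKP}. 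You instead use the more elementary independence relaxation
\[
\tilde\Xi_G^X\leq\prod_\gamma(1+\tilde w_\gamma)\leq\exp\Bigl(\sum_\gamma\tilde w_\gamma\Bigr),
\]
and bound $\sum_\gamma\tilde w_\gamma$ directly by anchoring at a vertex and invoking the container bound of Lemma~\ref{lem:exp}. This is a genuine simplification in the sense that it bypasses the cluster-expansion machinery entirely for this particular lemma, and the computation it requires is essentially the same container sum that already appears in the proof of Lemma~\ref{lem:expKP} (indeed, your requirement $c_1C_1/2>1$ is comfortably satisfied since $c_1\geq c_3/2$ and $C_1>64/c_3$ give $c_1C_1/2>16$). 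The trade-off is that the paper's version is shorter by reuse; yours is more self-contained but duplicates a calculation. One small presentational point: when you group by an anchor $v\in\gamma$ you overcount each $\gamma$ by a factor $|\gamma|$, which of course only helps the upper bound, but it is worth stating explicitly. Everything else checks out.
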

\begin{proof}
With $\tilde{\Xi}_G^X$ as defined at \eqref{eq:xitilde}, we have that $\ln \tilde{\Xi}_G^X - \ln \Xi_G^X = \ln \mathbf{E}e^{\zeta \cdot\|\mathbf{\Lambda}\|}$ for $\zeta = \frac{ \ln 2 \log^2 d}{d}$.  Summing~(\ref{eqn:tailvertex}) over all $v$, and using the fact that $|N(\gamma)| \geq d$ for every $\gamma$, we get
\begin{equation}
\label{eqn:pfbound}
\ln \tilde{\Xi}_G^X \leq \sum_v\sum_{\substack{\Gamma \in \mathcal{C}(G)\\ \Gamma \ni v }}\left|\phi(\Gamma)\prod_{\gamma \in \Gamma}\tilde{w}_{\gamma}\right| \leq n \cdot 2^{-2 \log^2 d}.
\end{equation}
Therefore, we have 
\begin{align*}
\ln \mathbf{E}e^{\zeta \cdot \|\mathbf{\Lambda}\|} \leq \ln \tilde{\Xi}_G^X \leq n \cdot 2^{-2 \log^2 d} \,,
\end{align*}
and so by Markov's inequality we have
\begin{align*}
\mathbb{P}(\|\mathbf{\Lambda}\| \geq \delta n)  & \leq \exp\left(- \zeta\delta n + n\cdot 2^{-2 \log^2 d}\right) \\
&\leq 2^{- \frac{\delta n \log^2 d}{2d}},
\end{align*}
where the last inequality follows because $ (1/2)\ln 2 \geq   (\delta \log^2 d)^{-1} \cdot d \cdot 2^{-2 \log^2 d}$ for  large enough $d$. 
\end{proof}

\subsection*{Approximate counting and sampling for polymer models}
\label{sec:CEalgorithm}
Here, we will use an algorithm from~\cite{jenssen2020algorithms} to approximate $\Xi_G^X$ and approximately sample from $\nu_{G}^X$. 

\begin{lemma}
\label{lem:CEalgorithm}
Then there is an FPTAS for $\Xi_{G}^X$ that runs in time $(n/\eps)^{O(d)}$. Moreover, for every $\eps>0$ there is a randomized algorithm that runs in time
$(n/\eps)^{O(d)}$ that outputs a configuration  $\Lambda \in \Omega(G)$ with distribution $\nu^X_{\text{alg}}$
so that $\|\nu^X_{\text{alg}}-\nu_{G}^X\|_{TV}<\eps$. 
\end{lemma}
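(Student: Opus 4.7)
The plan is to apply the standard polymer-model counting and sampling framework of \cite{helmuth2020algorithmic, jenssen2020algorithms}. This framework takes as input a polymer model satisfying a Koteck\'y--Preiss condition (together with a tail bound on the truncation error) and, provided bounded-size polymers and clusters can be efficiently enumerated, produces an FPTAS and an approximate sampler. The Koteck\'y--Preiss condition and the required tail bound are already established in Lemmas~\ref{lem:expKP} and~\ref{lem:xihat}, so what remains is to spell out the enumeration step and the sampling reduction.

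Fix $\ell := \lceil (d/(2\log^2 d)) \log(2n/\eps) \rceil$, so that Lemma~\ref{lem:xihat} gives $|\ln \Xi_G^X - \ln \Xi_G^X(\ell)| \leq \eps/2$; in particular $\Xi_G^X(\ell)$ is an $\eps$-relative approximation to $\Xi_G^X$, so it suffices to compute $\Xi_G^X(\ell)$. Each polymer of size $k$ is a connected subset of $G^2$, which has maximum degree at most $d(d-1)$; the tree-encoding argument from the proof of Lemma~\ref{lem:varphi} then enumerates all polymers of size at most $k$ containing a fixed vertex in time $2^{O(k \log d)} \cdot \operatorname{poly}(n)$. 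Iterating this encoding one polymer at a time enumerates all clusters $\Gamma$ with $\|\Gamma\| \leq \ell$ containing a fixed vertex in time $2^{O(\ell \log d)} \cdot \operatorname{poly}(n)$, since the incompatibility graph restricted to polymers of bounded size has bounded degree. For each enumerated cluster we evaluate $\phi(H(\Gamma)) \prod_{\gamma \in \Gamma} w_{\gamma}$ directly in $\operatorname{poly}(\ell)$ time. Summing over root vertices and exponentiating yields $\Xi_G^X(\ell)$ in total time $n \cdot 2^{O(\ell \log d)} \cdot \operatorname{poly}(n) = (n/\eps)^{O(d)}$, which gives the FPTAS.

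For the sampling algorithm we use the polymer-level self-reducibility approach of \cite{jenssen2020algorithms}. Since restricting the ground set $X$ to any subset yields a polymer subfamily for which the Koteck\'y--Preiss condition still holds with the same functions $f, g$, the counting FPTAS above applies uniformly to all such restricted polymer models. One can then sample a configuration $\mathbf{\Lambda} \sim \nu^X_G$ polymer-by-polymer: order the polymers, repeatedly estimate the marginal probability of including the next polymer by running the counting FPTAS twice (with and without that polymer forbidden), and flip a coin accordingly, each time shrinking the ground set. The total variation error is controlled by a union bound over the $\operatorname{poly}(n)$ marginal estimates.

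The main technical point — and the place where care is needed, rather than new ideas — is to show that marginals computed from the truncated cluster expansion remain close in relative error to the true marginals across all intermediate restricted polymer models encountered in the recursion. This reduces to a Koteck\'y--Preiss tail bound essentially identical to Lemma~\ref{lem:xihat}, but applied uniformly to the restricted models; the required uniformity is immediate from the form of the condition verified in Lemma~\ref{lem:expKP}, and the argument then proceeds exactly as in \cite{jenssen2020algorithms}.
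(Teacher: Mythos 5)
Your proposal follows essentially the same approach as the paper: truncate the cluster expansion at $\ell = O\left(\tfrac{d}{\log^2 d}\log(n/\eps)\right)$ using Lemma~\ref{lem:xihat}, enumerate polymers and clusters via tree encodings in $G^2$, evaluate the truncated sum, and sample via the self-reducibility of the abstract polymer model from~\cite{helmuth2020algorithmic, jenssen2020algorithms}. One small slip: the Ursell function $\phi(H(\Gamma))$ of a cluster on up to $\ell$ polymers cannot be evaluated ``directly in $\operatorname{poly}(\ell)$ time'' from its definition as a sum over connected spanning edge subsets; you need the singly-exponential $2^{O(\ell)}\operatorname{poly}(\ell)$ algorithm of~\cite{helmuth2020algorithmic} (deletion--contraction or the spanning-tree recursion), which still fits inside the claimed $(n/\eps)^{O(d)}$ bound.
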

To give a sense of the above algorithms, we briefly describe the FPTAS for $\Xi_{G}^X$. Using Lemma~\ref{lem:xihat}, it is enough to compute $\Xi_{G}^X(\ell)$ (as defined in~\eqref{eqn:xihat}) where $\ell=\left\lceil \frac{d}{2\log^2 d} \log(n/\eps) \right\rceil$.

This may be done by 

\begin{enumerate}
\item listing all clusters $\Gamma$ of size at most $\ell$,
\item computing $\phi(\Gamma)$,
\item computing $\prod_{\gamma \in \Gamma}w_{\gamma}$, and 
\item evaluating $\Xi_{G}^X(\ell)$ by exponentiating the truncated cluster expansion.
\end{enumerate}

To approximately sample from $\nu_{G}^X$ we appeal to the self-reducibility of the abstract polymer model, see~\cite[Theorem 10]{helmuth2020algorithmic}.

\begin{remark}\label{rem:subg} In the proof of Theorem~\ref{thmMain} in Section~\ref{secGenAlgorithm}, we will in fact need to approximate the partition function of various subgraphs of $G$. These subgraphs, $H\subseteq G$, all have the following form: $H$ is the subgraph induced on vertex set $X'\cup Y'$ for some $X'\subseteq X$, $Y'\subseteq Y$ such that $N(X')\subseteq Y'$. For such a subgraph, we define a polymer of $H$ to be an expanding $2$-linked subset of $X'$ and define the partition function $\Xi_H^{X'}$ in the obvious way. We note that the polymers of $H$ are a subset of the polymers of $G$. In particular, since the polymer model on $G$ satisfies the hypothesis of Theorem~\ref{thm:KPcondition}, the polymer model on $H$ also satisfies the hypothesis of Theorem~\ref{thm:KPcondition} with the same functions $f$ and $g$. In particular, Lemmas~\ref{lem:xihat},~\ref{lem:tailbound} and~\ref{lem:CEalgorithm} also apply to the polymer model on $H$. 
\end{remark}

\section{An algorithm for expander graphs: proof of Theorem~\ref{thmExpander}}
\label{secExpander}
In this section, we prove Theorem~\ref{thmExpander}. We assume throughout that  $G$ is a $d$-regular bipartite $\alpha$-expander with $\alpha = C_1 \frac{\log d}{d}$. We let $X, Y$ denote the vertex classes of $G$ and let $n=|X|=|Y|$. We note that by~\eqref{eqn:finalexpdefn}, we have that $|N(A)| - |A|| \geq (C_1/2) \frac{|N(A)|\log^2 d}{d}$ for every $A \subset X$ or $A \subset Y$ such that $|[A]| \leq n/2$, i.e. each such set $A$ is expanding.

First we show that  $i(G)$ can be  approximated well by a linear combination of the polymer model partition functions $\Xi_G^X$ and $\Xi_G^Y$ (as defined in~\eqref{eqn:PPF}). We may then use the algorithm of Section~\ref{sec:CEalgorithm} to approximate $\Xi_G^X$ and $\Xi_G^Y$. 
Recall that we let $\mathcal I = \cI(G)$ denote the set of independent sets of $G$.
For the sampling algorithms, we show that $\mu_G$ can be approximated by a mixture of probability distributions on $\mathcal I$ derived from the polymer measures $\nu^X_G$, $\nu_G^Y$.
We let $\hat\nu^X_G$ denote the probability distribution on $\mathcal I$ defined as follows:
\begin{enumerate}
\item Sample a collection of compatible polymers $\Lambda$ from the measure $\nu^X_G$.
\item Set $I=J \cup \bigcup_{\gamma\in\Lambda}\gamma$ where $J$ is a uniformly random subset of $Y\backslash \bigcup_{\gamma\in\Lambda}N(\gamma)$.
\end{enumerate} 
We define $\hat\nu^Y_G$ analogously and define the mixture
\[
\hat \mu_G= \frac{\Xi_G^X}{\Xi_G^X+\Xi_G^Y} \hat\nu^X_G + \frac{\Xi_G^Y}{\Xi_G^X+\Xi_G^Y} \hat\nu^Y_G\, .
\]

\begin{lemma}
\label{lem:polymerapprox}
For $n$ sufficiently large,
\begin{equation*}
2^n \cdot \left(\Xi_G^X + \Xi_G^Y\right)
\end{equation*}
is an $\eps$-relative approximation to $i(G)$ where
$\eps= 2^{-\frac{ n \log^2 d}{60d}}.$
Moreover, 
\[
\|\mu_G - \hat \mu_G\|_{TV}\leq 2\eps\, .
\]
\end{lemma}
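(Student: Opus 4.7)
The plan is to interpret $2^n \Xi_G^X$ combinatorially as $|\mathcal{I}_X|$, where
\[
\mathcal{I}_X:=\{I\in\mathcal{I}(G):\text{every }2\text{-linked component of }I\cap X\text{ is expanding}\}.
\]
Expanding the polymer partition function gives $2^n\Xi_G^X=\sum_{\Lambda\in\Omega(G)}2^{n-|N(\Lambda)|}$, and since compatible polymers have disjoint neighborhoods, the map $(\Lambda,J)\mapsto(\bigcup_{\gamma\in\Lambda}\gamma)\cup J$ (for $J\subseteq Y\setminus N(\Lambda)$) is a bijection onto $\mathcal{I}_X$. This also exhibits $\hat\nu_G^X$ as the uniform distribution on $\mathcal{I}_X$; symmetrically, $2^n\Xi_G^Y=|\mathcal{I}_Y|$ and $\hat\nu_G^Y$ is uniform on $\mathcal{I}_Y$.

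The core geometric observation is that $\mathcal{I}_X\cup\mathcal{I}_Y=\mathcal{I}(G)$. Suppose $I$ lies in neither; then $I\cap X$ has a non-expanding $2$-linked component $A$ and $I\cap Y$ has a non-expanding $2$-linked component $B$. By \eqref{eqn:finalexpdefn} and the $\alpha$-expander hypothesis, non-expanding forces $|[A]|,|[B]|>n/2$, and $d$-regularity gives $|N(S)|\ge|[S]|$ for any $S\subseteq X$ or $Y$, so $|N(A)|,|N(B)|>n/2$. Now $[A]\cap N(B)=\emptyset$: any $u\in[A]$ has $N(u)\subseteq N(A)$, so a neighbor in $B$ would lie in $B\cap N(A)\subseteq I\cap N(A)=\emptyset$. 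Hence $[A]\subseteq X\setminus N(B)$ with $|[A]|\le n-|N(B)|<n/2$, contradicting $|[A]|>n/2$. By inclusion--exclusion,
\[
2^n(\Xi_G^X+\Xi_G^Y)=|\mathcal{I}_X|+|\mathcal{I}_Y|=i(G)+|\mathcal{I}_X\cap\mathcal{I}_Y|,
\]
so the counting error is exactly $|\mathcal{I}_X\cap\mathcal{I}_Y|/i(G)$. A direct computation shows that $\hat\mu_G(I)=(1+\mathbf{1}[I\in\mathcal{I}_X\cap\mathcal{I}_Y])/[2^n(\Xi_G^X+\Xi_G^Y)]$ for every $I\in\mathcal{I}(G)$, and consequently $\|\mu_G-\hat\mu_G\|_{TV}\le 2|\mathcal{I}_X\cap\mathcal{I}_Y|/i(G)$.

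It therefore suffices to bound $|\mathcal{I}_X\cap\mathcal{I}_Y|$ by $\eps\cdot i(G)$. I split at a threshold $\delta n$ for $|I\cap Y|$:
\[
|\mathcal{I}_X\cap\mathcal{I}_Y|\le|\mathcal{I}_X|\cdot\mathbb{P}_{\hat\nu_G^X}[|I\cap Y|<\delta n]+|\mathcal{I}_Y|\cdot\mathbb{P}_{\hat\nu_G^Y}[|I\cap Y|\ge\delta n].
\]
The second probability is at most $2^{-\delta n\log^2 d/(2d)}$ immediately from Lemma~\ref{lem:tailbound}, since $|I\cap Y|=\|\Lambda_Y\|$ under $\hat\nu_G^Y$. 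For the first, condition on $\Lambda_X\sim\nu_G^X$: then $I\cap Y$ is uniform on $Y\setminus N(\Lambda_X)$, so $|I\cap Y|\sim\mathrm{Bin}(n-|N(\Lambda_X)|,1/2)$, and a Chernoff estimate yields a $2^{-\Omega(n)}$ bound provided $|N(\Lambda_X)|\le n/2$. The main obstacle is controlling $\mathbb{P}[|N(\Lambda_X)|>n/2]$ to the required precision $2^{-\Omega(n\log^2 d/d)}$: the na\"ive bound $|N(\Lambda_X)|\le d\|\Lambda_X\|$ combined with Lemma~\ref{lem:tailbound} gives only $2^{-\Omega(n\log^2 d/d^2)}$, which loses a factor of $d$ in the exponent. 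To sharpen this I will rerun the Koteck\'y--Preiss verification of Lemma~\ref{lem:expKP} with the tilted weight $w_\gamma\, e^{\eta|N(\gamma)|}$ for $\eta=\Theta(\log^2 d/d)$; the container bound of Lemma~\ref{lem:exp} supplies a slack of $(C_1/2)\log^2 d/d$ in the neighborhood exponent, which comfortably absorbs the tilt when $C_1$ is large. Markov's inequality applied to the resulting tilted partition function then gives $\mathbb{P}[|N(\Lambda_X)|>n/2]\le 2^{-\Omega(n\log^2 d/d)}$, and assembling these bounds with $|\mathcal{I}_X|,|\mathcal{I}_Y|\le i(G)+|\mathcal{I}_X\cap\mathcal{I}_Y|\le 2\,i(G)$ yields the required error $\eps=2^{-n\log^2 d/(60d)}$ for $d$ sufficiently large.
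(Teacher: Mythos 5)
Your overall architecture matches the paper's: interpret $2^n\Xi_G^X$ as $|\mathcal{I}_X|$, prove $\mathcal{I}=\mathcal{I}_X\cup\mathcal{I}_Y$, apply inclusion--exclusion, and reduce everything to bounding $|\mathcal{I}_X\cap\mathcal{I}_Y|$; the total-variation step at the end is also the same. Where you diverge is in the key quantitative bound on $|\mathcal{I}_X\cap\mathcal{I}_Y|$, and your route is noticeably heavier than necessary. You split on $|I\cap Y|\lessgtr\delta n$ alone, which forces you to bound the probability under $\hat\nu_G^X$ that $|I\cap Y|$ is small; conditioning on $\Lambda_X$ makes $|I\cap Y|$ binomial, but then you must control the event $|N(\Lambda_X)|>n/2$. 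You correctly observe that Lemma~\ref{lem:tailbound} (a large-deviation bound on $\|\Lambda\|$, not on $|N(\Lambda)|$) loses a factor of $d$ here, and you propose to fix this by re-running the Koteck\'y--Preiss verification with a tilt $e^{\eta|N(\gamma)|}$, $\eta=\Theta(\log^2 d/d)$. That tilt is indeed absorbable by the slack in Lemma~\ref{lem:exp}, so the plan is sound, but you only sketch it; as written this is the one incomplete step in your argument.

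The paper avoids this machinery entirely with a simpler three-way split: bound $|\mathcal{I}_X\setminus\mathcal{I}_X^\delta|$ and $|\mathcal{I}_Y\setminus\mathcal{I}_Y^\delta|$ by Lemma~\ref{lem:tailbound} directly (these events concern $|I\cap X|$ and $|I\cap Y|$ respectively, i.e.\ $\|\Lambda_X\|$ and $\|\Lambda_Y\|$, so no translation to $|N(\Lambda)|$ is needed), and then observe that every $I\in\mathcal{I}_X^\delta\cap\mathcal{I}_Y^\delta$ has $|I|\le 2\delta n$, so there are at most $\binom{2n}{\le 2\delta n}$ of them --- a crude count that is already $\ll 2^n\cdot 2^{-n\log^2 d/(60d)}\le|\mathcal{I}_X|\cdot 2^{-n\log^2 d/(60d)}$ once $\delta=1/30$. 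That entirely sidesteps the need for a large-deviation bound on the polymer boundary $|N(\Lambda)|$. If you insist on your decomposition you should actually carry out the tilted Koteck\'y--Preiss check and the Markov step; otherwise, adopting the paper's three-term split is both shorter and removes the only speculative part of your argument. Your alternative proof of $\mathcal{I}=\mathcal{I}_X\cup\mathcal{I}_Y$ (showing $[A]\cap N(B)=\emptyset$ for non-expanding components $A,B$) is a correct, slightly more explicit version of the paper's one-line appeal to the independence number being $n$.
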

\begin{proof}
Let us define
\[
\mathcal{I}_X := \{I \in \mathcal{I}~|~\text{every $2$-linked component of $I \cap X$ is expanding}\}\,
\]
i.e.\ the set of all $I$ such that $\nu_G^X(I)>0$. We note that since the independence number of $G$ is $n$, we have that for each $I \in \mathcal{I}$, 
  \[ \min(|[I \cap X]|, |[I \cap Y]|) \leq n/2 \,. \]
   In particular, every component of either $I\cap X$ or $I \cap Y$ is expanding. Therefore $\mathcal{I} = \mathcal{I}_X \cup \mathcal{I}_Y$ and so
   \begin{align}\label{eqiGinc}
   i(G)=|\mathcal{I}_X|+ |\mathcal{I}_Y|-|\mathcal{I}_X \cap \mathcal{I}_Y|.
   \end{align}
Note that
\begin{equation}
\label{eqn:defn}
\mathcal{I}_X = 2^n \cdot \Xi_G^X
\end{equation}
and moreover, the set of $2$-linked components of $I \cap X$ for a uniformly chosen $I \in \mathcal{I}_X$ is distributed exactly accordingly to $\nu_G^X$. It will suffice to bound the size of $\mathcal{I}_X \cap \mathcal{I}_Y$.

Letting $\mathcal{I}^\delta_X=\{I\in \mathcal I_X: |I\cap X|\leq \delta n\}$, for $\delta\in (0,1)$, it follows by Lemma~\ref{lem:tailbound} that
\begin{equation}
\label{eqn:Xlarge}
|  \mathcal{I}_X\backslash\mathcal{I}^\delta_X | \leq |\mathcal{I}_X| \cdot 2^{- \frac{\delta n \log^2 d}{2d}}.
\end{equation}
Defining $\mathcal{I}_Y$, $\mathcal{I}^\delta_Y$, analogously and taking $\delta=1/30$ we have 
\begin{equation}
\label{eqn:Ysmall}
|\mathcal{I}^\delta_X \cap \mathcal{I}^\delta_Y| \leq \binom{2n}{\leq 2\delta n}\leq 2^{n}\cdot 2^{-\frac{ n \log^2 d}{60d}- 1} \leq |\mathcal{I}_X|\cdot 2^{-\frac{n \log^2 d}{60d} - 1}.
\end{equation}
By~\eqref{eqn:defn},~\eqref{eqn:Xlarge} and \eqref{eqn:Ysmall} we conclude that
    \begin{align}\label{eqintbound}
    |\mathcal{I}_X \cap \mathcal{I}_Y|\leq |\mathcal{I}_X\backslash\mathcal{I}_X^\delta| + |\mathcal{I}_Y\backslash\mathcal{I}_Y^\delta| + |\mathcal{I}^\delta_Y\cap\mathcal{I}_X^\delta| \leq 2^n(\Xi_G^X + \Xi_G^Y)2^{\frac{- n \log^2 d}{60d}}\, ,
    \end{align}
    and therefore, by~\eqref{eqiGinc},
\begin{align}\label{eqiapprox}
2^n(\Xi_G^X + \Xi_G^Y)\cdot \left(1 - 2^{\frac{- n \log^2 d}{60d}}\right) \leq i(G) \leq 2^n(\Xi_G^X + \Xi_G^Y).
\end{align}
This completes the proof of the first claim. For the second claim we recall the following formula for the total variation distance between discrete probability measures:
\begin{align}\label{eqTVform}
\|\mu_G-\hat\mu_G\|_{TV}= \sum_{I: \hat\mu_G(I) > \mu_G(I)}\hat\mu_G(I) - \mu_G(I)\, .
\end{align}
We note that for $I\in  \mathcal{I}_X \triangle \mathcal{I}_Y$, 
$\hat\mu_G(I)=2^{-n}(\Xi_G^X + \Xi_G^Y)^{-1}$, whereas for  $I\in  \mathcal{I}_X \cap \mathcal{I}_Y$, 
$\hat\mu_G(I)=2^{1-n}(\Xi_G^X + \Xi_G^Y)^{-1}$. It follows from~\eqref{eqiapprox} that $\hat\mu_G(I) > \mu_G(I)$ only if $I\in  \mathcal{I}_X \cap \mathcal{I}_Y$. By~\eqref{eqintbound} and~\eqref{eqTVform}, we then have
\[
\|\mu_G-\hat\mu_G\|_{TV}\leq 2\cdot 2^{\frac{- n \log^2 d}{60d}}\, . \qedhere
\]
\end{proof}

Now we  prove Theorem~\ref{thmExpander}.
\begin{proof}[Proof of Theorem~\ref{thmExpander}]
Set $\eps_0= 2^{- \frac{ n \log^2 d}{60d}}$.
First suppose $\epsilon \leq 2\eps_0$, then $i(G)$ may be computed exactly and a uniformly random independent set can be sampled by brute-force in time $2^{n + o(n)} = (1/\epsilon)^{O(d/\log^2 d)}$.

Now suppose $\epsilon> 2\eps_0$. By Lemma~\ref{lem:polymerapprox},  it is enough to compute an $(\epsilon/4)$-relative approximation to both $\Xi_G^X$ and $\Xi_G^Y$. By using the algorithm given by Lemma~\ref{lem:CEalgorithm}, this takes time $(n/\epsilon)^{O(d)}$.  

For the approximate  sampling algorithm, we note that by Lemma~\ref{lem:polymerapprox},
it is enough to obtain an $\eps/2$-approximate sample from $\hat\mu_G$. We do this as follows:
we first compute $\eps/8$-relative approximations to $\Xi_G^X$ and $\Xi_G^Y$ by computing $\Xi_G^X(\ell)$ and $\Xi_G^Y(\ell)$ respectively, with $\ell$ chosen as in Lemma~\ref{lem:CEalgorithm}.    We then pick $X$ or $Y$ with respective probabilities $\Xi_G^X(\ell) /(\Xi_G^X(\ell)+\Xi_G^Y(\ell))$ and $\Xi_G^Y(\ell) /(\Xi_G^X(\ell)+\Xi_G^Y(\ell))$, and  then use the polymer sampling algorithm of Lemma~\ref{lem:CEalgorithm} to approximately sample a configuration of compatible polymers $\Lambda$ from $X$ (resp. $Y$), accurate to within total variation distance $\eps/8$.  Given the polymer configuration $\Lambda$ we then independently select each vertex of $Y \setminus N(\Lambda)$ (resp. $X \setminus N(\Lambda)$) with probability $1/2$ and add these to the independent set.  The distribution of the output is then within total variation distance $\eps/2$ of $\hat\mu_G$.   See the sampling algorithm of~\cite{jenssen2020algorithms} for details on the calculation of this bound.
\end{proof}

\section{An algorithm for general regular bipartite graphs: Proof of Theorem~\ref{thmMain}}
\label{secGenAlgorithm}

In this section we prove Theorem~\ref{thmMain}, giving an algorithm that, for any constant $c>0$, returns an $n^{-c}$-relative approximation for the number of independent sets in a general $d$-regular bipartite graph. As in the previous sections we let $G$ denote a $d$-regular graph on $2n$ vertices with vertex classes $X$ and $Y$. The algorithm proceeds by separating the contribution of expanding and non-expanding $2$-linked sets of $X$ to the independent set count. To estimate the the contribution from non-expanding components, we use a simple argument inspired by the container method (see Lemma~\ref{lem:countcover} below). To estimate the contribution from expanding components, we appeal to the algorithm of Lemma~\ref{lem:CEalgorithm}. 

We begin with the  the following lemma which will allow us to group non-expanding components according to their closure.   We say that a set $A \subseteq X$ is \textit{closed} if $A = [A]$.
\begin{lemma}
\label{lem:countcover}
Let $A\subseteq X$ be a $2$-linked, closed, non-expanding set. Then there is a randomized $\operatorname{poly}(n) \cdot{\epsilon^{-2}} \ln(1/\delta) \cdot 2^{O\left(\frac{|A|\log^2 d}{d}\right)}$-time algorithm that outputs an $\eps$-relative approximation to the number of $2$-linked $B \subseteq A$ such that $N(B) = N(A)$ with probability at least $1 - \delta$.
\end{lemma}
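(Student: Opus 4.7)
The plan is to use Monte Carlo estimation of $p := |\mathcal{B}|/2^a$, where $a = |A|$ and $\mathcal{B}$ denotes the set of $2$-linked $B\subseteq A$ with $N(B)=N(A)$. The key enabling step is to exhibit a small ``spine'' $A^*\subseteq A$ with the property that \emph{every} $B$ satisfying $A^*\subseteq B\subseteq A$ automatically lies in $\mathcal{B}$. This will give a lower bound on $p$ sufficient to make the sampling estimator run within the claimed time budget.

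To construct the spine, write $W=N(A)$ and $w=|W|$, and follow the (constructive) proof of Corollary~\ref{cor:cover}\eqref{item2}: a sequence of greedy steps produces in $\operatorname{poly}(n)$ time a $2$-linked set $A^*\subseteq A$ with $N(A^*)=W$ and $|A^*|\leq 2(a/d)\ln d + 2w/d + 2(w-a)$. Since $A$ is closed and non-expanding, $w-a\leq (C_1/2)(w\log^2 d)/d$, which for $d$ sufficiently large forces $w\leq 2a$, and hence $|A^*|=O(a\log^2 d/d)$. Now for any $B$ with $A^*\subseteq B\subseteq A$, we have $W=N(A^*)\subseteq N(B)\subseteq N(A)=W$, so $N(B)=W$. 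For $2$-linkedness: any $v\in B$ has $N(v)\subseteq W=N(A^*)$ by closedness of $A$, so choosing any $u\in N(v)$ yields some $v'\in A^*$ with $u\in N(v')$, making $v$ and $v'$ adjacent in $G^2$. Combined with the $2$-linkedness of $A^*\subseteq B$, this shows $G^2[B]$ is connected.

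The observation yields $|\mathcal{B}|\geq 2^{a-|A^*|}$, so $p\geq 2^{-|A^*|}=2^{-O(a\log^2 d/d)}$. The algorithm therefore draws $N=O(\ln(1/\delta)/(\epsilon^2 p))=O(\ln(1/\delta)\epsilon^{-2})\cdot 2^{O(a\log^2 d/d)}$ uniformly random subsets $B\subseteq A$, checks $2$-linkedness of $G^2[B]$ and the equality $N(B)=W$ in $\operatorname{poly}(n)$ time per sample, and outputs $2^a$ times the empirical frequency. A standard multiplicative Chernoff bound ensures $\epsilon$-relative accuracy with probability at least $1-\delta$, and the total runtime is $\operatorname{poly}(n)\cdot \epsilon^{-2}\ln(1/\delta)\cdot 2^{O(a\log^2 d/d)}$, as claimed. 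The one conceptual step—recognizing that a spine whose neighborhood already equals $N(A)$ forces $2$-linkedness for every superset within $A$—is the heart of the argument; everything else is routine Monte Carlo analysis.
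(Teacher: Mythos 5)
Your proof is correct and takes essentially the same approach as the paper: both invoke Corollary~\ref{cor:cover}\eqref{item2} to exhibit a small set $A^*\subseteq A$ with $N(A^*)=W$ whose supersets within $A$ all lie in $\mathcal{D}$, giving $|\mathcal{D}|\geq 2^{a-O(a\log^2 d/d)}$, and then run a standard Monte Carlo estimator. You are a bit more explicit than the paper about why $w\leq 2a$ (via the non-expanding condition) and about the $G^2$-connectivity argument (via closedness of $A$ forcing $N(v)\subseteq W=N(A^*)$ for every $v\in B$), but this is the same reasoning the paper compresses into one line.
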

\begin{proof}
Let $W=N(A)$, $w=|W|$ and $a=|[A]|$. Let 
\[\mathcal{D} := \{B \subseteq A~|~N(B) = W~\text{and}~B~\text{is $2$-linked}\}\]
 be the set whose size we would like to estimate. By Corollary~\ref{cor:cover} there exists a set $A'\in \mathcal{D}$ of size at most
\[
2\frac{a}{d} \ln d+ 2\frac{w}{d}+ 2(w-a)= O\left(\frac{a\log^2 d}{d}\right)\, .
\]
It follows that 
\[
|\mathcal{D}| \geq 2^{a- O\left(\frac{a \log^2 d}{d}\right)}.
\] 
Indeed every superset $B \supseteq A'$ satisfies $N(B) = W$ and $B$ is $2$-linked. The first property is clear, since $N(B) \supseteq N(A'') = W$. The second property holds because every vertex in $B$ is either in $A'$ or at a distance $2$ from some vertex in $A'$, which is itself $2$-linked. It follows that $|\mathcal{D}|$ can be estimated to relative error $\epsilon$ by sampling 
\[ \frac{1}{\epsilon^2}\ln (1/\delta)\cdot 2^{O\left(\frac{a \log^2 d}{d}\right)} \] 
subsets of $A$.
\end{proof}

\subsection*{The algorithm}

For a closed, $2$-linked subset $A \subseteq X$, let us denote 
\[
\mathcal{D}(A) := \#\{B\subseteq A~|~B~\text{is}~2\text{-linked},~N(B) = N(A)\}.
\]
We now define  an algorithm with inputs a graph $G$ on $n$ vertices and an accuracy parameter $\epsilon>0$ as follows. Let $L:=\lceil \frac{d}{2\log^2 d} \log(2n/\eps) \rceil$. If $d\leq \sqrt{n}$, the algorithm is as follows:
\begin{enumerate}
\item List all vectors $(a_1,\ldots,a_{\ell})$ of positive integers such that $\ell \leq n/d$ and $\sum a_i \leq n$.

\item For each vector $(a_1,\ldots,a_{\ell})$ from Step 1, list all sets $\{A_1,\ldots, A_\ell\}$ such that the $N(A_i)$'s are pairwise disjoint and $A_i \in \mathcal{G}'(v_i,a_i)$ for some $v_i\in X$ for each $i$.

\item For each set $\{A_1,\ldots, A_\ell\}$ from Step 2, compute $\tilde{\mathcal{D}}(A_i)$, which is a $(\epsilon/2n)$-relative approximation to $\mathcal{D}(A_i)$ using Lemma~\ref{lem:countcover}, setting $\delta = (1/3)\cdot 2^{-n^2}$ for all $i$.

\item For each set $A=\{A_1,\ldots, A_\ell\}$ from Step 2, let $Y_A =  Y \setminus N(\cup_{i = 1}^{\ell}A_i)$, $X_A = X \setminus N^2(\cup_{i = 1}^{\ell}A_i)$, $G_A=G[X_A\cup Y_A]$ and compute ${\Xi}^{X_A}_{G_A}(L)$.

\item Output 
\[
\sum_{\ell = 0}^{n/d} \sum_{\substack{A_1,\ldots,A_\ell~\text{compatible}\\ \forall i,~A_i~\text{non-expanding} \\ 2\text{-linked, closed}}}\left(\prod_{i =1}^{\ell}\tilde{\mathcal{D}}(A_i)\right) \cdot \left(2^{|Y| - \sum_{i = 1}^{\ell}N(A_i)}\cdot {\Xi}^{X_A}_{G_A}(L)\right)\, .
\]
\end{enumerate}

If $d>\sqrt{n}$, the algorithm is to run Steps 1-3 above and output
\begin{align}\label{eq:dlargeapprox}
\sum_{\ell = 0}^{n/d} \sum_{\substack{A_1,\ldots,A_\ell~\text{compatible}\\ \forall i,~A_i~\text{non-expanding} \\ 2\text{-linked, closed}}}\left(\prod_{i =1}^{\ell}\tilde{\mathcal{D}}(A_i)\right) \cdot \left(2^{|Y| - \sum_{i = 1}^{\ell}N(A_i)}\right)\, .
\end{align}

\subsection*{Proof of Theorem~\ref{thmMain}}

We first prove the correctness of the algorithm: for any $c>0$ and $\eps=n^{-c}$, the output is an an $\eps$-relative approximation to $i(G)$. As before we let $X, Y$ denote the vertex classes of $G$. Suppose first that $d\leq\sqrt{n}$. We then have 
\begin{align*}
i(G)&=2^{|Y|} \cdot \sum_{t = 0}^{n/d} \sum_{\substack{A_1,\ldots,A_t~\text{compatible} \\ A_i~2\text{-linked}~\forall i}} \left(\prod_{i = 1}^t 2^{-N(A_i)}\right)\\
& = 2^{|Y|} \cdot \sum_{t = 0}^{n/d} \sum_{\ell = 0}^t \sum_{\substack{A_1,\ldots,A_\ell~\text{compatible}\\ \forall i,~A_i~\text{non-expanding}, \\ 2\text{-linked}}} \left( \prod_{i = 1}^{\ell}2^{-N(A_i)}\cdot \sum_{\substack{A_{\ell+1},\ldots,A_t \\ \subseteq X \setminus N^2(\cup_{j = 1}^{\ell}A_j)\\ \text{compatible}\\ \forall i~ A_i~\text{expanding},\\ 2\text{-linked}}}\prod_{i = \ell+1}^t 2^{-N(A_i)} \right)\\
& = 2^{|Y|} \cdot \sum_{t = 0}^{n/d} \sum_{\ell = 0}^t \sum_{\substack{A_1,\ldots,A_\ell~\text{compatible}\\ \forall i,~A_i~\text{non-expanding} \\ 2\text{-linked, closed}}} \sum_{\substack{B_1,\ldots,B_{\ell} \\ \forall i,~B_i \subseteq A_i, \\ N(B_i) = N(A_i), \\ B_i~2\text{-linked}}}\left(\prod_{i = 1}^{\ell}2^{-N(A_i)}\cdot\sum_{\substack{A_{\ell+1},\ldots,A_t \\ \subseteq X \setminus N^2(\cup_{j = 1}^{\ell}A_j)\\ \text{compatible}\\ \forall i~ A_i~\text{expanding},\\ 2\text{-linked}}} \prod_{i = \ell+1}^t2^{-N(A_i)}\right) \\
& =  \sum_{\ell = 1}^{n/d} \sum_{\substack{A_1,\ldots,A_\ell~\text{compatible}\\ \forall i,~A_i~\text{non-expanding} \\ 2\text{-linked, closed}}} \sum_{\substack{B_1,\ldots,B_{\ell} \\ \forall i,~B_i \subseteq A_i, \\ N(B_i) = N(A_i), \\ B_i~2\text{-linked}}}\left(2^{|Y| - \sum_{i = 1}^{\ell}N(A_i)}\cdot \sum_{t = 1}^{n/d - \ell} \sum_{\substack{A_{1}',\ldots,A_t' \\ \subseteq X \setminus N^2(\cup_{j = 1}^{\ell}A_j)\\ \text{compatible}\\ \forall i~ A_i'~\text{expanding},\\ 2\text{-linked}}} \prod_{i = 1}^t2^{-N(A_i')}\right) \\
&= \sum_{\ell = 1}^{n/d} \sum_{\substack{A_1,\ldots,A_\ell~\text{compatible}\\ \forall i,~A_i~\text{non-expanding} \\ 2\text{-linked, closed}}}\left(\prod_{i =1}^{\ell}\mathcal{D}(A_i)\right) \cdot \left(2^{|Y| - \sum_{i = 1}^{\ell}N(A_i)}\cdot {\Xi}^{X_A}_{G_A}\right) \\
& = (1 \pm \epsilon)\sum_{\ell = 1}^{n/d} \sum_{\substack{A_1,\ldots,A_\ell~\text{compatible}\\ \forall i,~A_i~\text{non-expanding} \\ 2\text{-linked, closed}}}\left(\prod_{i =1}^{\ell}\tilde{\mathcal{D}}(A_i)\right) \cdot \left(2^{|Y| - \sum_{i = 1}^{\ell}N(A_i)}\cdot{\Xi}^{X_A}_{G_A}(L)\right).
\end{align*}
For the last equality we used that by Remark~\ref{rem:subg}, we may apply Lemma~\ref{lem:xihat} to $G_A$, and so ${\Xi}^{X_A}_{G_A}(L)$ is an $\eps$-relative approximation to ${\Xi}^{X_A}_{G_A}$. Observe that there are at most $2^{n^2}$ many choices of nonexpanding $A_1,\ldots,A_{\ell}$. So by union bounding over all these choices, the last summation is exactly what the algorithm outputs with probability at least $1 - \delta \cdot 2^{n^2} = 2/3$, we have the required approximation guarantee.
If $d\geq \sqrt{n}$, then we note that since the polymers of $G_A$ are a subset of the the polymers of $G$, we have 
by~\eqref{eqn:pfbound} that $\ln \Xi_{G_A}^{X_A}\leq \ln \Xi_{G}^{X}\leq 2^{-\Omega(\log^2 n)}$. It follows that $1$ is trivially an $\eps$-relative approximation to $\Xi^{X_A}_{G_A}$ (recall that $\eps=n^{-c}$) and so~\eqref{eq:dlargeapprox} is an $\eps$-relative approximation to $i(G)$.

We now show that if $\eps=n^{-c}$, with $c>0$ fixed,   the above algorithm runs in time $2^{O\left(\frac{\log^3 d}{d}n \right)}$. We consider the algorithm step by step. 

\noindent\textbf{Step 1.} For $\ell\leq n/d$, and $k\leq n$ the number of vectors $(a_1,\ldots,a_{\ell})$ of positive integers such that $\sum a_i = k$ (i.e.\ the number of ordered partitions of $k$ with $\ell$ parts) is
\[
\binom{k-1}{\ell-1}\leq\binom{n}{n/d}=2^{O\left(\frac{\log d}{d}n \right)}\, .
\]
Moreover, it is clear that the set of all such partitions can be listed in time $2^{O\left(\frac{\log d}{d}n \right)}$
and so Step 1 takes time $2^{O\left(\frac{\log d}{d}n \right)}$.

\noindent\textbf{Step 2.} Let $(a_1,\ldots,a_{\ell})$ be a vector of positive integers such that $\ell \leq n/d$ and $\sum a_i \leq n$. We first list all tuples vertices $\{v_1, \ldots, v_\ell\}\subset X$ which takes time $\binom{n}{\ell}=2^{O\left(\frac{\log d}{d}n \right)}$. For each $\{v_1, \ldots, v_\ell\}$, we then appeal to Lemma~\ref{lem:nonexp} to output the tuple  $(\mathcal{G}'(v_1,a_1),\ldots,  \mathcal{G}'(v_\ell,a_\ell))$ in time $2^{O\left(\frac{\log^3 d}{d}n \right)}$. We note that by Lemma~\ref{lem:nonexp}  
\[
|\mathcal{G}'(v_1,a_1)\times \ldots \times \mathcal{G}'(v_\ell,a_\ell)|\leq\prod_{i=1}^\ell 2^{O\left(\frac{\log^3 d}{d}a_i \right)}= 2^{O\left(\frac{\log^3 d}{d}n \right)}\, . 
\] 
We may therefore check each element of $\mathcal{G}'(v_1,a_1)\times \ldots \times \mathcal{G}'(v_\ell,a_\ell)$ to see if it satisfies the required conditions and output the desired list in time $2^{O\left(\frac{\log^3 d}{d}n \right)}$.

\noindent\textbf{Step 3.} Given a set $\{A_1,\ldots, A_\ell\}$ from Step 2, we use Lemma~\ref{lem:countcover} to compute an $\eps/n$-relative approximation $\tilde{\mathcal{D}}(A_i)$ to $\mathcal{D}(A_i)$ for all $i$. This takes time
\[
(\eps/n)^{-2}\ln(1/\delta)2^{O\left(\frac{\log^2 d}{d}n \right)}= 2^{O\left(\frac{\log^2 d}{d}n \right)}
\] 
where we recall that $\eps=n^{-C}$ and $\delta = (1/3) \cdot 2^{-n^2}$.

\noindent \textbf{Step 4.}
 If $d<\sqrt{n}$, then given any set $A=\{A_1,\ldots, A_\ell\}$ from Step 2, we may compute ${\Xi}^{X_A}_{G_A}(L)$ in time $(n/\eps)^{O(d)}=2^{O\left(\frac{\log^2 d}{d}n \right)}$ by the algorithm in Section~\ref{sec:CEalgorithm} restricted to polymers of $G_A$. If $d\geq\sqrt{n}$, we skip Step 4. 
 
We conclude that the algorithm takes time $2^{O\left(\frac{\log^3 d}{d}n \right)}$ in total.

\section{Weighted independent sets: proof of Theorem~\ref{thmLamExpand}}
\label{secWeighted}

The proof of Theorem~\ref{thmLamExpand} will follow the same lines as that of Theorem~\ref{thmExpander}, with the main difference being that polymer weights will now be $w_{\gamma} = \frac{ \lam^{|\gamma|}} {(1+\lam)^{|N(\gamma)|}}$, generalizing the $\lam=1$ case of Theorem~\ref{thmExpander}.

We assume throughout this section that  $G$ is a $d$-regular, bipartite $\alpha$-expander with bipartition $X,Y$ of size $n$ each.

Define a polymer model with polymers consisting of the small $2$-linked, subsets of $X$ (resp. $Y$) with two polymers compatible if their union is not $2$-linked (recall that $\gamma \subset X$ is small if $| [ \gamma ] | \le n/2$).  The weight of a polymer $\gamma$ is $w_{\gamma} = \frac{ \lam^{|\gamma|}} {(1+\lam)^{|N(\gamma)|}}$.  Let $\Xi_G^X(\lam)$ be the polymer model partition function and $\nu_{G,\lam}^X$ be the corresponding Gibbs measure on collections of compatible polymers.

Theorem~\ref{thmLamExpand}  follows from the following two lemmas. Lemma~\ref{lemXilamFPTAS} below is the analogue of Lemma~\ref{lem:CEalgorithm} and Lemma~\ref{lemXilamApprox} is the analogue of Lemma~\ref{lem:polymerapprox}.
\begin{lemma}
\label{lemXilamFPTAS}
For every $\alpha >0$, there exists $C_2>0$ so that if $\lam \ge \frac{C_2 \log d}{d^{1/4}}$ then there is an FPTAS to compute $\Xi_G^X(\lam)$ and $\Xi_G^Y(\lam)$ and a polynomial-time sampling scheme for $\nu_{G,\lam}^X$ and $\nu_{G,\lam}^Y$.
\end{lemma}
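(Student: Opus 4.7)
The plan is to adapt the proof of Lemma~\ref{lem:CEalgorithm} to the weighted polymer model with weights $w_\gamma = \lambda^{|\gamma|}/(1+\lambda)^{|N(\gamma)|}$. As in the $\lambda = 1$ case handled by Lemma~\ref{lem:expKP}, everything reduces to verifying a Kotecký--Preiss condition (Theorem~\ref{thm:KPcondition}) for a suitable pair of functions $f, g$. Once this is in place, the truncation-error bound (in the style of Lemma~\ref{lem:xihat}), the polynomial-time evaluation of the truncated cluster expansion, and the self-reducibility-based sampler all carry over from Section~\ref{secPolymerModels} with only cosmetic changes, applied separately to $\Xi_G^X(\lambda)$ and $\Xi_G^Y(\lambda)$.

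For the KP verification I would look for $f, g$ roughly of the form $f(\gamma) \asymp |\gamma|$ and $g(\gamma) \asymp \lambda \cdot |N(\gamma)|$, with constants tuned to the expansion parameter $\alpha$ and to the scale $\lambda \sim \log d/d^{1/4}$. The usual bound $\sum_{\gamma' \not\sim \gamma} (\cdot) \le O(d^2|\gamma|) \cdot \max_v \sum_{\gamma' \ni v}(\cdot)$ reduces matters to a single-vertex sum, which I would estimate by grouping polymers by their closure $B = [\gamma']$: dropping the constraint $[\gamma'] = B$, the inner sum over $\gamma' \subseteq B$ with $v \in \gamma'$ is at most $\lambda(1+\lambda)^{|B|-1}$, so each closure contributes a tail factor $(1+\lambda)^{|B|-|N(B)|} \le (1+\lambda)^{-(\alpha/2)|N(B)|}$ by the expander hypothesis (equation~\eqref{eqn:finalexpdefn}). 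The remaining sum over closures is controlled by the container estimate $|\mathcal{G}(v,a,w)| \le 2^{w - c_1(w-a)}$ of Lemma~\ref{lem:exp}, reducing the KP condition to a geometric series in $w = |N(B)|$ whose common ratio must be strictly less than $1$.

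With $C_2$ chosen large enough in terms of $\alpha$, this ratio is $<1$ for $\lambda \ge C_2 \log d/d^{1/4}$, and Theorem~\ref{thm:KPcondition} then yields $\sum_{\Gamma \ni v} |\phi(\Gamma)| \prod_{\gamma \in \Gamma} w_\gamma\, e^{g(\gamma)} \le 1$. Exponentiating the truncation of the cluster expansion to clusters of total weight $\le \ell$ and applying this tail bound as in the proof of Lemma~\ref{lem:xihat} gives $|\ln \Xi_G^X(\lambda) - \ln \Xi_G^X(\ell;\lambda)| \le n \cdot e^{-c\ell}$ for a constant $c = c(\alpha, d, \lambda)>0$. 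Taking $\ell = O(\log(n/\eps))$ (up to the usual $d/\log^2 d$ factor from the $g$-decay) produces an $\eps$-relative approximation to $\Xi_G^X(\lambda)$, and enumerating the clusters in the truncation by the procedure of \cite[Theorem 2.5]{helmuth2020algorithmic} takes time $(n/\eps)^{O(d)}$. The sampling scheme is then obtained via self-reducibility of abstract polymer models (\cite[Theorem 10]{helmuth2020algorithmic}), and the same argument applied to the $Y$-side gives $\Xi_G^Y(\lambda)$ and $\nu_{G,\lambda}^Y$.

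The main obstacle is the KP verification at the critical scale $\lambda \asymp \log d/d^{1/4}$: here neither the weight decay $(1+\lambda)^{-|N(\gamma)|}$ nor the container count $2^{w - c_1(w-a)}$ dominates on its own, and the exponent $1/4$ must come from carefully balancing these two estimates against the $(ed^2)^{|\gamma'|}$ tree-enumeration bound that is the right bound for $2$-linked sets of very small size (for which the container estimate is weaker). Choosing $f, g$ that simultaneously enforce the KP inequality and supply the exponential decay $g(\Gamma) \ge \Omega(\|\Gamma\|)$ needed by the truncation-error step is the technical heart of the proof.
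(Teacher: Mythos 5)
The high-level structure you propose is correct and matches the paper: verify a Kotecký--Preiss condition for the weighted polymer model, then truncate the cluster expansion with an error bound in the style of Lemma~\ref{lem:xihat}, enumerate small clusters as in \cite[Theorem 2.5]{helmuth2020algorithmic}, and use polymer self-reducibility for sampling. However, the KP verification you sketch does not go through, and the gap is exactly where you locate the ``technical heart.''

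You propose to group polymers $\gamma'$ by their closure $B=[\gamma']$, bound the inner sum by $\lambda(1+\lambda)^{|B|-1}$, extract a factor $(1+\lambda)^{-(w-a)}$ with $w=|N(B)|$, $a=|B|$, and count closures via the \emph{unweighted} container estimate $|\mathcal{G}(v,a,w)|\le 2^{w-c_1(w-a)}$ of Lemma~\ref{lem:exp}. The resulting $(a,w)$-term has base-$2$ exponent $w-(c_1+\log(1+\lambda))(w-a)$, which, using $w-a\geq(\alpha/2)w$, is at most $w\bigl[1-\tfrac{\alpha}{2}(c_1+\log(1+\lambda))\bigr]$. This is a geometric series with ratio less than one only when $\alpha(c_1+\log(1+\lambda))>2$. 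Since $c_1$ is a small absolute constant and $\log(1+\lambda)\ll 1$ at the scale $\lambda\asymp\log d/d^{1/4}$, this fails for every fixed $\alpha$ less than $2/c_1$; but Lemma~\ref{lemXilamFPTAS} must hold for \emph{every} $\alpha>0$. The leftover factor $2^w/(1+\lambda)^w$ is exactly what is not cancelled when $\lambda<1$; in the $\lambda=1$ case of Lemma~\ref{lem:expKP} the polymer weight $2^{-w}$ cancels the $2^w$ in Lemma~\ref{lem:exp} exactly, which is why the unweighted container bound suffices there, but not here. Your proposed fallback to a tree-enumeration bound $(ed^2)^{|\gamma'|}$ does not rescue this either: that bound requires $\lambda d^2\lesssim 1$ to sum, whereas $\lambda\asymp\log d/d^{1/4}\gg d^{-2}$.

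What the paper actually uses is a genuinely \emph{weighted} container estimate: the Galvin--Tetali bound (Lemma~\ref{lem:explambda}) controls $\mathcal{W}_\lambda(v,a,w)=\sum_A\lambda^{|A|}(1+\lambda)^{-w}$ directly by $2^{-c_5(w-a)\beta(\lambda)}$, with no residual $2^w$. Here $\beta(\lambda)=\log^2(1+\lambda)/(\log(1+\lambda)+\log(2d^5/\alpha))$, which at $\lambda\asymp\log d/d^{1/4}$ gives $\beta(\lambda)\asymp\log d/\sqrt d$; the KP functions are then taken as $f(\gamma),g(\gamma)\asymp\alpha\beta(\lambda)\cdot|\gamma|$ (resp.\ $|N(\gamma)|$), not the $g\asymp\lambda|N(\gamma)|$ you conjecture (the correct $g$ is smaller by roughly $\lambda/\log d$). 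Importing this weighted container lemma is the missing ingredient; without it the KP inequality cannot be verified at the stated $\lambda$-scale for small $\alpha$.
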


As in Section~\ref{secExpander}, we define a probability measure on independent sets as a mixture of measures derived from the two polymer models.  Define the distribution $\hat\nu^X_{G,\lam}$ on $\mathcal I(G)$ as follows.
\begin{enumerate}
\item Sample a collection of compatible polymers $\Lambda$ from the measure $\nu^X_{G,\lam}$.
\item Set $I=J \cup \bigcup_{\gamma\in\Lambda}\gamma$ where $J$ is a  random subset of $Y\backslash \bigcup_{\gamma\in\Lambda}N(\gamma)$ formed by including each vertex independently with probability $\lam/(1+\lam)$.
\end{enumerate} 
Define $\hat\nu^Y_{G,\lam}$ analogously and define the mixture
\[
\hat \mu_{G,\lam}= \frac{\Xi_G^X(\lam)}{\Xi_G^X(\lam)+\Xi_G^Y(\lam)} \hat\nu^X_{G,\lam} + \frac{\Xi_G^Y(\lam)}{\Xi_G^X(\lam)+\Xi_G^Y(\lam)} \hat\nu^Y_{G,\lam}\, .
\]

\begin{lemma}
\label{lemXilamApprox}
For every $\alpha >0$, there exists $C_2>0$ so that if $\lam \ge \frac{C_2 \log d}{d^{1/4}}$ then for $n$ sufficiently large,
\[ (1+\lam)^n \left ( \Xi_G^X( \lam)+ \Xi_G^Y( \lam) \right )  \]
is an $\epsilon$-relative approximation to $Z_G(\lam)$ 
and 
\[ \| \mu_{G,\lam} - \hat\mu_{G,\lam}  \|_{TV} < \eps \]
where $\eps = \exp ( - \Omega(n))$, with the implicit constant a function of $d$.
\end{lemma}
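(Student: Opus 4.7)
The plan is to parallel Lemma~\ref{lem:polymerapprox} in the weighted setting. Define
\[
\mathcal{I}_X := \{I \in \mathcal{I}(G) : \text{every $2$-linked component of } I\cap X \text{ is small}\},
\]
and $\mathcal{I}_Y$ analogously; set $Z^X_G(\lam) := \sum_{I \in \mathcal{I}_X} \lam^{|I|}$ and $Z^Y_G(\lam)$ analogously. Since the independence number of $G$ is at most $n$, every $I\in\mathcal{I}(G)$ satisfies $\min(|[I\cap X]|,|[I\cap Y]|)\le n/2$, so $\mathcal{I}(G) = \mathcal{I}_X \cup \mathcal{I}_Y$ and by inclusion-exclusion
\[
Z_G(\lam) = Z^X_G(\lam) + Z^Y_G(\lam) - Z^{XY}_G(\lam),
\]
with $Z^{XY}_G(\lam):=\sum_{I \in \mathcal{I}_X\cap \mathcal{I}_Y}\lam^{|I|}$.

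The first step is the identity $Z^X_G(\lam) = (1+\lam)^n\,\Xi^X_G(\lam)$. Each $I\in\mathcal{I}_X$ decomposes uniquely into the collection $\Lambda\in\Omega(G)$ of $2$-linked components of $I\cap X$ together with an arbitrary subset $J=I\cap Y$ of $Y\setminus N(\Lambda)$. The weighted sum over $J$ contributes $(1+\lam)^{n-|N(\Lambda)|}$, so
\[
Z^X_G(\lam) = (1+\lam)^n \sum_{\Lambda\in\Omega(G)} \prod_{\gamma\in\Lambda} \frac{\lam^{|\gamma|}}{(1+\lam)^{|N(\gamma)|}} = (1+\lam)^n\,\Xi^X_G(\lam),
\]
and analogously for $Y$. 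The first claim then reduces to proving $Z^{XY}_G(\lam)\le \eps(1+\lam)^n\bigl(\Xi^X_G(\lam)+\Xi^Y_G(\lam)\bigr)$.

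The second step is to bound $Z^{XY}_G(\lam)$. Following the unweighted argument, for a suitable $\delta=\delta(\alpha,d,\lam)>0$ split
\[
Z^{XY}_G(\lam)\;\le\;\sum_{\substack{I\in\mathcal{I}_X\\ |I\cap X|>\delta n}}\!\lam^{|I|} \;+\; \sum_{\substack{I\in\mathcal{I}_Y\\ |I\cap Y|>\delta n}}\!\lam^{|I|} \;+\; \sum_{|I|\le 2\delta n}\lam^{|I|}.
\]
The last sum is at most $\binom{2n}{\le 2\delta n}\max(1,\lam)^{2\delta n}$, which for $\lam\le 1$ is $\exp(O(\delta n\log(1/\delta)))$ and is therefore exponentially smaller than $(1+\lam)^n\ge e^{\lam n/2}$ whenever $\delta\log(1/\delta)\ll \lam$. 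The first and second sums are handled by a weighted analogue of Lemma~\ref{lem:tailbound}: modifying the polymer weights to $\tilde w_\gamma:=w_\gamma\,e^{\zeta|\gamma|}$ for an appropriate $\zeta>0$ and verifying Koteck\'y-Preiss for the modified model yields $\nu^X_{G,\lam}(\|\Lambda\|\ge \delta n)\le e^{-\Omega(\delta n)}$, after which multiplying by $Z^X_G(\lam)$ controls the first sum, and symmetrically for the second.

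The principal obstacle is verifying Koteck\'y-Preiss for the weighted polymer model in precisely the regime $\lam\ge C_2\log d/d^{1/4}$. In contrast with Section~\ref{secPolymerModels} the polymers here need not be expanding, so the KP inequality must be driven by the global $\alpha$-expansion of $G$: every small $\gamma$ satisfies $|N(\gamma)|\ge(1+\alpha)|\gamma|$ by~\eqref{eqn:finalexpdefn}, hence $w_\gamma\le\bigl(\lam/(1+\lam)^{1+\alpha}\bigr)^{|\gamma|}$. Combined with the standard tree-counting bound of $(ed^2)^k$ on the number of $2$-linked subsets of size $k$ through a fixed vertex in $G^2$, one obtains a convergent KP sum for the exponentially weighted model; the exponent $1/4$ then emerges from balancing the small-set term (which forces $\delta\log(1/\delta)\ll\log(1+\lam)\approx\lam$) against the polymer-model tail (which decays only at the $\delta(\log^2 d)/d$ rate familiar from Lemma~\ref{lem:tailbound}), so that both errors can be driven to $\exp(-\Omega(n))$ with a $d$-dependent constant. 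Once this is established, the total variation bound follows exactly as in the computation after~\eqref{eqTVform}: $\hat\mu_{G,\lam}(I)=\lam^{|I|}/[(1+\lam)^n(\Xi^X_G(\lam)+\Xi^Y_G(\lam))]$ for $I\in\mathcal{I}_X\triangle\mathcal{I}_Y$ and twice that for $I\in\mathcal{I}_X\cap\mathcal{I}_Y$, giving $\|\mu_{G,\lam}-\hat\mu_{G,\lam}\|_{TV}\le 2Z^{XY}_G(\lam)/Z_G(\lam)\le 2\eps$.
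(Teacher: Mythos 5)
Your high-level scaffolding matches the paper: the decomposition $\mathcal{I} = \mathcal{I}_X \cup \mathcal{I}_Y$, the identity $Z^X_G(\lam) = (1+\lam)^n\,\Xi^X_G(\lam)$, the three-way split of $Z^{XY}_G(\lam)$ into large-$X$-side, large-$Y$-side, and both-sides-small contributions, the tilting $\tilde w_\gamma = w_\gamma e^{\zeta|\gamma|}$ to obtain a weighted analogue of Lemma~\ref{lem:tailbound}, and the TV computation at the end are all precisely what the paper does.

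The proof, however, has a fatal gap at the one place where it must: the verification of the Koteck\'y--Preiss condition in the regime $\lam \geq C_2 \log d / d^{1/4}$. You propose to drive KP by the crude pointwise bound $w_\gamma \le \bigl(\lam/(1+\lam)^{1+\alpha}\bigr)^{|\gamma|}$ (via $|N(\gamma)| \ge (1+\alpha)|\gamma|$) together with the tree-count $(ed^2)^k$ on the number of $2$-linked sets of size $k$ through a fixed vertex. This gives a geometric series with ratio $\approx ed^2 \lam/(1+\lam)^{1+\alpha}$. For $\lam = \Theta(\log d / d^{1/4})$ (which is $o(1)$) the denominator is $1+O(\lam) = 1+o(1)$, so the ratio is $\Theta(d^2\lam) = \Theta(d^{7/4}\log d) \gg 1$ and the sum diverges. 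In fact this argument only converges when $\lam \lesssim 1/d^2$ (trivially high-temperature) or $\lam \gtrsim d^{2/\alpha}$ (the regime of~\cite{jenssen2020algorithms}), and misses the entire range the lemma is about. The paper's proof instead imports the Galvin--Tetali container estimate (Lemma~\ref{lem:explambda}), which bounds the weighted count $\mathcal{W}_\lambda(v,a,w)$ of $2$-linked sets with a given boundary size by $2^{-c_5(w-a)\beta(\lambda)}$; crucially the exponent is proportional to the expansion excess $w-a \ge (\alpha/2)w$, not to $|\gamma|$, and it is the hypothesis of that container lemma that produces the threshold $\lam \gtrsim \log d / d^{1/4}$ via the condition $\beta(\lam) \gtrsim \log(d^5/\alpha)/\sqrt{d}$. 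Your stated explanation for where $d^{1/4}$ comes from (balancing $\delta\log(1/\delta) \ll \lam$ against the polymer tail) is therefore not correct; that balance is easily satisfied for any fixed $d$, and the real constraint lives entirely in the container estimate needed for KP.

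Without replacing your naive KP argument by something like Lemma~\ref{lem:explambda}, the weighted tail bound you invoke for the first two sums in $Z^{XY}_G(\lam)$ (the analogue of Lemma~\ref{lem:tailbound}) has no proof, and the result does not follow. The rest of your steps are sound and essentially match the paper once KP is in hand.
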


To prove Lemmas~\ref{lemXilamFPTAS} and~\ref{lemXilamApprox} we  extend the estimates  of Sections~\ref{secContainers} and~\ref{secPolymerModels} to the more general case.  The main graph container lemma comes from the paper of Galvin and Tetali~\cite{GT06}. Let us define 
\[
\mathcal{W}_{\lambda}(v,a,w) := \sum_{\substack{A \subset X \\|[A]| \leq n/2,\\ A~\text{is $2$-linked}\\|N(A)| = w}}\lambda^{|A|}(1+\lambda)^{-w},
\]
and
\[
\beta(\lambda) := \frac{\log^2(1 + \lambda)}{\log(1 + \lambda) + \log(2d^5/\alpha)}.
\]
We use the following lemma from~\cite{GT06}.
\begin{lemma}
\label{lem:explambda}
There are constants $c_4$ and $c_5$ such that the following holds: Let $G$ be a bipartite $\alpha$-expander  and suppose $\lambda > 0$. 
If $\beta(\lambda)$ satisfies
\[
\beta(\lambda) \geq  c_4\max\left\{\frac{\log(d^5/\alpha)}{\sqrt{d}},\frac{2\log^2 d}{\alpha d}\right\}.
\]
Then 
\[
\mathcal{W}_{\lambda}(v,a,w) \leq 2^{-c_5(w - a)\beta(\lambda)}.
\]
\end{lemma}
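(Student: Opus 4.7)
The plan is to follow the weighted container approach of Galvin and Tetali, mirroring the structure of the proof of Lemma~\ref{lem:exp} but now with weights $\lambda^{|A|}(1+\lambda)^{-w}$ in place of cardinality. The first step is to reduce the sum to one indexed by the approximating family $\mathcal{C}(v,a,w)$ supplied by Lemma~\ref{lem:varphi}: one writes
\[
\mathcal{W}_{\lambda}(v,a,w) \leq \sum_{F \in \mathcal{C}(v,a,w)} \; \sum_{\substack{A \ni v,\ 2\text{-linked} \\ |[A]|=a,\; |N(A)|=w \\ F \text{ essential for } A}} \lambda^{|A|}(1+\lambda)^{-w},
\]
and invokes the bound $|\mathcal{C}(v,a,w)| \leq 2^{16 w \log^2 d / d}$.

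The main technical step is a weighted analogue of Lemma~\ref{lem:container}: for each fixed $F$,
\[
\sum_{A} \lambda^{|A|}(1+\lambda)^{-w} \leq 2^{-c(w-a)\beta(\lambda)}
\]
for some absolute constant $c>0$. To see why this should hold, recall from the proof of Lemma~\ref{lem:nonexp} that if $F$ is essential for $A$ then $|W \setminus F| \leq 2(w-a)$ and $W \setminus F \subseteq N^2(F)$, so the "defect" part of $W$ lies in a set of size at most $w d^2$ and $[A]$ is determined from $W$ as the closure. Each vertex of $N(F)$ may or may not lie in $A$, so the weighted sum factors over independent choices of the occupation variables. Using that every vertex of $[A]$ has all $d$ of its neighbors in $W$, and pitting the entropy cost $\log(2d^5/\alpha)$ per defect vertex against the weight saving $\log(1+\lambda)$ per boundary vertex, one sees that every unit of defect $w-a$ contributes an effective decay factor of $2^{-c\beta(\lambda)}$; the function $\beta(\lambda)$ is defined in precisely the way that makes these two competing quantities balance.

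Combining these two ingredients and using the expansion hypothesis (which via~\eqref{eqn:finalexpdefn} gives $w - a \geq (\alpha/2)w$), we obtain
\[
\mathcal{W}_{\lambda}(v,a,w) \leq 2^{16 w \log^2 d / d} \cdot 2^{-c(w-a)\beta(\lambda)} \leq 2^{-(c/2)(w-a)\beta(\lambda)},
\]
where the last inequality absorbs the approximating-set overhead into half of the decay: the hypothesis $\beta(\lambda) \geq c_4 \cdot 2\log^2 d/(\alpha d)$ forces $16 \log^2 d/d \leq (c/2)\alpha\beta(\lambda)$ once $c_4$ is taken sufficiently large, and the factor $(\alpha/2)$ in $w-a \geq (\alpha/2)w$ converts this into the required dominance. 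Setting $c_5 = c/2$ finishes the proof.

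The main obstacle is the weighted per-$F$ bound. In the unweighted setting Lemma~\ref{lem:container} directly packages this step and one merely divides by $2^w$ at the end; here the presence of two competing exponentials ($\lambda^{|A|}$ against $(1+\lambda)^{-w}$) makes the bookkeeping more delicate. The second hypothesis $\beta(\lambda) \geq c_4 \log(d^5/\alpha)/\sqrt d$ is exactly the condition that makes the geometric series in the defect-counting step converge with rate $2^{-\beta(\lambda)}$, which explains why both conditions appear in the statement. The cleanest implementation is to cite the corresponding container lemma of Galvin-Tetali directly after translating between their notation and the one used here, since our hypothesis on $\beta(\lambda)$ is precisely what is required by their estimate.
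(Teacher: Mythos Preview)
The paper does not prove this lemma; it is quoted verbatim from Galvin and Tetali~\cite{GT06} and used as a black box. There is therefore no proof in the paper to compare your proposal against.

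Your outline is a reasonable heuristic for why such a bound should hold, and it correctly identifies that a weighted per-$F$ container estimate is the crux. But as written it is a sketch rather than a proof: the central step---the weighted analogue of Lemma~\ref{lem:container}---is asserted with intuition only, and you yourself conclude that ``the cleanest implementation is to cite the corresponding container lemma of Galvin--Tetali directly.'' That is precisely what the paper does, just without the surrounding discussion. If your aim were a self-contained argument you would need to actually carry out that weighted estimate; in particular, the hypothesis $\beta(\lambda) \ge c_4 \log(d^5/\alpha)/\sqrt{d}$ enters in~\cite{GT06} through a two-stage approximation with a secondary parameter of order~$\sqrt{d}$, which is finer than the single-stage family packaged in Lemma~\ref{lem:varphi}, so the reduction you describe does not by itself account for that condition.
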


The hypothesis of the above lemma says that $\alpha \beta(\lambda) \geq 2c_4 \cdot \frac{\log^2 d}{d}
$. However, in our application, we will also assume that that $\beta(\lambda)$ is also large enough to  ensure
\begin{equation}
\label{eqn:alphabeta}
\alpha \beta(\lambda) \geq \frac{4000}{c_5} \cdot \frac{\log^2 d}{d} \, .
\end{equation}
This may be done by assuming that $\lam \ge \frac{C \log d}{d^{1/4}}$ for a large enough constant $C_2$.

We now prove Lemma~\ref{lemXilamFPTAS}. 
\begin{proof}[Proof of Lemma~\ref{lemXilamFPTAS}]
As in the proof of Lemma~\ref{lem:CEalgorithm}, the FPTAS and polynomial-time sampling scheme will follow from verifying the Koteck{\`y}-Preiss condition for the polymer model.

The  polymer model satisfies~(\ref{eqn:KPcondition}) with  $f(\gamma)  =  c_5\alpha\ln 2 \frac{\beta(\lambda)|\gamma|}{8}$ and $g(\gamma) = c_5\alpha \ln 2 \frac{\beta(\lambda)|N(\gamma)|}{8}$ as shown by the following computation.

\begin{align*}
\sum_{\gamma' \not\sim \gamma}w_{\gamma'}e^{f(\gamma') + g(\gamma')} & \leq \sum_{v \in \gamma} \sum_{\gamma' \not \sim v}w_{\gamma'}e^{f(\gamma') + g(\gamma')}\\
& \leq |\gamma| \cdot \max_{v\in \gamma} \sum_{\gamma' \not\sim v}\lambda^{|\gamma|}(1+\lambda)^{-|N(\gamma')|}e^{f(\gamma') + g(\gamma')}\\
& \leq |\gamma| \cdot \max_{v\in \gamma} \sum_{w \geq d} \left(\sum_{t \geq (\alpha/2)w} \mathcal{W}_{\lambda}(v,w-t,w) \cdot  e^{c_5 \alpha \ln 2 \frac{\beta(\lambda) w}{4} } \right)\\
& \leq |\gamma| \cdot \max_{v\in \gamma} \sum_{w \geq d}e^{c_5 \alpha \ln 2 \frac{\beta(\lambda) w}{4} }  \left(\sum_{t \geq (\alpha/2)w } \mathcal{W}_{\lambda}(v,w-t,w) \right)\\
& \leq |\gamma| \sum_{w \geq d}2^{c_5 \alpha  \frac{\beta(\lambda) w}{4} }  \left(\sum_{t \geq (\alpha/2)w} 2^{-c_5 \beta(\lambda)t} \right)\\
& \leq 2d|\gamma| \sum_{w \geq d}2^{c_5 \alpha \frac{\beta(\lambda) w}{4}  }  2^{-c_5 \alpha \frac{\beta(\lambda) w}{2} } \\
& \leq 4d^2|\gamma|2^{-c_5 \alpha \frac{\beta(\lambda) d}{4} }\\
& \leq 4d^2 |\gamma|2^{-500 \log^2 d} \cdot 2^{-c_5\alpha \frac{\beta(\lambda)d}{8}} \\
& \leq |\gamma| \cdot c_5\alpha\frac{\beta(\lambda)}{16}.
\end{align*}

The last inequality follows because for $d\geq3$, we have $2^{500\log^2d} < 4d^2$ and for any $x,y>0$, we have $2^{-x} \leq \frac{x}{2y}$ if $x \geq 500 \log^2 y$.

Define ${\Xi}_G^X(\ell, \lambda)$ to be the exponential of the truncated cluster expansion as in~\eqref{eqn:xihat}. By the calculation of Lemma~\ref{lem:xihat}, we have for every $\ell\geq 1$,
\begin{equation}
\label{eqn:APXCE}
\left|\ln\Xi_G^X(\lambda)-  \ln {\Xi}_G^X(\ell, \lambda) \right| \le  n \cdot 2^{- 500 \frac{\log^2 d \cdot \ell}{d}}\, .
\end{equation}
In particular, if $\ell\geq \frac{d}{1000\log^2 d} \log(n/\eps)$, then 
\[
\left|\ln\Xi_G^X-  \ln {\Xi}_G^X(\ell,\lambda) \right| \le \eps \, . \qedhere
\]
\end{proof}

We now prove Lemma~\ref{lemXilamApprox}.
\begin{proof}[Proof of Lemma~\ref{lemXilamApprox}]
Consider a modified polymer model with weights  $\tilde{w}_{\gamma}(\lambda) = \lambda^{|\gamma|}(1+\lambda)^{-|N(\gamma)|}2^{c_5\alpha\frac{\beta(\lambda)}{16}}$.
The calculation in the proof of Lemma~\ref{lemXilamFPTAS} shows that the modified polymer model with weights $\tilde{w}_{\gamma}(\lambda)$ satisfies~(\ref{eqn:KPcondition}) with  $f(\gamma)  =  c_5 \alpha\ln 2 \frac{\beta(\lambda)|\gamma|}{16}$ and $g(\gamma) = c_5 \alpha \ln 2 \frac{\beta(\lambda)|N(\gamma)|}{8}$. 

Let $\tilde{\Xi}_G^X(\lambda)$ denote the modified polymer model partition function as in \eqref{eq:xitilde}. We then have  
\[
\ln \tilde{\Xi}_G^X(\lambda) - \ln \Xi_G^X(\lambda) = \ln \mathbf{E}e^{\zeta \cdot\|\mathbf{\Lambda}\|}
\]
 for $\zeta = c_5 \alpha \ln 2\frac{\beta(\lambda)}{16}$.  Summing~(\ref{eqn:tailvertex}) over all $v$, and using the fact that $|N(\gamma)| \geq d$ for every $\gamma$, we get (as in~\eqref{eqn:pfbound})
\begin{equation}
\label{eqn:pfboundlam}
\ln \tilde{\Xi}_G^X \leq \sum_v\sum_{\substack{\Gamma \in \mathcal{C}(G)\\ \Gamma \ni v }}\left|\phi(\Gamma)\prod_{\gamma \in \Gamma}\tilde{w}_{\gamma}\right| \leq n \cdot 2^{-c_5 \alpha \beta(\lambda) \frac{d}{8}}.
\end{equation}
Therefore, we have 
\begin{align*}
\ln \mathbf{E}e^{\zeta \cdot \|\mathbf{\Lambda}\|} \leq \ln \tilde{\Xi}_G^X \leq n \cdot 2^{-c_5 \alpha \beta(\lambda)\frac{d}{8}}.
\end{align*}
Fix any $\delta \geq \frac{10}{\sqrt{d}}$. By Markov's inequality, we have
\begin{align}\label{eqLambound}
\nonumber\mathbb{P}(\|\mathbf{\Lambda}\| \geq \delta n)  & \leq \exp\left(- \zeta \delta n + n\cdot 2^{-c_5 \alpha \beta(\lambda) \frac{d}{8}}\right) \\
\nonumber&\leq 2^{- c_5\alpha\frac{\beta(\lambda)}{32} \cdot \delta n}\\
&\leq 2^{-\frac{100 \log^2 d}{d} \cdot \delta n},
\end{align}
where the penultimate inequality follows because for any $x,y >0$, we have $2^{-x \cdot y} \leq \frac{x}{2\sqrt{y}}$ if $x \geq 500\log^2 y$, and therefore
\[
   2^{-c_5 \alpha \beta(\lambda) \frac{d}{8}} \leq \frac{c_5 \alpha \beta(\lambda)}{16 \sqrt{d}} \leq  \frac{\zeta \delta}{2} .
 \] 
The final inequality~\eqref{eqLambound} follows from~\eqref{eqn:alphabeta}.

As in the proof of Lemma~\ref{lem:polymerapprox}, let $\mathcal{I}_X=\{I\in \mathcal I : \nu_{G,\lam}^X(I)>0\}$ i.e.\,  the set of all $I$ such that each $2$-linked component of $I\cap X$ is small. For $\delta>0$, let 
$\mathcal{I}^\delta_X=\{I\in \mathcal I_X: |I\cap X|\leq \delta n\}$ and define $\mathcal{I}_Y$, $\mathcal{I}^\delta_Y$  similarly. As in Lemma~\ref{lem:polymerapprox}, $\mathcal{I}=\mathcal{I}_X\cup\mathcal{I}_Y$ and so
\begin{align}\label{eqzapprox}
Z_G(\lam)=\sum_{I\in \mathcal{I}_X}\lam^{|I|}+ \sum_{I\in \mathcal{I}_Y}\lam^{|I|}-\sum_{I\in \mathcal{I}_X\cap \mathcal{I}_Y }\lam^{|I|}= (1+\lam)^n(\Xi_G^X(\lam) + \Xi_G^Y(\lam))-\sum_{I\in \mathcal{I}_X\cap \mathcal{I}_Y }\lam^{|I|}\, .
\end{align}

Now, let $I$ be a random independent set chosen from the distribution $\nu_{G,\lam}^X$. 
It follows from~\eqref{eqLambound}  that for $\delta \geq \frac{10}{\sqrt{d}}$,
\begin{equation}
\label{eqn:Xlargelambda}
\mathbb{P}(|I \cap X| > \delta n)= \sum_{I\in \mathcal{I}_X\backslash\mathcal{I}_X^\delta}\frac{\lam^{|I|}}{(1+\lam)^n\Xi_G^X(\lam)} \leq 2^{-\frac{100 \log^2 d}{d} \cdot \delta n}\, .
\end{equation}

Furthermore, we have

\begin{align*}
\sum_{I\in \mathcal{I}^\delta_X\cap\mathcal{I}^\delta_Y} \frac{\lam^{|I|}}{(1+\lam)^n\Xi_G^X(\lam)}& \leq \frac{\sum_{i,j \leq \delta n}\binom{n}{i}\binom{n}{j}\lambda^{i+j}}{(1+\lambda)^n} \\
& = \frac{\left(\sum_{i \leq \delta n}\binom{n}{i}\lambda^i\right)^2}{(1 + \lambda)^n} \\
& = (1+\lambda)^{n}\mathbb{P}\left(\operatorname{Bin}\left(n, \frac{\lambda}{1 + \lambda}\right) \leq \delta n\right)^2
\end{align*}

This quantity can be made to be at most $e^{-\frac{\delta n}{16}}$ for $\delta =  \frac{ \lambda}{100(1 + \lambda)}$. We note that $\delta \geq \frac{C \log^2 d}{d^{1/4}} \geq \frac{10}{\sqrt{d}}$ for a large enough $C$, and so
\begin{align*}
\sum_{I\in \mathcal{I}_X\cap \mathcal{I}_Y }\lam^{|I|} 
&\leq \sum_{I\in \mathcal{I}_X\backslash\mathcal{I}_X^\delta}\lam^{|I|}+  \sum_{I\in \mathcal{I}_Y\backslash\mathcal{I}_Y^\delta}\lam^{|I|} +  \sum_{I\in \mathcal{I}^\delta_X\cap\mathcal{I}^\delta_Y}\lam^{|I|}\\
&\leq (1+\lam)^n(\Xi_G^X(\lam)+\Xi_G^Y(\lam))e^{-\Omega(n)}\, .
\end{align*}

By~\eqref{eqzapprox}, we conclude that

\begin{align*}
\left|\frac{Z_G(\lam)}{(1+\lam)^n(\Xi_G^X(\lam)+\Xi_G^Y(\lam))}-1 \right|=e^{-\Omega(n)}\, .
\end{align*}
The bound on total variation distance follows in the same manner as in the proof of Lemma~\ref{lem:polymerapprox}. \qedhere
\end{proof}

\section*{Acknowledgements}
 WP is supported in part by NSF grant DMS-1847451. AP is supported in part by NSF grant CCF-1934915.

\providecommand{\bysame}{\leavevmode\hbox to3em{\hrulefill}\thinspace}
\providecommand{\MR}{\relax\ifhmode\unskip\space\fi MR }
\providecommand{\MRhref}[2]{%
  \href{http://www.ams.org/mathscinet-getitem?mr=#1}{#2}
}
\providecommand{\href}[2]{#2}


\begin{thebibliography}{10}

\bibitem{arora2015subexponential}
Sanjeev Arora, Boaz Barak, and David Steurer, \emph{Subexponential algorithms
  for unique games and related problems}, Journal of the ACM (JACM) \textbf{62}
  (2015), 1--25.

\bibitem{arora2008unique}
Sanjeev Arora, Subhash~A Khot, Alexandra Kolla, David Steurer, Madhur Tulsiani,
  and Nisheeth~K Vishnoi, \emph{Unique games on expanding constraint graphs are
  easy}, Proceedings of the fortieth annual ACM Symposium on Theory of
  Computing (STOC), 2008, pp.~21--28.

\bibitem{balogh2021independent}
J{\'o}zsef Balogh, Ramon~I Garcia, and Lina Li, \emph{Independent sets in the
  middle two layers of {B}oolean lattice}, Journal of Combinatorial Theory,
  Series A \textbf{178} (2021), 105341.

\bibitem{balogh2015independent}
J{\'o}zsef Balogh, Robert Morris, and Wojciech Samotij, \emph{Independent sets
  in hypergraphs}, Journal of the American Mathematical Society \textbf{28}
  (2015), 669--709.

\bibitem{cannon2020counting}
Sarah Cannon and Will Perkins, \emph{Counting independent sets in unbalanced
  bipartite graphs}, Proceedings of the Fourteenth Annual ACM-SIAM Symposium on
  Discrete Algorithms, SIAM, 2020, pp.~1456--1466.

\bibitem{carlson2020efficient}
Charles Carlson, Ewan Davies, and Alexandra Kolla, \emph{Efficient algorithms
  for the {P}otts model on small-set expanders}, arXiv preprint
  arXiv:2003.01154 (2020).

\bibitem{chen2021fast}
Zongchen Chen, Andreas Galanis, Leslie~A Goldberg, Will Perkins, James Stewart,
  and Eric Vigoda, \emph{Fast algorithms at low temperatures via {M}arkov
  chains}, Random Structures \& Algorithms \textbf{58} (2021), 294--321.

\bibitem{chen2021sampling}
Zongchen Chen, Andreas Galanis, Daniel {\v{S}}tefankovi{\v{c}}, and Eric
  Vigoda, \emph{Sampling colorings and independent sets of random regular
  bipartite graphs in the non-uniqueness region}, arXiv preprint
  arXiv:2105.01784 (2021).

\bibitem{davies2020upper}
Ewan Davies, Matthew Jenssen, and Will Perkins, \emph{A proof of the {U}pper
  {M}atching {C}onjecture for large graphs}, Journal of Combinatorial Theory,
  Series B \textbf{151} (2021), 393--416.

\bibitem{delsarte1973algebraic}
Philippe Delsarte, \emph{An algebraic approach to the association schemes of
  coding theory}, Philips Res. Rep. Suppl. \textbf{10} (1973).

\bibitem{dyer2004relative}
Martin Dyer, Leslie~Ann Goldberg, Catherine Greenhill, and Mark Jerrum,
  \emph{The relative complexity of approximate counting problems}, Algorithmica
  \textbf{38} (2004), 471--500.

\bibitem{friedrich2020polymer}
Tobias Friedrich, Andreas G{\"o}bel, Martin~S Krejca, and Marcus Pappik,
  \emph{Polymer dynamics via cliques: New conditions for approximations}, arXiv
  preprint arXiv:2007.08293 (2020).

\bibitem{galanis2020fast}
Andreas Galanis, Leslie~Ann Goldberg, and James Stewart, \emph{Fast algorithms
  for general spin systems on bipartite expanders}, 45th International
  Symposium on Mathematical Foundations of Computer Science (MFCS 2020),
  Schloss Dagstuhl-Leibniz-Zentrum f{\"u}r Informatik, 2020.

\bibitem{galanis2021fast}
Andreas Galanis, Leslie~Ann Goldberg, and James Stewart, \emph{Fast mixing via
  polymers for random graphs with unbounded degree}, arXiv preprint
  arXiv:2105.00524 (2021).

\bibitem{galanis2016ferromagnetic}
Andreas Galanis, Daniel Stefankovic, Eric Vigoda, and Linji Yang,
  \emph{Ferromagnetic {P}otts model: Refined \#{BIS}-hardness and related
  results}, SIAM Journal on Computing \textbf{45} (2016), 2004--2065.

\bibitem{galvin2007sampling}
David Galvin, \emph{Sampling 3-colourings of regular bipartite graphs},
  Electronic Journal of Probability \textbf{12} (2007), 481--497.

\bibitem{galvin2011threshold}
David Galvin, \emph{A threshold phenomenon for random independent sets in the
  discrete hypercube}, Combinatorics, Probability and Computing \textbf{20}
  (2011), 27--51.

\bibitem{galvin2019independent}
David Galvin, \emph{Independent sets in the discrete hypercube}, arXiv preprint
  arXiv:1901.01991 (2019).

\bibitem{galvin2007torpid}
David Galvin and Dana Randall, \emph{Torpid mixing of local {M}arkov chains on
  3-colorings of the discrete torus}, Proceedings of the eighteenth annual
  ACM-SIAM symposium on Discrete algorithms, 2007, pp.~376--384.

\bibitem{GT06}
David~J. Galvin and Prasad Tetali, \emph{Slow mixing of {G}lauber dynamics for
  the hard-core model on regular bipartite graphs}, Random Struct. Algorithms
  \textbf{28} (2006), 427--443.

\bibitem{gaspers2017faster}
Serge Gaspers and Edward~J Lee, \emph{Faster graph coloring in polynomial
  space}, International Computing and Combinatorics Conference, Springer, 2017,
  pp.~371--383.

\bibitem{godsil2008eigenvalue}
Chris~D Godsil and Mike~W Newman, \emph{Eigenvalue bounds for independent
  sets}, Journal of Combinatorial Theory, Series B \textbf{98} (2008),
  721--734.

\bibitem{goldberg2007complexity}
Leslie~Ann Goldberg and Mark Jerrum, \emph{The complexity of ferromagnetic
  {I}sing with local fields}, Combinatorics, Probability and Computing
  \textbf{16} (2007), 43--61.

\bibitem{goldberg2012approximating}
Leslie~Ann Goldberg and Mark Jerrum, \emph{Approximating the partition function
  of the ferromagnetic {P}otts model}, Journal of the ACM (JACM) \textbf{59}
  (2012), 1--31.

\bibitem{GLR20}
Leslie~Ann Goldberg, John Lapinskas, and David Richerby, \emph{Faster
  exponential-time algorithms for approximately counting independent sets},
  arXiv preprint arXiv:2005.05070 (2020).

\bibitem{gruber1971general}
Christian Gruber and Herv{\'e} Kunz, \emph{General properties of polymer
  systems}, Communications in Mathematical Physics \textbf{22} (1971),
  133--161.

\bibitem{haemers2021hoffman}
Willem~H Haemers, \emph{Hoffman's ratio bound}, Linear Algebra and its
  Applications \textbf{617} (2021), 215--219.

\bibitem{helmuth2020algorithmic}
Tyler Helmuth, Will Perkins, and Guus Regts, \emph{Algorithmic
  {P}irogov--{S}inai theory}, Probability Theory and Related Fields
  \textbf{176} (2020), 851--895.

\bibitem{IK13}
L.~Ilinca and Jeff Kahn, \emph{Counting maximal antichains and independent
  sets}, Order \textbf{30} (2013), 427--435.

\bibitem{jenssen2020homomorphisms}
Matthew Jenssen and Peter Keevash, \emph{Homomorphisms from the torus}, arXiv
  preprint arXiv:2009.08315 (2020).

\bibitem{jenssen2020algorithms}
Matthew Jenssen, Peter Keevash, and Will Perkins, \emph{Algorithms for
  \#{BIS}-hard problems on expander graphs}, SIAM Journal on Computing
  \textbf{49} (2020), 681--710.

\bibitem{jenssen2020independent}
Matthew Jenssen and Will Perkins, \emph{Independent sets in the hypercube
  revisited}, Journal of the London Mathematical Society \textbf{102} (2020),
  645--669.

\bibitem{jenssen2021independent}
Matthew Jenssen, Will Perkins, and Aditya Potukuchi, \emph{Independent sets of
  a given size and structure in the hypercube}, arXiv preprint arXiv:2106.09709
  (2021).

\bibitem{KP19}
Jeff Kahn and Jinyoung Park, \emph{The number of maximal independent sets in
  the {H}amming cube}, arXiv preprint arXiv:1909.04283 (2019).

\bibitem{kleitman1982number}
Daniel~J Kleitman and Kenneth~J Winston, \emph{On the number of graphs without
  4-cycles}, Discrete Mathematics \textbf{41} (1982), 167--172.

\bibitem{kleitman1980asymptotic}
DJ~Kleitman and KJ~Winston, \emph{The asymptotic number of lattices},
  Combinatorical Mathematics, Optimal Designs and their Applications (J.
  Srivastava, ed.), Ann. Discrete Math \textbf{6} (1980), 243--249.

\bibitem{kolla2011spectral}
Alexandra Kolla, \emph{Spectral algorithms for unique games}, Computational
  Complexity \textbf{20} (2011), 177--206.

\bibitem{KS83}
AD~Korshunov and AA~Sapozhenko, \emph{The number of binary codes with distance
  2}, Problemy Kibernet \textbf{40} (1983), 111--130.

\bibitem{kotecky1986cluster}
Roman Koteck{\`y} and David Preiss, \emph{Cluster expansion for abstract
  polymer models}, Communications in Mathematical Physics \textbf{103} (1986),
  491--498.

\bibitem{liao2019counting}
Chao Liao, Jiabao Lin, Pinyan Lu, and Zhenyu Mao, \emph{Counting independent
  sets and colorings on random regular bipartite graphs}, Approximation,
  Randomization, and Combinatorial Optimization. Algorithms and Techniques
  (APPROX/RANDOM 2019), Schloss Dagstuhl-Leibniz-Zentrum fuer Informatik, 2019.

\bibitem{liu2015fptas}
Jingcheng Liu and Pinyan Lu, \emph{{FPTAS} for \#{BIS} with degree bounds on
  one side}, Proceedings of the forty-seventh annual ACM symposium on Theory of
  Computing, 2015, pp.~549--556.

\bibitem{LOVASZ75}
L.~Lov\'{a}sz, \emph{On the ratio of optimal integral and fractional covers},
  Discrete Mathematics \textbf{13} (1975), 383 -- 390.

\bibitem{lovasz1979shannon}
L{\'a}szl{\'o} Lov{\'a}sz, \emph{On the {S}hannon capacity of a graph}, IEEE
  Transactions on Information theory \textbf{25} (1979), 1--7.

\bibitem{makarychev2010play}
Konstantin Makarychev and Yury Makarychev, \emph{How to play unique games on
  expanders}, International Workshop on Approximation and Online Algorithms,
  Springer, 2010, pp.~190--200.

\bibitem{gharan2014partitioning}
Shayan Oveis~Gharan and Luca Trevisan, \emph{Partitioning into expanders},
  Proceedings of the twenty-fifth annual ACM-SIAM Symposium on Discrete
  Algorithms, SIAM, 2014, pp.~1256--1266.

\bibitem{PARK21}
Jinyoung Park, \emph{Note on the number of balanced independent sets in the
  {H}amming cube}, arXiv preprint arXiv:2103.11198 (2021).

\bibitem{provan1983complexity}
J~Scott Provan and Michael~O Ball, \emph{The complexity of counting cuts and of
  computing the probability that a graph is connected}, SIAM Journal on
  Computing \textbf{12} (1983), 777--788.

\bibitem{SAMOTIJ15}
Wojciech Samotij, \emph{Counting independent sets in graphs}, European Journal
  of Combinatorics \textbf{48} (2015), 5--18.

\bibitem{SAPOZHENKO87}
AA~Sapozhenko, \emph{On the number of connected subsets with given cardinality
  of the boundary in bipartite graphs}, Metody Diskret Analiz \textbf{45}
  (1987), 42--70.

\bibitem{sapozhenko2001number}
Aleksandr~Antonovich Sapozhenko, \emph{On the number of independent sets in
  extenders}, Diskretnaya Matematika \textbf{13} (2001), 56--62.

\bibitem{SAPOZHENKO07}
Aleksandr~Antonovich Sapozhenko, \emph{The number of independent sets in
  graphs}, Moscow University Mathematics Bulletin \textbf{62} (2007), 116--118.

\bibitem{saxton2015hypergraph}
David Saxton and Andrew Thomason, \emph{Hypergraph containers}, Inventiones
  Mathematicae \textbf{201} (2015), 925--992.

\bibitem{sly2010computational}
Allan Sly, \emph{Computational transition at the uniqueness threshold}, 2010
  IEEE 51st Annual Symposium on Foundations of Computer Science, IEEE, 2010,
  pp.~287--296.

\bibitem{STEIN74}
S.K Stein, \emph{Two combinatorial covering theorems}, Journal of Combinatorial
  Theory, Series A \textbf{16} (1974), 391 -- 397.

\bibitem{weitz2006counting}
Dror Weitz, \emph{Counting independent sets up to the tree threshold},
  Proceedings of the thirty-eighth annual ACM Symposium on Theory of Computing,
  2006, pp.~140--149.

\end{thebibliography}
\end{document}